\theoremstyle{plain}
\newtheorem{theorem}{Theorem}
\newtheorem*{theorem*}{Theorem}
\newtheorem{lemma}[theorem]{Lemma}
\newtheorem*{lemma*}{Lemma}
\newtheorem{proposition}[theorem]{Proposition}
\newtheorem*{proposition*}{Proposition}
\newtheorem{corollary}[theorem]{Corollary}
\newtheorem*{corollary*}{Corollary}
\theoremstyle{definition}
\newtheorem{definition}{Definition}
\newtheorem*{definition*}{Definition}
\newtheorem{example}{Example}
\newtheorem*{example*}{Example}
\crefname{theorem}{Theorem}{Theorems}
\Crefname{theorem}{Theorem}{Theorems}
\crefname{lemma}{Lemma}{Lemmas}
\Crefname{lemma}{Lemma}{Lemmas}
\crefname{proposition}{Proposition}{Propositions}
\Crefname{Prop}{Proposition}{Propositions}
\crefname{corollary}{Corollary}{Corollaries}
\Crefname{corollary}{Corollary}{Corollaries}
\crefname{definition}{Definition}{Definitions}
\Crefname{definition}{Definition}{Definitions}
\crefname{example}{Example}{Examples}
\Crefname{example}{Example}{Examples}
\crefname{theorem}{Theorem}{Theorems}
\newtheorem*{exercise*}{Exercise}
\crefname{exercise}{exercise}{exercises}
\Crefname{exercise}{Exercise}{Exercises}  
\theoremstyle{remark}
\newtheorem*{remarkx*}{Remark}
\crefname{remark}{Remark}{Remarks}
\Crefname{remark}{Remark}{Remarks}
\newenvironment{remark}
  {\pushQED{\qed}\remarkx}
  {\popQED\endremarkx}
\newenvironment{remark*}
  {\pushQED{\qed}\remarkx*}
  {\popQED\endremarkx*}
\newcommand{\dv}[2]{\frac{\dd #1}{\dd #2}}
\newcommand{\pdv}[2]{\frac{\partial #1}{\partial #2}}
\DeclarePairedDelimiter{\set}{\lbrace}{\rbrace}
\newcommand{\Cont}{\mathcal{C}}
\newcommand{\RR}{\mathbb{R}}
\newcommand{\dd}{\mathrm{d}}
\newcommand{\Reeb}{\mathcal{R}}
\newcommand*{\ann}[1]{{#1}^{\circ}} %
\newcommand*{\contr}[1]{\iota_{#1}}
\newcommand*{\VecFields}{\mathfrak{X}}
\newcommand*{\lieD}[1]{\mathcal{L}_{#1}}
\newcommand*{\orth}[1]{{#1}^{\bot}}
\newcommand*{\orthL}[1]{\prescript{\bot}{}{#1}}
\DeclareMathOperator{\supp}{supp}
\DeclareMathOperator{\pr}{pr}
\DeclarePairedDelimiter{\gen}{\langle}{\rangle}
\DeclarePairedDelimiter{\lieBr}{\lbrack}{\rbrack}
\DeclarePairedDelimiter{\jacBr}{\lbrace}{\rbrace}
\DeclarePairedDelimiter{\NHBr}{\lbrace}{\rbrace_{L,\Delta}}
\title{Contact Hamiltonian systems with nonholonomic constraints}
\author[+*]{Manuel de León}
\author[*]{V\'ictor Manuel Jim\'enez}
\author[*]{Manuel Lainz Valcázar}
\affil[*]{Instituto de Ciencias Matem\'aticas (CSIC-UAM-UC3M-UCM),
C\textbackslash Nicol\'as Cabrera, 13-15, Campus Cantoblanco, UAM
28049 Madrid, Spain} 
\affil[+]{Real Academia de Ciencias Exactas, Fisicas y Naturales, C/de Valverde
22, 28004 Madrid, Spain}
\date{\today}
\begin{document}

\maketitle

\begin{abstract}
    In this article we develop a theory of contact systems with nonholonomic constraints. We obtain the dynamics from Herglotz's variational principle, by restricting the variations so that they satisfy the nonholonomic constraints. We prove that the nonholonomic dynamics can be obtained as a projection of the unconstrained Hamiltonian vector field. Finally, we construct the nonholonomic bracket, which is an almost Jacobi bracket on the space of observables and provides the nonholonomic dynamics.
\end{abstract}

\section{Introduction}
Nonholonomic dynamics refers to those mechanical systems that are subject to constraints on the velocities (these constraints could be linear or non-linear).

In the Lagrangian picture, a nonholonomic mechanical system is given by a Lagrangian function $L:TQ \to \RR$ defined on the tangent bundle $TQ$ of the configuration manifold $Q$, with nonholonomic constraints provided by a submanifold $D$ of $TQ$. We assume that $\tau_Q(D)=Q$, where $\tau_Q: TQ \to Q$ is the canonical projection to guarantee that we are in presence of purely kinematic constraints. $D$ could be a linear submanifold (in this case, $D$ can be alternatively described by a regular distribution $\Delta$ on $Q$), or nonlinear.

Even if nonholonomic mechanics is an old subject \cite{deLeon2019}, it was in the middle of the nineties that received a decisive boost due to the geometric description by several independent teams: Bloch \emph{et al.}~\cite{Bloch1996}, de León \emph{et al.}~\cite{Ibort1996,deLeon1997a,deLeon1996c,deLeon1996d,deLeon1997b,deLeon1992} and Bates and Śniatycki~\cite{Bates1993}, based on the seminal paper by J.~Koiller in 1992~\cite{Koiller1992}. Another relevant but not so well known work is due to Vershik and Faddeev~\cite{Vershik1972}.

Nowadays, nonholonomic mechanics is a very active area of the so-called Geometric Mechanics.

The geometric description of nonholonomic mechanics uses the symplectic machinery. The idea behind is that there exists an unconstrained system as a background and one can recover the nonholonomic dynamics by projecting, for instance, the unconstrained one. Due to their symplectic backstage, the dynamics is conservative (for linear and ideal constraints).

However, there are other kind of nonholonomic systems that do not fit on the above description. On can imagine, for instance, a nonholonomic system subject additionally to Rayleigh dissipation~\cite{Chaplygin1949,MOSHCHUK1987,Neimark1972}. Another source of examples comes from thermodynamics, treated in~\cite{GayBalmaz2017,GayBalmaz2018,Gay-Balmaz2019} with a variational approach.

Nevertheless, there is a natural geometric description for these systems based on contact geometry.

Contact geometry is, to some extent, an odd-dimensional version of symplectic geometry. In the Lagrangian view, we have a function $L:TQ \times \RR \to \RR$, $L=L(q^i, \dot{q}^i, z)$, where $z$ is a variable indicating friction (or a thermal variable in thermodynamics) and the equations of motion are obtained using the contact 1-form
\begin{equation}
    \eta_L = \dd z - \alpha_L,
\end{equation}
where $\alpha_L$ is the Liouville 1-form on $TQ$ obtained from the regular Lagrangian $L$ and the geometry of $TQ$. The energy of the system is
\begin{equation}
    E_L =  \Delta(L)-L,
\end{equation}
where $\Delta$ is the Liouville vector field and the dynamics is obtained through the equation
\begin{equation}
    \flat_L(\Gamma_L) = \dd E_L - (\Reeb_L (E_L) + E_L) \eta_L,
\end{equation}
where $\Reeb_L$ is the Reeb vector field for the contact form $\eta_L$, and
\begin{equation}
    \flat_L(X) = \contr{X} (\dd \eta_L) + \eta_L(X) \eta_L
\end{equation}
for any vector field $X$ on $TQ \times \RR$. Then, $\Gamma_L$ is a SODE on $TQ \times \RR$, and its solutions (the projection on $Q$ of its integral curves) coincide with the ones of the following equations
\begin{equation}\label{eq:herglotz_intro}
    \pdv{L}{q^i} - \dv{}{t} \pdv{L}{\dot{q}^i} + \pdv{L}{\dot{q}^i} \pdv{L}{z} = 0.
\end{equation}

Amazingly, \cref{eq:herglotz_intro} are the equations obtained by Gustav Herglotz in 1930 \cite{Herglotz1930}, using a new variational principle. For this reason, we will refer from now on to \cref{eq:herglotz_intro} as \emph{Herglotz equations}.

So, our goal in the current paper is just to develop a contact version of  Lagrangian systems with nonholonomic constraints.

The structure of this paper is as follows. \Cref{sec:contact_jacobi} is devoted to describe some introductory material on contact and Jacobi structures; in particular, it is crucial to notice that a contact structure is a particular case of a Jacobi manifold. In \cref{sec:contact_hamiltonian_systems} we discuss the notion of contact Hamiltonian systems and describe the Lagrangian formalism in that context.

\Cref{sec:herglotz_unconstrained,sec:herglotz_constrained} are devoted to analyze the Herglotz principle for nonholonomic systems (a sort of d'Alembert principle in comparison with the well-known Hamilton principle).  The reason to develop this subject is to justify the nonholonomic equations proposed in \cref{sec:herglotz_constrained}.

Then, in \cref{sec:nonholonomic_brackets}, we construct an analog to the symplectic nonholonomic bracket in the contact context. A relevant issue is that this bracket is an almost Jacobi bracket (that is, it does not satisfy the Jacobi identity). This contact nonholonomic bracket transforms the constraints in Casimirs and provides the evolution of observables, as in the unconstrained contact case. In \cref{quizalapenultimaseccion} we introduce the notion of almost Jacobi structure proving that the nonholonomic bracket induces, in fact, an almost Jacobi structure. Then, we prove that this structure is a Jacobi structure if, and only if, the constraints are holonomic.

Finally, we apply our results to a particular example, the \textit{Chaplygin's sledge subject to Rayleigh dissipation}.

\section{Contact manifolds and Jacobi structures}\label{sec:contact_jacobi}
In this section we will introduce the notion of contact structures and Jacobi manifolds. For a detailed study of these structures see~\cite{Vaisman1994,Libermann1987}.

\begin{definition}
A \emph{contact manifold} is a pair $(M,\eta)$, where $M$ is an $(2n+1)$-dimensional manifold and $\eta$ is a $1$-form on $M$ such that $\eta \wedge {(\dd\eta)}^n$ is a volume form. $\eta$ will be called a \emph{contact form}.
\end{definition}

\begin{remark}
Some authors define a contact structure as a smooth distribution $D$ on $M$ such that it is locally generated by contact forms $\eta$ via the kernels $\ker  \eta $. This definition is not equivalent to ours, and is less convenient for our purposes, as explained in~\cite{deLeon2018}.
\end{remark}

As we know, if $(M,\eta)$ is a contact manifold, there exists a unique vector field $\Reeb$, called \emph{Reeb vector field}, such that
\begin{equation}
	\contr{\Reeb}  \dd \eta = 0, \quad  \contr{\Reeb}\, \eta = 1.
\end{equation}

Let us now present the canonical example of contact manifold
\begin{example}\label{48}
Let $\left( x^{i}, y^{j} , z\right)$ be the canonical (global) coordinates of $\mathbb{R}^{2n +1}$. Then,  we may define the following $1-$form $\eta$ on $\mathbb{R}^{2n + 1}$:,  
\begin{equation}
    \eta = \dd z -  y^{i}\dd x^{i}.    
\end{equation}
Hence,
\begin{equation}
    \dd \eta = \dd x^{i} \wedge \dd y^{i}.
\end{equation}
So, the family of vector field $\{X_{i} , Y_{i} \}_{i}$ such that
\begin{equation}
    X_{i} = \frac{\partial}{\partial x^{i}} + y^{i} \frac{\partial}{\partial z}, \quad
    Y_{i} = \frac{\partial}{\partial y^{i}}
\end{equation}
generates the kernel of $\eta$. Furthermore, 
\[\dd \eta \left( X_{i}, X_{j} \right) = \dd \eta ( Y_{i} , Y_{j} ) = 0,
 \quad \dd \eta \left( X_{i} , Y_{j} \right) = \delta^{i}_{j}.\]
Then, $( \mathbb{R}^{2n+1} , \eta )$ is a contact manifold. Notice that, in this case, the Reeb vector field is $\frac{\partial}{\partial z}$.
\end{example}

We may even prove that any contact form locally looks like the contact form defined in example \ref{48}.
\begin{theorem}[Darboux theorem]
Suppose $\eta$ is a contact form on a $2n+1$-dimensional manifold $M$. For each $x \in M$ there are smooth coordinates $(q^i, p_i, z)$ centered at $x$ 
such that
\begin{equation}
	 \eta = \dd z - p_i \, \dd q^i.
\end{equation}
This coordinates will be called \textit{Darboux coordinates}.
\end{theorem}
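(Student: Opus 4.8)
The plan is to reduce the statement to the classical symplectic Darboux theorem by slicing $M$ transversally to the Reeb flow, and then to transport the resulting coordinates along $\Reeb$. First I would record the two facts about the Reeb field that drive everything. Since $\eta \wedge {(\dd\eta)}^n$ is a volume form, the $2$-form $\dd\eta$ has constant rank $2n$, so its kernel is one-dimensional; as $\contr{\Reeb}\dd\eta = 0$ with $\Reeb$ nowhere vanishing, this kernel is exactly $\gen{\Reeb}$ at every point. Moreover $\lieD{\Reeb}\eta = \contr{\Reeb}\dd\eta + \dd(\contr{\Reeb}\eta) = \dd(1) = 0$, so both $\eta$ and $\dd\eta$ are invariant under the Reeb flow. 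These invariances are what let me build the normalising coordinates on a single transversal and then propagate them.

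Next I would pick, via the flow-box (straightening) theorem, a $2n$-dimensional submanifold $\Sigma$ through $x$ transverse to $\Reeb$, with inclusion $j\colon \Sigma \hookrightarrow M$. Set $\omega := j^{*}\dd\eta$. Transversality of $\Reeb$ to $\Sigma$ together with $\ker \dd\eta = \gen{\Reeb}$ forces $\omega$ to be nondegenerate, so $(\Sigma,\omega)$ is symplectic and $\theta := j^{*}\eta$ is a primitive, $\dd\theta = \omega$. Applying the symplectic Darboux theorem on $\Sigma$ yields coordinates $(q^{i},p_{i})$ centred at $x$ with $\omega = \dd q^{i} \wedge \dd p_{i}$. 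Since $\dd(p_{i}\,\dd q^{i}) = \dd p_{i}\wedge \dd q^{i} = -\omega$, the $1$-form $\theta + p_{i}\,\dd q^{i}$ is closed, so after shrinking $\Sigma$ the Poincaré lemma provides a function $g$ on $\Sigma$ with $\theta = -p_{i}\,\dd q^{i} + \dd g$.

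Then I would flow the transversal coordinates out by $\Reeb$. Using the Reeb flow $\phi_{s}$, the map $(s,y)\mapsto \phi_{s}(y)$ is a diffeomorphism from a neighbourhood of $(0,x)$ in $\RR \times \Sigma$ onto a neighbourhood of $x$; transport $q^{i},p_{i}$ to functions that are constant along $\Reeb$, and define $z$ by $\Reeb(z)=1$ and $z|_{\Sigma}=g$ (so $z = s + g$). Since $\partial z/\partial s = 1$, the collection $(q^{i},p_{i},z)$ is a coordinate system. To finish I would show $\beta := \eta - (\dd z - p_{i}\,\dd q^{i})$ vanishes. With $\Reeb = \partial/\partial s$ one has $\contr{\Reeb}\dd z = 1$ and $\contr{\Reeb}\dd q^{i}=0$, so $\contr{\Reeb}\beta = 1 - 1 = 0$; and $\dd z - p_{i}\,\dd q^{i}$ is Reeb-invariant (the $q^{i},p_{i}$ are, and $\lieD{\Reeb}\dd z = \dd(\Reeb(z)) = 0$), whence $\lieD{\Reeb}\beta = 0$. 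A $1$-form annihilated by $\Reeb$ and invariant under its flow is basic, hence determined by its restriction to $\Sigma$; but on $\Sigma$ we have $z = g$, so $j^{*}(\dd z - p_{i}\,\dd q^{i}) = \dd g - p_{i}\,\dd q^{i} = \theta = j^{*}\eta$, giving $j^{*}\beta = 0$ and therefore $\beta = 0$.

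I expect the main obstacle to be the bookkeeping around the primitive: the symplectic Darboux theorem only normalises $\omega = \dd\eta|_\Sigma$, whereas the conclusion concerns $\eta$ itself, so the closed correction term $\dd g$ must be produced and then absorbed cleanly into the $z$-coordinate rather than into the $(q^{i},p_{i})$. The accompanying ``basic form'' reduction, which converts the problem on all of $M$ into the single identity $j^{*}\eta = \theta$ on $\Sigma$, is the step that makes the Reeb-invariances pay off and should be stated carefully. Everything else is routine, granting the symplectic Darboux theorem as a classical input (itself provable by Moser's deformation method, which could alternatively be adapted directly to the contact setting).
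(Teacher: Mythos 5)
The paper does not prove this statement; it is quoted as the classical contact Darboux theorem, so there is no in-paper argument to compare against. Your proof is correct and complete: the two Reeb invariances $\contr{\Reeb}\dd\eta=0$ and $\lieD{\Reeb}\eta=0$ are exactly what is needed, the slice $\Sigma$ transverse to $\Reeb$ inherits a genuine symplectic form because $\ker\dd\eta=\gen{\Reeb}$ is complementary to $T\Sigma$, and the final ``basic form'' argument (a $1$-form with $\contr{\Reeb}\beta=0$ and $\lieD{\Reeb}\beta=0$ is the pullback of its restriction to the slice) legitimately reduces the identity $\eta=\dd z-p_i\,\dd q^i$ on a neighbourhood to the single equation $j^{*}\eta=\theta$ on $\Sigma$, which holds by your choice of $g$. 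The one point worth stating explicitly is that $z$ should be normalised so that $z(x)=g(x)=0$ if you want the coordinates literally centred at $x$; this is harmless since $g$ is only determined up to a constant. Your closing remark is also accurate: the standard alternatives are to run Moser's deformation argument directly on the contact form, or to induct on dimension as in the classical Darboux proof, but the slice-and-flow reduction you give is a perfectly standard and economical route given symplectic Darboux as a black box.
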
 
Furthermore, in Darboux coordinates the Reeb vector field is expressed by:
\begin{equation}
	\Reeb = \frac{\partial}{\partial z}.
\end{equation}

The contact structure provides a vector bundle isomorphism:
\begin{equation}\label{eq:flat_iso}
    \begin{aligned}
        {\flat} : TM &\to T^* M ,\\
         v &\mapsto \contr{v}  \dd \eta + \eta (v)  \eta.
    \end{aligned}
\end{equation}
We denote also by $\flat:\VecFields(M)\to \Omega^1(M)$ the associated isomorphism of $\mathcal{C}^\infty(M)$-modules. The inverse of $\flat$ will be denoted by $\sharp$. Notice that ${\flat}(\Reeb)=\eta$.

The isomorphism $\flat$ may be defined by contracting a $2-$tensor given by   
\begin{equation}
    \omega = \dd \eta + \eta \otimes \eta.
\end{equation}
So, ${\flat}(X) = \omega(X,\cdot)$ for all $X \in \VecFields(M)$.\\
The form $\eta$ and the Reeb vector field $\Reeb$ provide the following Whitney sum decomposition of the tangent bundle:
\begin{equation}\label{eq:contact_sum_decomposition}
    TM = \ker \eta \oplus \ker \dd\eta.
\end{equation}
Notice that, by counting dimensions, it is easy to realize that $ \ker \dd \eta$ is generated by the Reeb vector field $\Reeb$.\\
We can classify subspaces of the tangent space regarding its relative position to the aforementioned Whitney sum decomposition~\eqref{eq:contact_sum_decomposition}.
\begin{definition}\label{def:subspace_position}
Let $\Delta_x \subseteq T_x M$ be a subspace, where $x\in M$. We say that $\Delta_x$ is
    \begin{enumerate}
        \item \emph{Horizontal} if $\Delta_x \subseteq \ker  \eta_x$.
        \item \emph{Vertical} if $\Reeb_x \in \Delta_x $.
        \item \emph{Oblique} otherwise. By a dimensional counting argument, this is equivalent to $\Delta_x =(\Delta_x \cap \ker  \eta_x)\oplus \gen{\Reeb_x + v_x}$, with $v_x \in \ker  \eta_x \setminus \Delta_x$.
    \end{enumerate}
We say that a distribution $\Delta$ is \emph{horizontal}/\emph{vertical}/\emph{oblique} if $\Delta_x$ is \emph{horizontal}/\emph{vertical}/\emph{oblique}, for every point $x$, respectively.
\end{definition}

 For a distribution $\Delta \subseteq TM$, we define the following notion of complement with respect to $\omega$. Since $\omega$ is neither symmetric nor antisymmetric, left and right complements differ:
\begin{equation}
    \begin{aligned}        
        \orth{\Delta}  &=  
          \set{v \in TM \mid \omega(w,v) = {\flat}(w)(v) = 0,\, \forall w\in \Delta}
          = \ann{({\flat}(\Delta))},\\
        \orthL{\Delta} &= 
          \set{v \in TM\mid \omega(v,w) = 0, \forall w \in \Delta} = \flat^{-1}(\ann{\Delta}).
    \end{aligned}
\end{equation}
These complements have the following relationship
\begin{equation}\label{complements_cancel}
    \orthL{(\orth{\Delta})} = 
    \orth{(\orthL{\Delta})} = \Delta.
\end{equation}
Furthermore, we may interchange sums and intersections, since the annihilator interchanges them and the linear map ${\flat}$ preserves them. Consequently, if $\Delta,\Gamma$ are distributions, we have
\begin{subequations}\label{eq:complement_intersections}
    \begin{align}
        \orth{(\Delta\cap \Gamma)} &= \orth{\Delta} + \orth{\Gamma},\\
        \orth{(\Delta +  \Gamma)} &= \orth{\Delta}  \cap \orth{\Gamma},
    \end{align}
\end{subequations}
and analogously for the left complement.

However, on some important cases, both right and left complements coincide.
\begin{lemma}\label{lem:rl_complement}
    Let $\Delta$ be a distribution on $M$.
    \begin{itemize}
        \item  If $\Delta$ is horizontal, then
        \begin{equation}\label{eq:horizontal_orth}
            \orth{\Delta} = \orthL{\Delta} = \set{v \in TM \mid \dd \eta(v,w) = 0, \, \forall v \in \Delta}.
        \end{equation}
        Hence, $\orth{\Delta}$ is vertical.

        \item  If $\Delta$ is vertical, then
        \begin{equation}
            \orth{\Delta} = \orthL{\Delta} = \set{v \in \ker \eta \mid \dd \eta(v,w) = 0, \, \forall v \in (\Delta \cap \ker \eta)}.
        \end{equation}
        Hence, $\orth{\Delta}$ is horizontal.
    \end{itemize}
\end{lemma}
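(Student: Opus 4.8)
The plan is to expand both complements through the explicit formula $\omega(X,Y)=\dd\eta(X,Y)+\eta(X)\eta(Y)$, and then to use the hypothesis on $\Delta$ (horizontality or verticality) to annihilate the symmetric rank-one term $\eta\otimes\eta$. Once that term is gone, the antisymmetry of $\dd\eta$ is what collapses the a priori distinct left and right complements into a single object, and the asymmetry of $\omega$ coming precisely from $\eta\otimes\eta$ is the reason the statement is not entirely trivial.

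For the horizontal case, I would start from $v\in\orth{\Delta}$ iff $\dd\eta(w,v)+\eta(w)\eta(v)=0$ for all $w\in\Delta$, and $v\in\orthL{\Delta}$ iff $\dd\eta(v,w)+\eta(v)\eta(w)=0$ for all $w\in\Delta$. Since $\Delta\subseteq\ker\eta$ we have $\eta(w)=0$, so in both cases the symmetric term drops and the defining conditions reduce to $\dd\eta(w,v)=0$, respectively $\dd\eta(v,w)=0$, for all $w\in\Delta$; by skew-symmetry of $\dd\eta$ these two conditions are identical, which simultaneously proves the equality $\orth{\Delta}=\orthL{\Delta}$ and identifies the common set as the one in \eqref{eq:horizontal_orth}. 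That $\orth{\Delta}$ is vertical is then immediate: $\contr{\Reeb}\dd\eta=0$ shows $\Reeb$ satisfies $\dd\eta(\Reeb,w)=0$ for every $w$, so $\Reeb\in\orth{\Delta}$.

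For the vertical case the key first move is to test the defining equation against $w=\Reeb\in\Delta$. Using $\contr{\Reeb}\dd\eta=0$ and $\eta(\Reeb)=1$, this forces $\eta(v)=0$ for any $v$ in either complement, so both $\orth{\Delta}$ and $\orthL{\Delta}$ are contained in $\ker\eta$; in particular $\orth{\Delta}$ is horizontal. Restricting now to $v\in\ker\eta$, the symmetric term again vanishes and membership reduces to $\dd\eta(v,w)=0$ for all $w\in\Delta$ (equivalently, by antisymmetry, $\dd\eta(w,v)=0$). I would then use the decomposition $\Delta=(\Delta\cap\ker\eta)\oplus\gen{\Reeb}$, which holds because $\Reeb\in\Delta$ and $\Reeb\notin\ker\eta$ (writing $w=(w-\eta(w)\Reeb)+\eta(w)\Reeb$), and observe that the $\Reeb$-component contributes $\dd\eta(v,\Reeb)=0$ automatically. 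Hence the condition need only be imposed over $w\in\Delta\cap\ker\eta$, giving exactly the claimed set and the equality $\orth{\Delta}=\orthL{\Delta}$.

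The individual computations are routine; the only point that genuinely requires care is the decomposition $\Delta=(\Delta\cap\ker\eta)\oplus\gen{\Reeb}$ in the vertical case, together with the bookkeeping of which slot of $\dd\eta$ carries the free vector. Keeping track of this asymmetry is the whole reason the left and right complements coincide here rather than merely being related by \eqref{complements_cancel}.
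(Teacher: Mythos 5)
Your proof is correct and follows essentially the same route as the paper's: the horizontal case is handled identically (the $\eta\otimes\eta$ term dies because $\eta(w)=0$, and skew-symmetry of $\dd\eta$ merges the two complements), and in the vertical case you use the same decomposition $\Delta=(\Delta\cap\ker\eta)\oplus\gen{\Reeb}$ to reduce to the horizontal case. The only difference is presentational — you verify membership elementwise (testing against $\Reeb$ to force $\eta(v)=0$) where the paper invokes the formal identities $\orth{(\Delta'+\gen{\Reeb})}=\orth{\Delta'}\cap\orth{\gen{\Reeb}}$ and $\orth{\gen{\Reeb}}=\ker\eta$, which amounts to the same computation.
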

\begin{proof}
    Assume that $\Delta$ is horizontal. Then for all  $w \in \Delta$,
    \begin{equation}
        \omega(v,w) = \dd \eta(v,w) = - \dd \eta(w,v) = - \omega(w,v),
    \end{equation}
    since $\eta(w) = 0$. So, \cref{eq:horizontal_orth} follows.

    Now, assume $\Delta$ is vertical. By using the Whitney sum decomposition~\eqref{eq:contact_sum_decomposition}, we can write $\Delta = \Delta' \oplus \gen{\Reeb}$, where $\Delta' = \Delta \cap \ker \eta$. Notice that $\orth{\gen{\Reeb}}=\orthL{\gen{\Reeb}} = \ker \eta$. By using~\eqref{eq:complement_intersections}, we deduce
    \begin{align*}
        \orth{\Delta} &= 
        \orth{(\Delta' + \gen{\Reeb})} =
        \orth{\Delta'} \cap \orth{\gen{\Reeb}}  \\ &=
        \orthL{(\Delta' + \gen{\Reeb})} =
        \orthL{\Delta'} \cap \orthL{\gen{\Reeb}}  \\ &=
        \set{v \in TM \mid \dd \eta(v,w) = 0, \, \forall v \in \Delta'} \cap \ker \eta,
    \end{align*}
    as we wanted to show. We have used the first item of the lemma, since $\Delta'$ is horizontal.
\end{proof}

Contact manifolds can be seen as particular cases of the so-called Jacobi manifolds~\cite{Kirillov1976,Lichnerowicz1978}.

\begin{definition}\label{def:jacobi_mfd}
  A \emph{Jacobi manifold} is a triple $(M,\Lambda,E)$, where $\Lambda$ is a bivector field (that is, a skew-symmetric contravariant 2-tensor field) and $E \in \VecFields (M)$ is a vector field, so that the following identities are satisfied:
  \begin{align}
      \lieBr{\Lambda,\Lambda} &= 2 E \wedge \Lambda\\
      \lieD{E} \Lambda &= \lieBr{E,\Lambda} = 0,
  \end{align}
  where $\lieBr{\,\cdot\, , \,\cdot}$ is the Schouten–Nijenhuis bracket.
\end{definition}

The Jacobi structure $(M,\Lambda,E)$ of contact manifold $(M,\eta)$ is given by
\begin{subequations}
  \begin{equation}\label{eq:contact_jacobi}
    \Lambda(\alpha,\beta) = 
    -\dd \eta ({\flat}^{-1} (\alpha), {\flat}^{-1}(\beta)), \quad
    E = - \Reeb.
  \end{equation}
\end{subequations}
  
The Jacobi bivector $\Lambda$ induces a vector bundle morphism between covectors and vectors.
\begin{equation}
    \begin{aligned}
        \sharp_{\Lambda}: T M^* &\to     T M\\
        \alpha &\mapsto \Lambda(\alpha, \cdot ).
    \end{aligned}
\end{equation}

In the case of contact manifolds, the map $\sharp_{\Lambda}$ can be written directly in terms of the contact structure~\cite[Section~3]{deLeon2018} as:
    \begin{equation}
        \sharp_{\Lambda} (\alpha) =
        {{\flat}}^{-1} (\alpha) - \alpha(\Reeb) \Reeb.
    \end{equation}

The Jacobi structure can be characterized in terms of a Lie bracket on the space of functions $\Cont^\infty(M)$, the so-called \emph{Jacobi bracket}. 
\begin{definition}\label{def:jac_bra}
    A \emph{Jacobi bracket} $\jacBr{\cdot,\cdot}: \Cont^\infty(M) \times \Cont^{\infty}(M) \to \Cont^\infty(M)$ on a manifold $M$ is a map that satisfies
    \begin{enumerate}
        \item $(\Cont^\infty(M),\jacBr{\cdot,\cdot})$ is a Lie algebra. That is, $\jacBr{\cdot,\cdot}$ is $\RR$-bilinear, antisymmetric and satisfies the Jacobi identity:
        \begin{equation}
            \jacBr{f,\jacBr{g,h}} + \jacBr{g,\jacBr{h,f}} + \jacBr{h,\jacBr{f,g}} = 0
        \end{equation}
        for arbitrary $f,g,h \in \Cont^\infty(M)$.

        \item It satisfies the following locality condition: for any $f,g \in \Cont^\infty(M)$,
        \begin{equation}
            \supp(\jacBr{f,g}) \subseteq \supp(f) \cap \supp(g),
        \end{equation}
        where $\supp(f)$ is the topological support of $f$, i.e., the closure of the set in which $f$ is non-zero.
    \end{enumerate}

This means that $(\Cont^\infty(M), \jacBr{\cdot,\cdot})$ is a local Lie algebra in the sense of Kirillov~\cite{Kirillov1976}.
\end{definition}

Given a Jacobi manifold $(M,\Lambda, E)$ we can define a Jacobi bracket by setting
\begin{equation}\label{eq:jac_bracket_from_mfd}
    \jacBr{f,g} =\Lambda(\dd f, \dd g) + f E(g) - g E (f).
\end{equation}
In fact, every Jacobi bracket arises in this way.
\begin{theorem}\label{thm:jab_bra_characterization}
    Given a manifold $M$ and a $\RR$-bilinear map $\jacBr{\cdot,\cdot}: \Cont^\infty(M) \times \Cont^{\infty}(M) \to \Cont^\infty(M)$, the following are equivalent.
    \begin{enumerate}
        \item\label{item:jac_bra_local} The map $\jacBr{\cdot,\cdot}$ is a Jacobi bracket.
        
        \item\label{item:jac_bra_leibniz} $(M,\jacBr{\cdot,\cdot})$ is a Lie algebra which satisfies the generalized Leibniz rule
        \begin{equation}\label{eq:mod_leibniz_rule}
            \jacBr{f,gh} = g\jacBr{f,h} + h\jacBr{f,g} +  g h E(h),
        \end{equation}
        where $E$ is a vector field on $M$.

        \item\label{item:jac_bra_mfd} There is a bivector field $\Lambda$ and a vector field $E$ such that $(M,\Lambda,E)$ is a Jacobi manifold and $\jacBr{\cdot,\cdot}$ is given as in \cref{eq:jac_bracket_from_mfd}.
    \end{enumerate}
\end{theorem}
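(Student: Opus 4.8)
The plan is to prove the three conditions equivalent through the cycle $\ref{item:jac_bra_mfd}\Rightarrow\ref{item:jac_bra_local}\Rightarrow\ref{item:jac_bra_leibniz}\Rightarrow\ref{item:jac_bra_mfd}$. Two of these arrows are direct computations with the defining formula \cref{eq:jac_bracket_from_mfd}, while the passage from the abstract locality condition to a concrete differential expression is the genuinely hard step and rests on Kirillov's analysis of local Lie algebras.

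For $\ref{item:jac_bra_mfd}\Rightarrow\ref{item:jac_bra_local}$, I start from a Jacobi manifold $(M,\Lambda,E)$ with bracket \cref{eq:jac_bracket_from_mfd}. The $\RR$-bilinearity and antisymmetry are immediate, since $\Lambda$ is skew and $fE(g)-gE(f)$ is visibly antisymmetric. The Jacobi identity is where the two defining conditions of \cref{def:jacobi_mfd} enter: expanding $\jacBr{f,\jacBr{g,h}}$ together with its cyclic permutations, the terms quadratic in $\Lambda$ assemble into a multiple of $\lieBr{\Lambda,\Lambda}$ evaluated on $\dd f,\dd g,\dd h$, while the terms mixing $\Lambda$ with $E$ assemble into contributions of $E\wedge\Lambda$ and of $\lieD{E}\Lambda$; the conditions $\lieBr{\Lambda,\Lambda}=2E\wedge\Lambda$ and $\lieD{E}\Lambda=0$ then make the total vanish. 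I would carry this out by Schouten-bracket bookkeeping rather than in coordinates. Finally, locality is automatic: each term of \cref{eq:jac_bracket_from_mfd} is a bidifferential operator of order one in each argument, so $\jacBr{f,g}(x)$ depends only on the $1$-jets of $f$ and $g$ at $x$, whence $\supp\jacBr{f,g}\subseteq\supp f\cap\supp g$.

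For $\ref{item:jac_bra_leibniz}\Rightarrow\ref{item:jac_bra_mfd}$, the vector field $E$ is the one furnished by the Leibniz rule \cref{eq:mod_leibniz_rule}. I would then define an object $\Lambda(\dd f,\dd g):=\jacBr{f,g}-fE(g)+gE(f)$ and show it descends to a genuine bivector field. The key point is that \cref{eq:mod_leibniz_rule} is precisely the condition making this assignment a derivation in each slot: a short computation yields the Leibniz identity $\Lambda(\dd f,\dd(gh))=g\,\Lambda(\dd f,\dd h)+h\,\Lambda(\dd f,\dd g)$, so that $g\mapsto\Lambda(\dd f,\dd g)$ is a vector field, and by antisymmetry the same holds in the first argument; hence $\Lambda$ is tensorial and well defined on differentials. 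It then remains to recover the two structural identities of \cref{def:jacobi_mfd} from the Jacobi identity of $\jacBr{\cdot,\cdot}$, which is exactly the reverse of the bracket bookkeeping used in the previous step.

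For $\ref{item:jac_bra_local}\Rightarrow\ref{item:jac_bra_leibniz}$, which is the main obstacle, I would invoke Kirillov's theorem on local Lie algebras~\cite{Kirillov1976}. The content is that the locality condition alone, together with bilinearity, forces $\jacBr{\cdot,\cdot}$ to be given by a bidifferential operator; a degree argument, together with antisymmetry ruling out a symmetric zeroth-order part, then bounds the order by one in each argument, which is exactly the assertion that the bracket has the form \cref{eq:jac_bracket_from_mfd} and hence obeys \cref{eq:mod_leibniz_rule}. The analytic heart is a Peetre-type localization result identifying support-non-increasing operators with differential operators; this is the one ingredient I would cite rather than reprove, and it is what makes this implication substantially deeper than the two computational arrows above.
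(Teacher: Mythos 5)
Your proposal is correct and follows the standard route; note that the paper itself offers no proof of this theorem, deferring entirely to the reference \cite{deLeon2019}, so your outline in fact supplies more argument than the source does, and the one genuinely hard step (locality plus the Lie algebra axioms forcing the bracket to be a first-order bidifferential operator) is delegated to Kirillov exactly as any written-out proof would do. Two small points are worth flagging. First, \cref{eq:mod_leibniz_rule} as printed contains a typo: expanding \cref{eq:jac_bracket_from_mfd} gives $\jacBr{f,gh} = g\jacBr{f,h} + h\jacBr{f,g} + gh\,E(f)$, not $gh\,E(h)$; your computation in the step $\ref{item:jac_bra_leibniz}\Rightarrow\ref{item:jac_bra_mfd}$ (showing $\Lambda(\dd f,\dd g):=\jacBr{f,g}-fE(g)+gE(f)$ is a derivation in each slot) only closes up with the corrected version, so you are tacitly fixing the statement — which is the right thing to do, but should be said. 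Second, in $\ref{item:jac_bra_leibniz}\Rightarrow\ref{item:jac_bra_mfd}$ you wave at recovering the two structural identities of \cref{def:jacobi_mfd} as ``the reverse of the bookkeeping''; the standard way to separate them is to first set $f=1$ in the Jacobi identity, which yields $\jacBr{1,g}=E(g)$ and hence $\lieD{E}\Lambda=0$, and only then read off $\lieBr{\Lambda,\Lambda}=2E\wedge\Lambda$ from the remaining trilinear part. That one-line device deserves to appear explicitly, but its omission is a presentational gap, not a mathematical one.
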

For a proof, see~\cite{deLeon2019}.

\section{Contact Hamiltonian systems}\label{sec:contact_hamiltonian_systems}
Given a Hamiltonian function $H$ on the contact manifold $(M,\eta)$  we define the \emph{Hamiltonian vector field} $X_H$ by the equation
\begin{equation}\label{eq:hamiltonian_vf_contact}
    {\flat} (X_H) = \dd H - (\Reeb (H) + H) \, \eta.
\end{equation}
We call the triple $(M,\eta,H)$ a \emph{contact Hamiltonian system}.
\begin{proposition}\label{prop213123}
Let $H$ be a Hamiltonian function. The following statements are equivalent:
\begin{enumerate}
\item $X_{H}$ is the Hamiltonian vector field of $H$.

\item It satisfies that
\begin{subequations}\label{eqs:ham_vfs_characterization2}
    \begin{align}
        \eta(X_H) &= -H, \\
        \iota_{X_{H}} \dd \eta_{\vert \ker \eta} &= \dd H_{\vert \ker \eta}.\label{eq:ham_vf_conf_contactomorphism2},
    \end{align}
\end{subequations}
\item It satisfies that
\begin{subequations}\label{eqs:ham_vfs_characterization}
    \begin{align}
        \eta(X_H) &= -H, \\
        \lieD{X_H} \eta &= a \eta \label{eq:ham_vf_conf_contactomorphism},
    \end{align}
\end{subequations}
for some function $a$. Notice that this implies that
$a = -\Reeb(H)$
\item Let $(\Lambda,-\Reeb)$ be the Jacobi structure induced by the contact form $\eta$ (see \cref{eq:contact_jacobi}). Then,
\begin{equation}
     X_H = \sharp_{\Lambda}(\dd H) - H \Reeb.
\end{equation}
\end{enumerate}
\end{proposition}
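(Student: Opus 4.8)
The plan is to verify the four conditions against one another by repeatedly exploiting the Whitney decomposition $TM = \ker \eta \oplus \ker \dd\eta$ of \eqref{eq:contact_sum_decomposition}, under which $\ker \dd\eta = \gen{\Reeb}$. Since two covectors, or two vector fields, coincide if and only if they agree on $\ker \eta$ and on $\Reeb$, essentially every step reduces to two short evaluations: one contraction with $\Reeb$ (using $\contr{\Reeb} \dd\eta = 0$ and $\eta(\Reeb) = 1$) and one restriction to $\ker \eta$. I would prove the chain $(1)\Leftrightarrow(2)$, then $(2)\Leftrightarrow(3)$, and finally $(1)\Leftrightarrow(4)$.

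For $(1)\Leftrightarrow(2)$ I would expand the defining equation \eqref{eq:hamiltonian_vf_contact} using $\flat(X) = \contr{X} \dd\eta + \eta(X)\eta$. Contracting \eqref{eq:hamiltonian_vf_contact} with $\Reeb$ collapses the left-hand side to $\eta(X_H)$ and the right-hand side to $\Reeb(H) - \Reeb(H) - H = -H$, giving $\eta(X_H) = -H$; restricting \eqref{eq:hamiltonian_vf_contact} to $\ker \eta$ annihilates the $\eta$-terms and yields $\contr{X_H}\dd\eta_{\vert \ker \eta} = \dd H_{\vert \ker \eta}$, which is \eqref{eq:ham_vf_conf_contactomorphism2}. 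The converse is the same computation read backwards: the two conditions of (2) prescribe $\flat(X_H)$ on $\ker \eta$ and on $\Reeb$ separately, hence on all of $TM$, reproducing the right-hand side of \eqref{eq:hamiltonian_vf_contact}.

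For $(2)\Leftrightarrow(3)$ I would use Cartan's formula $\lieD{X_H}\eta = \contr{X_H}\dd\eta + \dd(\eta(X_H))$. The guiding observation is that a $1$-form is of the form $a\eta$ precisely when it annihilates $\ker \eta$. Assuming (2), substituting $\eta(X_H) = -H$ gives $\lieD{X_H}\eta = \contr{X_H}\dd\eta - \dd H$, whose restriction to $\ker \eta$ vanishes by \eqref{eq:ham_vf_conf_contactomorphism2}; hence $\lieD{X_H}\eta = a\eta$, and evaluating on $\Reeb$ forces $a = -\Reeb(H)$. The converse reverses this: from $\lieD{X_H}\eta = a\eta$ and $\eta(X_H) = -H$ one recovers $\contr{X_H}\dd\eta = a\eta + \dd H$, whose restriction to $\ker \eta$ is exactly \eqref{eq:ham_vf_conf_contactomorphism2}. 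The one point that deserves a moment of care is the claim that a $1$-form vanishing pointwise on $\ker \eta$ is a \emph{smooth} multiple of $\eta$; this follows because the coefficient is globally given by $a = (\lieD{X_H}\eta)(\Reeb)$, which is smooth and simultaneously delivers the value $-\Reeb(H)$.

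For $(1)\Leftrightarrow(4)$ I would substitute the explicit expression $\sharp_{\Lambda}(\dd H) = \flat^{-1}(\dd H) - \dd H(\Reeb)\,\Reeb$ recorded in the excerpt, so that $\sharp_{\Lambda}(\dd H) - H\Reeb = \flat^{-1}(\dd H) - (\Reeb(H) + H)\Reeb$. Applying the isomorphism $\flat$ and using $\flat(\Reeb) = \eta$ turns the right-hand side into $\dd H - (\Reeb(H) + H)\eta$, which is precisely $\flat(X_H)$ from \eqref{eq:hamiltonian_vf_contact}; since $\flat$ is an isomorphism, the two statements are equivalent. No step constitutes a genuine obstacle: the entire proposition is a bookkeeping exercise organized around the Reeb-plus-$\ker\eta$ splitting, with the smoothness remark above being the only subtlety.
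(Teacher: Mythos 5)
Your proof is correct. Note that the paper itself states this proposition without proof (it is standard material, essentially quoted from the contact-Hamiltonian literature), so there is no in-text argument to compare against; your route via the splitting $TM=\ker\eta\oplus\gen{\Reeb}$, Cartan's formula, and the explicit formula $\sharp_{\Lambda}(\alpha)=\flat^{-1}(\alpha)-\alpha(\Reeb)\Reeb$ together with $\flat(\Reeb)=\eta$ is exactly the standard verification, and your one flagged subtlety (that the multiplier $a$ is smooth because it is globally $(\lieD{X_H}\eta)(\Reeb)$) is handled correctly.
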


By Cartan's formula,  the following identity is also satisfied,
\begin{subequations}
    \begin{align}
        \contr{X_H} \dd \eta &= \dd H - \Reeb(H) \eta.
    \end{align}
\end{subequations}

Contrary to the symplectic case, the energy and the phase space volume are not conserved (see~\cite{deLeon2018}).
\begin{proposition}[Energy and volume dissipation]\label{prop:energy_disipation}
    Given a Hamiltonian system $(M,\eta,H)$, the energy is not preserved along the flow of $X_H$. Indeed,
    \begin{equation}
        \lieD{X_H} H = - \Reeb(H) H.
    \end{equation}
    
    The contact form is also not preserved:
    \begin{equation}
        \lieD{X_H} \eta = - \Reeb(H) \eta.
    \end{equation}

    However, if $H$ does not vanish, the modified contact form
    \begin{equation}
        \tilde{\eta} = \frac{\eta}{H}
    \end{equation}
    is preserved. That is,
    \begin{equation}
        \lieD{X_H}\tilde\eta= 0.
    \end{equation}
    Moreover, $-X_H$ is the Reeb vector field of $\tilde{\eta}$.
    
   The contact volume element $\Omega = \eta\wedge(\dd \eta)^n$ is not preserved. In fact,
    \begin{equation}
        \lieD{X_H} \Omega=
         - (n+1) \Reeb (H) \Omega.
    \end{equation}

    However, if $H$ does not vanish, the following modified volume element is preserved:
    \begin{equation}
        \tilde\Omega = {H}^{-(n+1)}  \Omega = \tilde{\eta}\wedge (\dd \tilde{\eta})^n,
    \end{equation}
    that is,
    \begin{equation}
        \lieD{X_H}\tilde\Omega = 0.
    \end{equation}
\end{proposition}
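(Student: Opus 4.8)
The plan is to derive every assertion from two facts already in hand: the identity $\eta(X_H) = -H$ from \cref{prop213123}, and the contracted Hamilton equation $\contr{X_H}\dd\eta = \dd H - \Reeb(H)\eta$ recorded immediately after it. For the energy dissipation I would contract this last equation once more with $X_H$. The left-hand side gives $\dd\eta(X_H,X_H) = 0$ by antisymmetry, while the right-hand side is $X_H(H) - \Reeb(H)\eta(X_H) = \lieD{X_H}H + \Reeb(H)H$, using $\eta(X_H) = -H$. Equating the two yields exactly $\lieD{X_H}H = -\Reeb(H)H$.

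For the contact form I would apply Cartan's magic formula $\lieD{X_H}\eta = \contr{X_H}\dd\eta + \dd(\contr{X_H}\eta)$; substituting the contracted Hamilton equation together with $\contr{X_H}\eta = -H$ makes the two $\dd H$ contributions cancel and leaves $\lieD{X_H}\eta = -\Reeb(H)\eta$. The preservation of $\tilde\eta = \eta/H$ when $H \neq 0$ then follows from the Leibniz expansion $\lieD{X_H}(H^{-1}\eta) = -H^{-2}(\lieD{X_H}H)\eta + H^{-1}\lieD{X_H}\eta$, into which the two formulas just obtained are inserted so that the terms cancel exactly. For the Reeb claim, note that $\contr{-X_H}\tilde\eta = -\eta(X_H)/H = 1$ is constant, so Cartan's formula applied to $\lieD{X_H}\tilde\eta = 0$ forces $\contr{X_H}\dd\tilde\eta = -\dd(\contr{X_H}\tilde\eta) = 0$; these two conditions are precisely the defining equations identifying $-X_H$ as the Reeb field of $\tilde\eta$.

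For the volume statements I would expand $\lieD{X_H}\Omega$ using the derivation property over the wedge, together with the fact that $\lieD{X_H}$ commutes with $\dd$, so that $\lieD{X_H}\dd\eta = \dd(\lieD{X_H}\eta) = -\dd(\Reeb(H))\wedge\eta - \Reeb(H)\dd\eta$. The crucial simplification is that every surviving term in $\lieD{X_H}(\eta\wedge(\dd\eta)^n)$ is wedged with $\eta$ on the left, so any contribution carrying a second factor of $\eta$ dies; this annihilates the $\dd(\Reeb(H))\wedge\eta$ piece and leaves $\lieD{X_H}\Omega = -\Reeb(H)\Omega - n\Reeb(H)\Omega = -(n+1)\Reeb(H)\Omega$. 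Preservation of $\tilde\Omega = H^{-(n+1)}\Omega$ is then a one-line Leibniz computation in which $\lieD{X_H}H = -\Reeb(H)H$ and the volume formula cancel. To confirm $\tilde\Omega = \tilde\eta\wedge(\dd\tilde\eta)^n$, I would observe that in $\dd\tilde\eta = H^{-1}\dd\eta - H^{-2}\dd H\wedge\eta$ only the $H^{-1}\dd\eta$ part survives wedging against $\tilde\eta = H^{-1}\eta$, producing $H^{-(n+1)}\eta\wedge(\dd\eta)^n$.

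The only genuinely delicate step is the volume computation: one must keep track of the powers of $H^{-1}$ and, more importantly, argue carefully that all terms carrying a repeated factor of $\eta$ vanish from the wedge products. Every other part is a direct substitution of the first two dissipation identities into a Leibniz or Cartan expansion.
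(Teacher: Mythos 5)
Your proof is correct; the paper itself omits the argument (deferring to the reference \cite{deLeon2018}), and your derivation via the identities $\eta(X_H)=-H$ and $\contr{X_H}\dd\eta = \dd H - \Reeb(H)\eta$ together with Cartan's formula and the Leibniz rule is exactly the standard route. All the delicate points — the cancellation in $\lieD{X_H}\tilde\eta$, the vanishing of the terms containing a repeated factor of $\eta$ in the volume computations, and the verification of both defining equations of the Reeb field of $\tilde\eta$ — are handled correctly.
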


An important case of contact manifold is the manifold $TQ \times \mathbb{R}$ where $Q$ is an $n-$dimensional manifold. 
If $L:TQ\times \RR \to \RR$ is \textit{regular Lagrangian function}, that is, its Hessian matrix with respect to the velocities
\begin{equation}\label{eq:velocity_Hessian}
	W_{ij}=\left( \frac{\partial^2 L}{\partial \dot{q}^i \partial \dot{q}^j} \right)
\end{equation}
is regular, then $(TQ \times \RR, \eta_L)$ is a contact manifold. Here $(q^i, \dot{q}^i,z)$ are the natural coordinates on $TQ \times \RR$ induced by coordinates $\left( q^{i} \right)$ on $Q$. The contact form $\eta_{L}$  given by
\begin{equation}
    \eta_L = \dd z - \alpha_L,
\end{equation}
where
\begin{align}
    \alpha_L &= S^* (\dd L) = \pdv{L}{\dot q^i} \dd {q}^i,
\end{align}
 and $S$ is the canonical vertical endomorphism on $TQ$ extended in the natural way to $TQ \times \RR$ (see \cite{deLeon2011} for an intrinsic definition). The form $\eta_{L}$ is called the \emph{contact Lagrangian form}. Locally, we have that $\eta_{L} = \dd z -  \pdv{L}{\dot{q}^i} {\dd q}^{i}$ and, hence

 \begin{equation}\label{puesestaecuacionsera}
    \dd \eta_{L} = \dfrac{\partial^{2}L}{\partial\dot{q}^i \partial q^{k}}{\dd q}^{i} \wedge {\dd q}^{k} + \dfrac{\partial^{2}L}{\partial\dot{q}^i \partial \dot{q}^{k}}{\dd q}^{i} \wedge \dd\dot{q}^{k} + \dfrac{\partial^{2}L}{\partial\dot{q}^i \partial z}{\dd q}^{i} \wedge \dd z.
 \end{equation}
The energy of the system is defined by
 \begin{equation}
    E_L = \Delta(L) - L = q^i \pdv{L}{q^i} - L,
 \end{equation}
 where $\Delta = q^i \pdv{}{q^i}$ is the Liouville vector field on $TQ$ extended to $TQ\times \RR$ in the natural way .

 Hence, $(TQ\times \RR, \eta_L, E_L)$ is a contact Hamiltonian system.

The Reeb vector field will be denoted by $\Reeb_L$ and it is given in bundle coordinates by
\begin{equation}\label{50}
    \Reeb_L = \frac{\partial}{\partial z} - 
    W^{ij} \frac{\partial^2 L}{\partial \dot{q}^i \partial z} \pdv{}{\dot{q}^j},
\end{equation}
where $(W^{ij})$ is the inverse of the Hessian matrix with respect to the velocities~\eqref{eq:velocity_Hessian}.

The Hamiltonian vector field of the energy will be denoted by $\Gamma_L = X_{E_L}$, hence
\begin{equation}\label{51}
    \flat_L(\Gamma_L) = \dd E_L - (E_{L} + \Reeb(E_L))\eta_L,
\end{equation}
where $\flat_L(v) = \contr{v} \dd \eta_L + \eta_L (v) \eta_L$ is the isomorphism defined in \cref{eq:flat_iso} for this particular contact structure.

\begin{definition}
Let us consider a regular Lagrangian $L: TQ \times \mathbb{R} \rightarrow \mathbb{R}$. Then, a vector field $X$ on $TQ \times \mathbb{R}$ is called a \textit{SODE} if it satisfies that
\begin{equation}\label{55}
S \left( X \right) = \Delta
\end{equation}
\end{definition}
Let $\left( q^{i} \right) $ be a local system of coordinates on $Q$. Then, a vector field on $TQ \times \mathbb{R}$ is (locally) expressed as follows
$$X\left( q^{i} \right) \dfrac{\partial}{\partial q^{i}}  + X\left( \dot{q}^{i} \right) \dfrac{\partial}{\partial \dot{q}^{i}}  + X\left( z \right) \dfrac{\partial}{\partial z}.$$
So, Eq. (\ref{55}) reduces to
$$ X \left( \dot{q}^{i} \right) = \dot{q}^{i}, \ \forall i.$$
So, it may be easily checked that the vector field $\Gamma_L$ is a SODE (\cref{49}).

\begin{proposition}
Let $X$ be a vector field on $TQ \times \mathbb{R}$. $X$ is a SODE if, and only if, any integral curve of $X$ is written (locally) as $\left( \xi , \dot{\xi} , s \right)$ for some (local) path $\xi$ on $Q$. 

\end{proposition}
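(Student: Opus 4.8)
The plan is to prove both implications of this characterization of SODEs by working in local bundle coordinates, since the SODE condition $S(X)=\Delta$ is itself local. First I would unwind what it means for a curve to be an integral curve of $X$. Writing $X = X(q^i)\,\partial/\partial q^i + X(\dot q^i)\,\partial/\partial\dot q^i + X(z)\,\partial/\partial z$, an integral curve $\gamma(t) = (q^i(t),\dot q^i(t), z(t))$ satisfies the system $\dot q^i(t) = X(q^i)\circ\gamma$, the analogous equation for the $\dot q^i$-components, and $\dot z(t) = X(z)\circ\gamma$. The key observation, recorded already after \cref{55}, is that $S(X)=\Delta$ is equivalent locally to the single family of identities $X(\dot q^i) = \dot q^i$ for all $i$, where here $\dot q^i$ on the right denotes the fibre coordinate function on $TQ\times\RR$.

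For the forward implication, I would assume $X$ is a SODE and examine an arbitrary integral curve $(q^i(t),v^i(t),z(t))$, writing $v^i$ for the velocity components to keep the curve and the coordinate functions notationally distinct. The first group of equations gives $\dfrac{\dd q^i}{\dd t} = X(q^i)\circ\gamma$. I need to extract the defining relation of a lifted curve, namely that the velocity coordinates equal the time derivatives of the base coordinates, i.e. $v^i(t) = \dfrac{\dd q^i}{\dd t}$. This is exactly where the SODE condition enters: I would invoke $X(\dot q^i)=\dot q^i$, which says that along any integral curve the evolution equation for the $i$-th velocity coordinate reads $\dfrac{\dd v^i}{\dd t} = (\dot q^i)\circ\gamma = v^i(t)$. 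Then, setting $\xi(t)$ to be the base path $t\mapsto(q^i(t))$ on $Q$, the content to verify is that $v^i(t) = \dfrac{\dd q^i}{\dd t}$, so that the curve is precisely $(\xi,\dot\xi,s)$ with $s=z$. The subtlety here, and the one I expect to be the main obstacle, is that the SODE equation directly controls $\dfrac{\dd v^i}{\dd t}$ rather than $v^i$ itself, so one must argue carefully that $v^i(t)$ coincides with $\dfrac{\dd q^i}{\dd t}$; the clean route is to combine the two coordinate evolution equations and observe that a SODE forces $X(q^i)=\dot q^i$ as well, so that $\dfrac{\dd q^i}{\dd t}=v^i(t)$ follows immediately and the velocity components are the genuine derivatives of the base path. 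I would therefore make explicit that the SODE condition yields $X(q^i)=\dot q^i$ for the $q$-components directly from $S(X)=\Delta$, which is the cleanest way to see $v^i=\dfrac{\dd q^i}{\dd t}$.

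For the converse, I would assume every integral curve has the stated form $(\xi,\dot\xi,s)$ and recover the SODE identity. If $\gamma(t)=(\xi^i(t),\dot\xi^i(t),s(t))$ is an integral curve through an arbitrary point, then on one hand $\dfrac{\dd}{\dd t}\xi^i = X(q^i)\circ\gamma$, while the form of the curve gives $\dfrac{\dd}{\dd t}\xi^i = \dot\xi^i = (\dot q^i)\circ\gamma$. Comparing, $X(q^i)\circ\gamma = (\dot q^i)\circ\gamma$ at the initial point; since $X$ admits an integral curve through every point of $TQ\times\RR$ and the initial point is arbitrary, we conclude $X(q^i)=\dot q^i$ identically, which is the local form of $S(X)=\Delta$. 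I would close by remarking that the same comparison on the $\dot q^i$-components is automatically consistent, so no further condition is imposed, and that the argument is manifestly coordinate-independent because the intrinsic relation $S(X)=\Delta$ is what governs the local identities throughout.
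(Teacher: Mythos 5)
Your argument is correct, and it is the standard coordinate computation; the paper states this proposition without giving any proof, so there is nothing to compare against beyond the definition itself. One point deserves care. The genuine local content of $S(X)=\Delta$ is the family of identities $X(q^i)=\dot q^i$ --- that is, the $\partial/\partial q^i$-coefficients of $X$ equal the fibre coordinates --- and nothing at all is imposed on the $\partial/\partial \dot q^i$-coefficients; the displayed condition $X(\dot q^i)=\dot q^i$ that follows the definition in the paper is a typo (taken literally it would fail for $\Gamma_L$ itself, whose $\dot q^i$-coefficients are the functions $b^i$ determined by the Herglotz equations, not $\dot q^i$). You detect this correctly, and your proof in the end uses only $X(q^i)=\dot q^i$, which is exactly what closes both directions: forward, $\frac{\dd q^i}{\dd t}=X(q^i)\circ\gamma=(\dot q^i)\circ\gamma=v^i(t)$ identifies the velocity components of the integral curve with $\dot\xi$; conversely, comparing $\frac{\dd \xi^i}{\dd t}=X(q^i)\circ\gamma$ with $\dot\xi^i=(\dot q^i)\circ\gamma$ and letting the initial point range over all of $TQ\times\RR$ recovers $X(q^i)=\dot q^i$, i.e. $S(X)=\Delta$. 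The only wording to fix is the phrase ``a SODE forces $X(q^i)=\dot q^i$ as well,'' which suggests both families of identities hold; only the one on the $q^i$-components does, and, as you yourself observe in the converse, the $\dot q^i$-components impose no condition.
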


\section{The Herglotz variational principle}\label{sec:herglotz_unconstrained}
The Herglotz principle was introduced by G. Herglotz \cite{Herglotz1930}, and has been rediscovered by B. Georgieva \emph{et al.}~\cite{Georgieva2003,Georgieva2011}. More recently, A. Bravetti \emph{et al.}~\cite{Bravetti2019} has connected this principle with contact Hamiltonian dynamics (see also \cite{deLeon2019} for a more recent approach).
Let us now present this principle to end up deriving the Herglotz equations. Consider a Lagrangian function $L:TQ \times \RR \to \RR$ and fix two points $q_1,q_2 \in Q$ and an interval $[a,b] \subset \RR$. We denote by $\Omega(q_1,q_1, [a,b]) \subseteq(\Cont^\infty([a,b]\to Q)$ the space of smooth curves $\xi$ such that $\xi(a)=q_1$ and $\xi(b)=q_2$. This space has the structure of an infinite dimensional smooth manifold whose tangent space at $\xi$ is given by the set of vector fields over $\xi$ that vanish at the endpoints~\cite[Proposition~3.8.2]{Abraham1978}, that is,
\begin{equation}
\begin{aligned}
        T_\xi \Omega(q_1,q_2, [a,b]) =  \set{&
            v_\xi \in \Cont^\infty([a,b] \to TQ) \mid \\& 
            \tau_Q \circ v_\xi = \xi, \,v_\xi(a)=0, \, v_\xi(b)=0 
            }.
\end{aligned}
\end{equation}
The elements of $T_{\xi} \Omega(q_1,q_2, [a,b]) $ will be called \textit{infinitesimal variations} of the curve $\xi$.
We will consider the following maps. Let 
\begin{equation}
    \mathcal{Z}:\Cont^\infty ([a,b] \to Q) \to \Cont^\infty ([a,b] \to \RR)
\end{equation}
 be the operator that assigns to each curve $\xi$ the curve $\mathcal{Z}(\xi)$ that solves the following ordinary differential equation (ODE):
\begin{equation}\label{1}
    \dv{\mathcal{Z}(\xi)(t)}{t} = L(\xi(t), \dot \xi(t), \mathcal{Z}(\xi)(t)), \quad \mathcal{Z}(\xi)(a)= 0.
\end{equation}

Now we define the \emph{action functional} as the map which assigns to each curve the solution to the previous ODE evaluated at the endpoint, namely,
\begin{equation}\label{2}
    \begin{aligned}
        \mathcal{A}: \Omega(q_1,q_2, [a,b]) &\to \RR,\\
        \xi &\mapsto \mathcal{Z}(\xi)(b).
    \end{aligned}
\end{equation}
We will say that a path $\xi \in \Omega \left( q_1,q_2, Q \right)$ satisfies the \emph{Herglotz variational principle} if it is a critical point of $\mathcal{A}$, i.e.,
\begin{equation}\label{3}
T_{\xi}\mathcal{A} = 0
\end{equation}
As we have shown in~\cite[Thm.~2]{deLeon2019}, the functional derivative of $\mathcal{A}$ is given by
\begin{equation}\label{4}
    T_{\xi} \mathcal{A}(v) = \frac{1}{\sigma(b)} \int_a^b v^i(t) \sigma(t)
    \left(\pdv{L}{q^i} - \dv{}{t} \pdv{L}{\dot{q}^i} + \pdv{L}{\dot{q}^i} \pdv{L}{z} \right) \dd t,
\end{equation}
where $\xi \in \Omega(q_1,q_1, [a,b])$, $v \in T_\xi \Omega(q_1,q_1, [a,b])$ and 
\begin{equation}
     \sigma(t) = \exp \left({-\int_a^t \pdv{L}{z} \dd \tau}\right).
\end{equation}

The critical points of this functional are those curves $\xi$ along which the following equations are satisfied:
\begin{equation}\label{52}
   \pdv{L}{q^i} - \dv{}{t} \pdv{L}{\dot{q}^i} + \pdv{L}{\dot{q}^i} \pdv{L}{z} = 0.
\end{equation}
These equations are called \textit{Herglotz equations}.

As it is proved in~\cite[Section~3]{deLeon2019}, for a regular Lagrangian $L$, Herglotz equations are equivalent to those obtained in a geometric way using the contact structure $\eta_{L}$ induced by the Lagrangian.

\begin{theorem}\label{49}
Assume that $L$ is regular. Let $\Gamma_{L}$ be the Hamiltonian vector field on $TQ \times \RR$ associated to the energy (see Eq. (\ref{51})), i.e.,
$$\flat_L(\Gamma_L) = \dd E_L - (\Reeb(E_L)+E_L)\eta_L.$$
Then, 
\begin{enumerate}
\item $\Gamma_{L}$ is a SODE on $TQ \times \RR$.

\item The integral curves of $\Gamma_{L}$ are solutions of the Herglotz equations~\eqref{52}.
\end{enumerate}
\begin{proof}
If we write \cref{51} in local coordinates we have that
\begin{equation}\label{53}
\Gamma_{L}= \dot{q}^{i} \pdv{}{q^{i}} + b^{i}\pdv{}{\dot{q}^{i}} + L \pdv{}{z},
\end{equation}
and the local functions $b^{i}$ satisfy
\begin{equation}\label{54}
b^{k} \dfrac{\partial^{2}L}{\partial\dot{q}^k \partial \dot{q}^{i}} + \dot{q}^{k}\dfrac{\partial^{2}L}{\partial  q^k \partial \dot{q}^{i}} + L \dfrac{\partial^{2}L}{\partial z \partial \dot{q}^{i}} - \dfrac{\partial  L }{\partial q^{i}} =\dfrac{\partial  L }{\partial \dot{q}^{i}}\dfrac{\partial  L }{\partial z}.
\end{equation}
The result follows from this expression.

\end{proof}
\end{theorem}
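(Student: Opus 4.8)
The plan is to pin down $\Gamma_L$ completely in local coordinates by matching the two sides of its defining equation coefficient by coefficient, and then to read off both claims from the resulting expressions. I would write a general vector field as $\Gamma_L = A^i \frac{\partial}{\partial q^i} + B^i \frac{\partial}{\partial \dot q^i} + C\,\frac{\partial}{\partial z}$ with unknown local functions $A^i, B^i, C$, and exploit the characterization in \cref{prop213123}: namely $\eta_L(\Gamma_L) = -E_L$ together with the Cartan-type identity $\contr{\Gamma_L} \dd\eta_L = \dd E_L - \Reeb(E_L)\eta_L$. Using the explicit form of $\dd \eta_L$ from \cref{puesestaecuacionsera} and the differential of $E_L = \dot q^j \pdv{L}{\dot q^j} - L$, both sides become explicit $1$-forms in the basis $\set{\dd q^k, \dd \dot q^k, \dd z}$.

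First I would compare the coefficients of $\dd \dot q^k$. On the left this is $W_{ik}A^i$ and on the right it is $W_{jk}\dot q^j$, so that $W_{ik}(A^i - \dot q^i) = 0$. This is the only place where regularity is genuinely used: invertibility of the Hessian $W_{ij}$ forces $A^i = \dot q^i$, i.e. the $q^i$-component of $\Gamma_L$ equals $\dot q^i$, which is exactly the SODE condition $S(\Gamma_L) = \Delta$ in coordinates (cf.\ \eqref{55}), proving item (1) and the first and third terms of the shape \eqref{53}. Substituting $A^i = \dot q^i$ into $\eta_L(\Gamma_L) = -E_L$ then collapses all velocity terms and yields $C = L$, that is $\Gamma_L(z) = L$, completing \eqref{53} with $b^i := B^i$ still to be determined.

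The remaining content sits in the $\dd q^k$-coefficients: matching them produces the linear system \eqref{54} for the $b^k$. I expect this to be the most laborious step, since it requires carefully collecting the three families of second derivatives of $L$ (in $q$, in $\dot q$, and in $z$), but it is entirely mechanical once $A^i = \dot q^i$ and $C = L$ are in hand, and involves no conceptual difficulty — only bookkeeping.

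Finally, for item (2) I would pass to an integral curve $t \mapsto (q^i(t), \dot q^i(t), z(t))$ of $\Gamma_L$. The SODE property gives $\dot q^i = \dv{q^i}{t}$ and hence $\ddot q^i = b^i$, while $\Gamma_L(z) = L$ gives $\dot z = L$. Substituting $\ddot q^k = b^k$ and $\dot z = L$ into \eqref{54} and recognizing its first three terms as the total derivative $\dv{}{t}\pdv{L}{\dot q^i}$ via the chain rule converts \eqref{54} into $\dv{}{t}\pdv{L}{\dot q^i} - \pdv{L}{q^i} = \pdv{L}{\dot q^i}\pdv{L}{z}$, which rearranges to the Herglotz equations \eqref{52}. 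The main obstacle, such as it is, is the $\dd\dot q^k$-extraction of the SODE condition via regularity together with the patience demanded by the $\dd q^k$-coefficient computation yielding \eqref{54}.
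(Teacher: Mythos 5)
Your proposal is correct and follows essentially the same route as the paper: the paper's proof simply asserts that writing the defining equation in local coordinates yields the form \eqref{53} with the $b^i$ determined by \eqref{54}, and your coefficient-by-coefficient matching (using the equivalent characterization $\eta_L(\Gamma_L)=-E_L$, $\contr{\Gamma_L}\dd\eta_L = \dd E_L - \Reeb(E_L)\eta_L$, with regularity of $W_{ij}$ forcing the SODE condition from the $\dd\dot q^k$-components) is exactly the computation the paper leaves implicit. The passage from \eqref{54} to the Herglotz equations along integral curves via $\ddot q^i = b^i$ and $\dot z = L$ also matches the intended argument.
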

Sometimes, these equations are called \textit{generalized Euler-Lagrange equations}.

\section{Herglotz principle with constraints}\label{sec:herglotz_constrained}

Consider now that the system is restricted to certain (linear) constraints on the velocities modelized by a regular distribution $\Delta$ on the configuration manifold $Q$ of codimension $k$. Then, $\Delta$ may be locally described in terms of independent linear constraint functions $\{\Phi^a \}_{a=1, \dots , k}$ in the following way
\begin{equation} \label{5}
\Delta=\left\{v\in TQ \mid \Phi^a \left( v \right)=0\right\}.
\end{equation}
Notice that, due to the linearity, the constraint functions $\Phi^{a}$ may be considered as 1-forms $\Phi^{a}: Q \rightarrow T^{*}Q$ on $Q$. Without danger of confusion, we will also denote by $\Phi^{a}$ to the 1-form version of the constraint $\Phi^{a}$ This means that
$$\Phi^a = \Phi^a_i (q) \dot{q}^i.$$
Let $L: TQ \times \mathbb{R} \rightarrow \mathbb{R}$ be the Lagrangian function. One may then define the \emph{Herglotz variational principle with constraints}, that is, we want to find the paths $\xi \in \Omega(q_1,q_1, [a,b])$ satisfying the constraints such that $T_\xi \mathcal{A}(v) = 0$ for all infinitesimal variation $v$ which is tangent to the constraints $\Delta$. More precisely, we define the set
\begin{equation}
    \Omega(q_1,q_1, [a,b])_{\xi}^{\Delta} = \set{
        v \in T_\xi \Omega(q_1,q_1, [a,b]) \mid v(t) \in \Delta_{\xi(t)} \text{ for all } t \in [a,b]}.
\end{equation}
Then, $\xi$ satisfies the Herglotz variational principle with constraints if, and only if,
\begin{enumerate}
\item $T_{\xi}\mathcal{A}_{\vert \Omega(q_1,q_1, [a,b])_{\xi}^{\Delta}} = 0.$

\item $\dot{\xi} \left( t \right) \in \Delta_{\xi(t)} \text{ for all } t \in [a,b].$
\end{enumerate}

\begin{definition}
\rm
A \textit{constraint Lagrangian system} is given by a pair $\left( L , \Delta \right)$ where $L : TQ \times \RR \rightarrow \RR$ is a regular Lagrangian and $\Delta$ is a regular distribution on $Q$. The constraints are said to be \textit{semiholonomic} if $\Delta$ is involutive and \textit{non-holonomic} otherwise.
\end{definition}

Using Eq. (\ref{4}) one may easily prove the following characterization of the Herglotz variational principle with constraints.
\begin{theorem}
A path $\xi \in \Omega(q_1,q_1, [a,b])$ satisfies the Herglotz variational principle with constraints if, and only if,
\begin{equation}
    \begin{cases}  
\pdv{L}{q^i} - \dv{}{t} \pdv{L}{\dot{q}^i} + \pdv{L}{\dot{q}^i} \pdv{L}{z}  \in \ann{\Delta}_{\xi(t)}
\\
\dot{\xi} \in \Delta
\end{cases}
\end{equation}
where $\ann{\Delta} = \set {a \in T^*Q \mid a(u)=0 \text{ for all } u \in \Delta }$ is the annihilator of $\Delta$.
\end{theorem}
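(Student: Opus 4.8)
The plan is to reduce the equivalence to the unconstrained functional-derivative formula~\eqref{4} together with a constrained version of the fundamental lemma of the calculus of variations. I first note that the condition $\dot\xi(t)\in\Delta_{\xi(t)}$ occurs verbatim both in the definition of the Herglotz variational principle with constraints and in the right-hand side of the claimed characterization, so it transfers directly; hence only the stationarity condition $T_\xi\mathcal{A}_{\vert \Omega(q_1,q_1,[a,b])_{\xi}^{\Delta}}=0$ needs analysis. To handle it, abbreviate the Herglotz expression by
\begin{equation*}
    \mathcal{E}_i(t) = \pdv{L}{q^i} - \dv{}{t}\pdv{L}{\dot q^i} + \pdv{L}{\dot q^i}\pdv{L}{z},
\end{equation*}
evaluated along the lift $(\xi,\dot\xi,\mathcal{Z}(\xi))$, and regard $t\mapsto\mathcal{E}(t)$ as a covector field $\mathcal{E}(t)\in T^*_{\xi(t)}Q$ along $\xi$. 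With this shorthand, \eqref{4} reads $T_\xi\mathcal{A}(v)=\tfrac{1}{\sigma(b)}\int_a^b \sigma(t)\,\mathcal{E}_i(t)\,v^i(t)\,\dd t$ for every $v\in T_\xi\Omega(q_1,q_1,[a,b])$, and the weight $\sigma(t)=\exp(-\int_a^t \pdv{L}{z}\,\dd\tau)$ is strictly positive.

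For the implication from the annihilator condition to stationarity, suppose $\mathcal{E}(t)\in\ann{\Delta}_{\xi(t)}$ for every $t$. Any admissible infinitesimal variation $v\in\Omega(q_1,q_1,[a,b])_{\xi}^{\Delta}$ satisfies $v(t)\in\Delta_{\xi(t)}$, so the pairing $\mathcal{E}_i(t)\,v^i(t)$ vanishes pointwise; the integrand in~\eqref{4} is then identically zero and $T_\xi\mathcal{A}(v)=0$. This direction requires no further computation.

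For the converse I would argue locally and component-wise. Fix $t_0\in[a,b]$; since $\Delta$ is a regular distribution of rank $n-k$, I choose a local frame $\{e_1,\dots,e_{n-k}\}$ of $\Delta$ on a neighbourhood $U$ of $\xi(t_0)$ in $Q$ and pull it back along $\xi$ to obtain fields $e_\alpha(\xi(t))$ spanning $\Delta_{\xi(t)}$ on a subinterval $I\subseteq[a,b]$ with $\xi(I)\subseteq U$. For any smooth functions $c^\alpha$ compactly supported in the interior of $I$, the field $v(t)=c^\alpha(t)\,e_\alpha(\xi(t))$ is an admissible variation in $\Omega(q_1,q_1,[a,b])_{\xi}^{\Delta}$, being smooth over $\xi$, tangent to $\Delta$, and vanishing at $a$ and $b$. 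Stationarity then forces $\int_I \sigma(t)\,c^\alpha(t)\,g_\alpha(t)\,\dd t=0$, where $g_\alpha(t)=\mathcal{E}_i(t)\,e_\alpha^i(\xi(t))=\langle\mathcal{E}(t),e_\alpha(\xi(t))\rangle$. Letting the $c^\alpha$ vanish except for a single index, and letting that one range over all bump functions, the classical fundamental lemma (applicable because $\sigma>0$) yields $g_\alpha\equiv0$ on $I$ for each $\alpha$. Thus $\mathcal{E}(t)$ annihilates a basis of $\Delta_{\xi(t)}$, i.e.\ $\mathcal{E}(t)\in\ann{\Delta}_{\xi(t)}$, for all $t\in I$; as $t_0$ is arbitrary this holds on all of $[a,b]$.

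The only delicate point is this reduction to the fundamental lemma: the frame $\{e_\alpha\}$ exists only locally in $Q$, so the argument must be run on subintervals $I$ carrying a frame, with bump functions supported there, and the mutual independence of the components $c^\alpha$ must be invoked to isolate each $g_\alpha$ separately. Once this bookkeeping is arranged the conclusion is immediate, and combining it with the already-noted coincidence of the constraint $\dot\xi\in\Delta$ on both sides completes the equivalence.
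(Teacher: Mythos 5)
Your proof is correct and follows exactly the route the paper intends: the paper offers no written proof beyond the remark that the theorem follows ``easily'' from the functional-derivative formula~\eqref{4}, and your argument supplies precisely the missing details --- the forward direction by pointwise vanishing of the pairing $\mathcal{E}_i v^i$ for variations tangent to $\Delta$, and the converse via a local frame of $\Delta$, bump-function variations, and the fundamental lemma of the calculus of variations (using $\sigma>0$). The bookkeeping you flag about working on subintervals carrying a local frame is handled properly, so nothing is missing.
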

Taking into account Eq. (\ref{5}), we have that $\ann{\Delta}$ is (locally) generated by the one-forms $\Phi^a $. Then $\xi$ satisfies the Herglotz variational principle with constraints if, and only if, it satisfies the following equations
\begin{equation}
    \begin{cases}\label{eq:nonholonomic_herglotz_eqs_coords}
        \dv{}{t} \pdv{L}{\dot{q}^i} - \pdv{L}{q^i} - \pdv{L}{\dot{q}^i} \pdv{L}{z} = \lambda_a\Phi^a_i \\
        \Phi^a (\dot{\xi}(t)) = 0.
    \end{cases}
\end{equation} 
for some Lagrange multipliers $\lambda_i(q^i)$ and where $\Phi^a= \Phi^a_i\dd q^i$.\\
From now on, Eqs. (\ref{eq:nonholonomic_herglotz_eqs_coords}) will be called \emph{constraint Herglotz equations}.

We will now present a geometric charaterization of the Herglotz equations similar to \cref{49}. In order to do this, we will consider a distribution $\Delta^{l}$ on $TQ\times \mathbb{R}$ induced by $\Delta$ such that its annihilator is given by

\begin{equation}
    \ann{\Delta^{l}} = \left( \tau_{Q} \circ \pr_{TQ \times \mathbb{R}}\right)^{*} \Delta^{0},
\end{equation}
where $\tau_{Q} : TQ \rightarrow Q$ is the canonical projection and $ \pr_{TQ \times \mathbb{R}}: TQ \times \RR \rightarrow TQ$ is the projection on the first component. In fact, we may prove that
\begin{equation}
    \Delta^l = S^{*} \left( \ann{T \left(\Delta \times \mathbb{R}\right)}\right).
\end{equation}
Hence, $\ann{\Delta^{l}}$ is generated by the 1-forms on $TQ \times \mathbb{R}$ given by
\begin{equation}\label{eq:delta_l_constraints}
\tilde{\Phi}^{a}=\Phi^{a}_{i}{\dd q}^{i}
\end{equation}
Then, we have the following result.
\begin{theorem}\label{18}
Assume that $L$ is regular. Let $X$ be a vector field on $TQ \times \RR$ satisfying the equation
\begin{equation}
    \begin{cases}\label{6}
        \flat_L\left(X\right) - \dd E_L + \left(E_L + \Reeb_L\left(E_L\right)\right)\eta_L \in \ann{\Delta^{l}} \\
        X_{\vert \Delta \times \RR} \in  \mathfrak{X}\left(\Delta \times \RR \right).
    \end{cases},
\end{equation}
Then, 
\begin{itemize}
\item[(1)] $X$ is a SODE on $TQ \times \RR$.

\item[(2)] The integral curves of $X$ are solutions of the constraint Herglotz equations (\ref{eq:nonholonomic_herglotz_eqs_coords}).
\end{itemize}
\begin{proof}
To prove this theorem we will take advantage of the calculations done for the Herglotz equations (\ref{52}). So, consider a vector field $Y$ on $TQ \times \RR$ such that
\begin{equation}\label{12}
 \flat_L \left( Y \right) \in \ann{\Delta^{l}}.
 \end{equation}
Then, there exist some (local) functions $\lambda_{a}$ such that
$$ \flat_L \left( Y \right)  =  \lambda_a\Phi^a_i {\dd q}^{i}.$$
Observe that $\eta_{L} = \dd z -  \pdv{L}{\dot{q}^i} {\dd q}^{i}$ and, hence (\ref{puesestaecuacionsera})
$$ \dd \eta_{L} = \dfrac{\partial^{2}L}{\partial\dot{q}^i \partial q^{k}}{\dd q}^{i} \wedge {\dd q}^{k} + \dfrac{\partial^{2}L}{\partial\dot{q}^i \partial \dot{q}^{k}}{\dd q}^{i} \wedge \dd\dot{q}^{k} + \dfrac{\partial^{2}L}{\partial\dot{q}^i \partial z}{\dd q}^{i} \wedge \dd z.$$
Thus,
\begin{itemize}
\item[(i)] $\iota_{\pdv{}{q^i}}\dd \eta_{L} = \left[\dfrac{\partial^{2}L}{\partial\dot{q}^i \partial q^{k}} - \dfrac{\partial^{2}L}{\partial\dot{q}^k \partial q^{i}}\right] {\dd q}^{k} + \dfrac{\partial^{2}L}{\partial\dot{q}^i \partial \dot{q}^{k}}\dd \dot{q}^{k} + \dfrac{\partial^{2}L}{\partial\dot{q}^i \partial z}\dd z.$
\item[(ii)] $\iota_{\pdv{}{\dot{q}^i}}\dd \eta_{L} = - \dfrac{\partial^{2}L}{\partial\dot{q}^i \partial \dot{q}^{k}}{\dd q}^{k}.$
\item[(iii)] $\iota_{\pdv{}{z}}\dd \eta_{L} = - \dfrac{\partial^{2}L}{\partial\dot{q}^k \partial z}{\dd q}^{k}.$
\end{itemize}
Then, 
\begin{eqnarray*}
0 & = & \{ \flat_L \left( Y \right) \} \left( \pdv{}{\dot{q}^i}\right) = \iota_{Y} \dd \eta_{L} \left( \pdv{}{\dot{q}^i}\right)  + \eta_{L} \left( Y \right)  \eta_{L} \left( \pdv{}{\dot{q}^i}\right)\\
&=& \iota_{\pdv{}{\dot{q}^i}}\dd \eta_{L}  \left( Y \right) = \dfrac{\partial^{2}L}{\partial\dot{q}^i \partial \dot{q}^{k}} Y \left(q^{k} \right)
\end{eqnarray*}
Therefore, due to the regularity of $L$ we have that $Y \left(q^{k} \right) = 0$ for all $k$. Analogously, contracting by $\pdv{}{z}$, we obtain that $Y \left( z \right) = 0$. This proves that
$$ \eta_{L}\left( Y \right) = 0.$$
Finally
\begin{eqnarray*}
\lambda_a\Phi^a_i & = & \{ \flat_L \left( Y \right) \} \left( \pdv{}{q^i}\right) = \iota_{Y} \dd \eta_{L} \left( \pdv{}{q^i}\right)  + \eta_{L} \left( Y \right)  \eta_{L} \left( \pdv{}{q^i}\right)\\
&=&  -  \dfrac{\partial^{2}L}{\partial\dot{q}^i \partial \dot{q}^{k}} Y \left( \dot{q}^{k}\right)
\end{eqnarray*}
So,
\begin{equation}\label{13}
\dfrac{\partial^{2}L}{\partial\dot{q}^i \partial \dot{q}^{k}} Y \left( \dot{q}^{k}\right) = - \lambda_a\Phi^a_i
\end{equation}

Next, consider a vector field $X$ satisfying Eq. (\ref{6}). Notice that $ \flat_L(X) - \dd E_L + \left(E_L + \Reeb_L\left(E_L\right)\right)\eta_L \in \ann{\Delta^{l}}$ if, and only if, $\ \flat_L\left( \Gamma_{L}- X\right) \in \ann{\Delta^{l}}$ where $\Gamma_{L}$ is the solution of the Herglotz equations~\eqref{52}. Then, denoting $Y  = \Gamma_{L}- X$, we have that
\begin{equation}\label{38}
X = \dot{q}^{i} \pdv{}{q^{i}} + \left[\Gamma_{L}\left( \dot{q}^{i}\right) - Y \left( \dot{q}^{i}\right) \right] \pdv{}{\dot{q}^{i}} + L \pdv{}{z}.
\end{equation}
This proves that $X$ is a SODE. So, by using \cref{54} we have that
$$\Gamma_{L}\left( \dot{q}^{k}\right) \dfrac{\partial^{2}L}{\partial\dot{q}^k \partial \dot{q}^{i}} =  - \dot{q}^{k}\dfrac{\partial^{2}L}{\partial  q^k \partial \dot{q}^{i}} - L \dfrac{\partial^{2}L}{\partial z \partial \dot{q}^{i}} + \dfrac{\partial  L }{\partial q^{i}} +\dfrac{\partial  L }{\partial \dot{q}^{i}}\dfrac{\partial  L }{\partial z}.$$
Then, using \cref{13}, we obtain
\begin{eqnarray*}
X\left( \dot{q}^{k}\right) \dfrac{\partial^{2}L}{\partial\dot{q}^k \partial \dot{q}^{i}} &=&  - \dot{q}^{k}\dfrac{\partial^{2}L}{\partial  q^k \partial \dot{q}^{i}} - L \dfrac{\partial^{2}L}{\partial z \partial \dot{q}^{i}} + \dfrac{\partial  L }{\partial q^{i}} +\dfrac{\partial  L }{\partial \dot{q}^{i}}\dfrac{\partial  L }{\partial z} - Y \left( \dot{q}^{k}\right) \dfrac{\partial^{2}L}{\partial\dot{q}^k \partial \dot{q}^{i}}\\
&=& - \dot{q}^{k}\dfrac{\partial^{2}L}{\partial  q^k \partial \dot{q}^{i}} - L \dfrac{\partial^{2}L}{\partial z \partial \dot{q}^{i}} + \dfrac{\partial  L }{\partial q^{i}} +\dfrac{\partial  L }{\partial \dot{q}^{i}}\dfrac{\partial  L }{\partial z} +\lambda_a\Phi^a_i
\end{eqnarray*}
Notice that any integral curve of $X$ may be written (locally) as $\left( \xi , \dot{\xi} , \mathcal{Z}\right)$ with $ \dot{\mathcal{Z}} = L \left( \xi , \dot{\xi} , \mathcal{Z}\right)$ (see Eq. (\ref{38})). Then, we have that
$$\ddot{q}^{k} \dfrac{\partial^{2}L}{\partial\dot{q}^k \partial \dot{q}^{i}} = - \dot{q}^{k}\dfrac{\partial^{2}L}{\partial  q^k \partial \dot{q}^{i}} - \dot{\mathcal{Z}} \dfrac{\partial^{2}L}{\partial z \partial \dot{q}^{i}} + \dfrac{\partial  L }{\partial q^{i}} +\dfrac{\partial  L }{\partial \dot{q}^{i}}\dfrac{\partial  L }{\partial z} +\lambda_a\Phi^a_i .$$
Equivalently,
$$\dv{}{t} \pdv{L}{\dot{q}^i} - \pdv{L}{q^i} - \pdv{L}{\dot{q}^i} \pdv{L}{z} = \lambda_a\Phi^a_i.$$
Let us now study the second equation in (\ref{6}), that is, $X_{\vert \Delta \times \RR} \in  \frak X \left(\Delta \times \RR \right)$. Let $\left( \xi , \dot{\xi} , \mathcal{Z}\right)$ be an integral curve of $X_{\vert \Delta \times \RR}$. Then, by the condition of tangency we deduce that  $\left( \xi , \dot{\xi} \right) \subseteq \Delta$, i.e.,
$$\Phi^{a} \left( \xi \left( t \right) , \dot{\xi} \left( t \right) \right) = 0, \ \forall a.$$
\end{proof}
\end{theorem}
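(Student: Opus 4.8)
The plan is to reduce the constrained problem to the unconstrained one already settled in \cref{49} by studying the difference $Y = \Gamma_L - X$. Using the defining equation $\flat_L(\Gamma_L) = \dd E_L - (\Reeb_L(E_L) + E_L)\eta_L$, the first line of \cref{6} is equivalent to $\flat_L(\Gamma_L - X) \in \ann{\Delta^l}$, that is, $\flat_L(Y) \in \ann{\Delta^l}$. Since $\ann{\Delta^l}$ is generated by the 1-forms $\tilde\Phi^a = \Phi^a_i\, \dd q^i$, there are local functions $\lambda_a$ with $\flat_L(Y) = \lambda_a\Phi^a_i\, \dd q^i$. The decisive feature of these generators is that they involve only $\dd q^i$, with no $\dd\dot q^i$ or $\dd z$ components.

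The core of the argument is to read off the components of $Y$ by pairing $\flat_L(Y) = \iota_Y \dd\eta_L + \eta_L(Y)\eta_L$ against the coordinate vector fields, using the local expression for $\dd\eta_L$. Contracting against $\partial/\partial\dot q^i$ annihilates the $\eta_L(Y)\eta_L$ term and isolates the velocity Hessian, giving $W_{ik}\,Y(q^k)$ on the left; since $\tilde\Phi^a$ has no $\dd\dot q$ component, the right-hand side vanishes, and regularity of $L$ forces $Y(q^k) = 0$ for all $k$. The identical contraction against $\partial/\partial z$ gives $Y(z) = 0$, and these two facts together yield $\eta_L(Y) = 0$. Finally, contracting against $\partial/\partial q^i$ produces the one nontrivial relation $W_{ik}\,Y(\dot q^k) = -\lambda_a\Phi^a_i$.

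With $Y(q^k) = 0$ and $Y(z) = 0$ in hand, part~(1) is immediate: because $\Gamma_L$ is a SODE, $X(q^i) = \Gamma_L(q^i) - Y(q^i) = \dot q^i$ and $X(z) = \Gamma_L(z) - Y(z) = L$, so $X$ has the SODE form $X = \dot q^i\,\partial/\partial q^i + X(\dot q^i)\,\partial/\partial\dot q^i + L\,\partial/\partial z$. For part~(2), I would start from the local characterization \cref{54} of $\Gamma_L(\dot q^i)$, substitute $X(\dot q^k) = \Gamma_L(\dot q^k) - Y(\dot q^k)$, and contract with the Hessian; the relation $W_{ik}\,Y(\dot q^k) = -\lambda_a\Phi^a_i$ then introduces precisely the multiplier term $\lambda_a\Phi^a_i$ on the right. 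Evaluating along an integral curve $(\xi,\dot\xi,\mathcal{Z})$—which satisfies $\dot{\mathcal{Z}} = L$ by the SODE form—recovers the first constraint Herglotz equation in \cref{eq:nonholonomic_herglotz_eqs_coords}. The remaining equation $\Phi^a(\dot\xi) = 0$ follows directly from the tangency condition $X_{\vert\Delta\times\RR} \in \mathfrak{X}(\Delta\times\RR)$, which keeps integral curves starting in $\Delta\times\RR$ tangent to it, so that $(\xi,\dot\xi)\subseteq\Delta$.

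The main obstacle is the computation in the second paragraph: correctly extracting $\iota_{\partial/\partial q^i}\dd\eta_L$, $\iota_{\partial/\partial\dot q^i}\dd\eta_L$, and $\iota_{\partial/\partial z}\dd\eta_L$ from the local form of $\dd\eta_L$, and invoking regularity of $W$ to conclude $Y(q^k) = 0$ and $Y(z) = 0$. Everything downstream is bookkeeping, but this is the step where the contact geometry—the extra $\eta_L(Y)\eta_L$ term and the additional $z$-direction—genuinely interacts with the nonholonomic constraint.
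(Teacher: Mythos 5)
Your proposal is correct and follows essentially the same route as the paper's own proof: reduce to the unconstrained case via $Y = \Gamma_L - X$, read off $Y(q^k)=0$, $Y(z)=0$ and $W_{ik}Y(\dot q^k) = -\lambda_a\Phi^a_i$ by contracting $\flat_L(Y)=\lambda_a\Phi^a_i\,\dd q^i$ against the coordinate fields and invoking regularity of the Hessian, then combine with \cref{54} and the tangency condition. The only cosmetic difference is the bookkeeping around the $\partial/\partial z$ contraction (where $\eta_L(\partial/\partial z)=1$, so one first gets $\eta_L(Y)=0$ and then $Y(z)=0$, using $Y(q^k)=0$), which the paper glosses over in exactly the same way.
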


Therefore, Eq. (\ref{6}) provides the correct nonholonomic dynamics in the context of contact geometry. In the case of existence and uniqueness, the particular solution to Eq. (\ref{6}) will be denoted by $\Gamma_{L , \Delta}$. We will now investigate the existence and uniqueness of the solutions.\\
\begin{remark}[The distribution $\Delta^l$]\label{rem:DeltaL}
    From the coordinate expression of the constraints $\tilde{\Phi}^a$ defining $\Delta^l$ (\cref{eq:delta_l_constraints}), one can see that $\Reeb_{L}(\tilde{\Phi}^a)=0$, hence $\Delta^l$ is vertical in the sense of~\cref{def:subspace_position}.
\end{remark}
\begin{remark}
Notice that $T \left( \Delta \times \RR \right)$ may be considered as a distribution of $TQ \times \RR$ along the submanifold $ \Delta \times \RR $. Then, it is easy to show that the annihilator of the distribution $T \left( \Delta \times \RR \right)$ is given by $\pr_{\Delta \times \mathbb{R}}^{*} \ann{\left(T \Delta \right)}$ where $\pr_{\Delta \times \mathbb{R}}: \Delta \times \mathbb{R} \rightarrow \Delta$ denotes the projection on the first component. In fact, let $ \left( X , f \right)$ be a vector field on $ TQ \times \RR $, that is, for all $\left( v_{q} , z \right) \in T_{q}Q \times \RR$ we have that
\[ X \left( v_{q} , z \right) \in T_{v_{q}}\left( TQ \right) ; \quad f \left( v_{q} , z \right) \in T_{z}\RR \cong \RR.\]
Then, for each $\left( v_{q} , z \right) \in \Delta \times \RR$,  $\left( X , f \right) \left( v_{q} , z \right)$ is tangent to $\Delta \times \RR$ at $\left( v_{q} , z \right) $ if, and only if,
\begin{equation}\label{16}
\dd \Phi^{a}_{\vert v_{q} } \left(  X \left( v_{q} , z \right) \right) = 0 , \ \forall a.
\end{equation}
Denoting $\overline{\Phi}^{a} = \Phi^{a}\circ \pr_{\Delta \times \mathbb{R}}$, we may express Eq. (\ref{16}) as follows
\begin{equation}\label{23}
 Z \left( \overline{\Phi}^{a}\right) = 0, \ \forall a
\end{equation}
where $ Z =\left( X , f \right)$. It is important to notice that, being $\Delta = \left(\Phi^{a}\right)^{-1} \left( 0 \right)$, it satisfies that
\[T_{\left( v_{q} , z \right)} \left( \Delta \times \RR \right) = \ker \left( T_{v_{q}} \left(\Phi^{a}\right) \right) \times \RR.\]
\end{remark}

Let $\mathcal{S}$ be the distribution on $TQ \times \RR$ defined by $\sharp_{L} \left( \ann{\Delta^{l}}\right)$ where $\sharp_{L} = \flat_{L}^{-1}$.

In order to find a (local) basis of sections of $\mathcal{S}$, we will consider the 1-forms $\tilde{\Phi}^{a} $ generating $\ann{\Delta^{l}}$. For each $a$, $Z_{a}$ will be the local vector field on $TQ \times \RR$ satisfying
\begin{equation}\label{14}
\flat_L \left( Z_{a} \right) = \tilde{\Phi}^{a}.
\end{equation}
Then, $\mathcal{S}$ is  obviously (locally) generated by the vector fields $Z_{a}$ and $\mathcal{S} \subseteq \Delta^{l}$.

By using the proof of the theorem \ref{18} we have that
\begin{equation}
Z_{a}\left( q^{i} \right) = Z_{a} \left( z \right) = 0, \ \ \ \dfrac{\partial^{2}L}{\partial\dot{q}^i \partial \dot{q}^{k}} Z_{a} \left( \dot{q}^{k}\right) = - \Phi^a_i
\end{equation}
Then,
\begin{equation}\label{15}
Z_{a} = -W^{ik} \Phi^{a}_{k}\dfrac{\partial}{\partial \dot{q}^{i}},
\end{equation}
where $\left( W^{ik} \right)$ is the inverse of the Hessian matrix $\left( W_{ik}\right) = \left(\dfrac{\partial^{2}L}{\partial\dot{q}^i \partial \dot{q}^{k}}\right)$.
Notice that, taking into account that $\ann{\Delta^{l}}$ is generated by the 1-forms on $TQ \times \mathbb{R}$ given by $\tilde{\Phi}^{a}=\Phi^{a}_{i}{\dd q}^{i}$, it follows that
\begin{equation}\label{33}
\mathcal{S} \subseteq \Delta^{l}
\end{equation}

\begin{remark}[The distribution $\mathcal{S}$]\label{rem:S}
    Notice, that, since $\Delta^l$ is vertical (\cref{rem:DeltaL}), by \cref{lem:rl_complement} $\mathcal{S} = \orthL{(\Delta^l)} = \orth{(\Delta^l)}$ and $\mathcal{S}$ is horizontal. Hence $\eta_{L}(\mathcal{S})=0$.         
\end{remark}

Assume now that there exist two solutions $X$ and $Y$ of Eq. (\ref{6}). Then, by construction we have that $X-Y$ is tangent $T \left( \Delta \times \RR \right)$. On the other hand,
\[\flat_{L} \left (X - Y \right) = \flat_{L} \left (X - \Gamma_{L}\right) + \flat_{L} \left (\Gamma_{L} - Y \right) \in \ann{\Delta^{l}}.\]
Then, $X-Y$ is also tangent to $\mathcal{S}$. Thus, we may prove the folloeing result:
\begin{proposition}\label{Yoootraproposiciondelcarajomas}
The uniqueness of solutions of \cref{6} is equivalent to 
$$\mathcal{S} \cap T \left( \Delta \times \RR \right)= \{0\}.$$
\begin{proof}
Let $X$ be a solution of \cref{6}. Then, $X + \Gamma$ is a new solution of \cref{6} for any $\Gamma \in \mathcal{S} \cap T \left( \Delta \times \RR \right)$.
\end{proof}
\end{proposition}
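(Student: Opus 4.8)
The plan is to prove the equivalence by isolating the \emph{affine} structure of \cref{6}: both defining conditions are affine in $X$, so the solution set, whenever it is nonempty, is an affine subspace whose direction is governed precisely by $\mathcal{S} \cap T(\Delta \times \RR)$. Uniqueness then reduces to the triviality of this direction space.

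First I would make precise the computation already sketched in the paragraph preceding the statement. If $X$ and $Y$ both solve \cref{6}, then the second condition gives $X - Y \in T(\Delta \times \RR)$, while the linearity of $\flat_L$ together with $\flat_L(\Gamma_L) = \dd E_L - (E_L + \Reeb_L(E_L))\eta_L$ yields
\begin{equation*}
    \flat_L(X - Y) = \flat_L(X - \Gamma_L) + \flat_L(\Gamma_L - Y) \in \ann{\Delta^l}.
\end{equation*}
Since $\mathcal{S} = \sharp_L(\ann{\Delta^l})$ is by definition exactly the set of vector fields whose $\flat_L$-image lands in $\ann{\Delta^l}$, this says $X - Y \in \mathcal{S} \cap T(\Delta \times \RR)$. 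Hence, if the intersection is trivial, we immediately obtain $X = Y$, which settles the implication $(\Leftarrow)$.

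For the converse I would argue contrapositively, as in the displayed proof. Assuming a solution $X$ exists and choosing any $\Gamma \in \mathcal{S} \cap T(\Delta \times \RR)$, I would verify that $X + \Gamma$ again solves \cref{6}: the first condition persists because $\flat_L(X + \Gamma) = \flat_L(X) + \flat_L(\Gamma)$ and $\flat_L(\Gamma) \in \ann{\Delta^l}$ (as $\Gamma \in \mathcal{S}$), so adding $\Gamma$ does not leave the subspace $\ann{\Delta^l}$; the second condition persists because $\Gamma \in T(\Delta \times \RR)$, so $(X + \Gamma)_{\vert \Delta \times \RR}$ remains tangent to $\Delta \times \RR$. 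Thus every nonzero $\Gamma$ in the intersection produces a solution distinct from $X$, so uniqueness forces $\mathcal{S} \cap T(\Delta \times \RR) = \{0\}$.

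There is no genuine obstacle here once the affine structure is extracted; the only point deserving care is the implicit existence hypothesis. The $(\Rightarrow)$ direction manufactures a second solution only when a first one is available, so the equivalence must be read as concerning the solution set of \cref{6} under the standing assumption that it is nonempty — precisely the hypothesis \emph{``Let $X$ be a solution''} invoked in the proof. Equivalently, one may phrase the result as: the solution set of \cref{6} is either empty or an affine space modelled on $\mathcal{S} \cap T(\Delta \times \RR)$, and uniqueness of an existing solution is then equivalent to the triviality of that model space.
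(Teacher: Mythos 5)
Your proof is correct and follows essentially the same route as the paper: the forward computation $\flat_L(X-Y)=\flat_L(X-\Gamma_L)+\flat_L(\Gamma_L-Y)\in\ann{\Delta^{l}}$ appears almost verbatim in the paragraph immediately preceding the proposition, and the converse (adding $\Gamma\in\mathcal{S}\cap T(\Delta\times\RR)$ to a solution yields another solution) is exactly the paper's one-line proof. Your observation that the equivalence must be read under the standing assumption that a solution exists is a legitimate refinement that the paper leaves implicit.
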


if the intersection $\mathcal{S} \cap T \left( \Delta \times \RR \right)$ were zero, we would be able to ensure the uniqueness of solutions.\\
Let $X =X^{b}Z_{b} $ be a vector field on $\Delta \times \RR$ tangent to $\mathcal{S}$. Hence, by Eq. (\ref{23}), we have that
\[ X^{b} \dd\overline{\Phi}^{a}\left( Z_{b} \right) = 0.\]
Equivalently,
\[ X^{b}W^{ik}\Phi^{b}_{k}\Phi^{a}_{i} =  0, \ \forall a.\]
Define the (local) matrix $\mathcal{C}$ with coeficient

\begin{equation}\label{lamatrizcdecarajo}
\mathcal{C}_{ab} = - W^{ik}\Phi^{b}_{k}\Phi^{a}_{i} = \dd\Phi^{b}\left( Z_{a}\right)
\end{equation}
Then, it is easy to prove that (locally) the regularity of $\mathcal{C}$ is equivalent $\mathcal{S} \cap T \left( \Delta \times \RR \right)= \{0\}$.\\
One can easily verify that if the Hessian matrix $\left( W_{ik} \right)$ is positive or negative definite this condition is satisfied.\\\\
From now on we will assume that the Hessian matrix $\left( W_{ik} \right)$ is positive (or negative) definite.\\

\begin{remark}
\rm
In general, we may only assume that the matrices $\mathcal{C}$ are regular. However, for applications, in the relevant cases the Hessian matrix $\left( W_{ik} \right)$ is positive definite. In particular, if the Lagrangian $L$ is natural, that is, $L = T+ V\left( q, z \right)$, where $T$ is the \textit{kinetic energy} of a Riemannian metric $g$ on $Q$ and $V$ is a \textit{potential energy}, then the Lagrangian $L$ will be positive definitive.\\

\end{remark}
\noindent{Notice that, for each $\left( v_{q} , z \right) \in \Delta \times \RR$ we have that}
\begin{itemize}
\item $\dim \left( S_{ \vert \left( v_{q} , z \right)} \right) = k$

\item $\dim \left(T_{ \left( v_{q} , z \right)} \left( \Delta \times \RR \right)\right) = 2n +1 -k $
\end{itemize}
So, the condition of being positive (or negative) definite not only implies that $\mathcal{S} \cap T \left( \Delta \times \RR \right)= \{0\}$ but also we have
\begin{equation}\label{41}
\mathcal{S} \oplus T \left( \Delta \times \RR \right) = T_{\Delta \times \RR}\left( TQ \times \RR \right),
\end{equation}
where $T_{\Delta \times \RR}\left( TQ \times \RR \right)$ consists of the tangent vectors of $TQ \times \RR $ at points of $\Delta \times \RR $.\\
Thus, the uniqueness condition will imply the existence of solutions of \cref{6}. In fact, we will also be able to obtain the solutions of Eq. (\ref{6}) in a very simple way. In fact, let us consider the two projectors
\begin{subequations}\label{projectors2}
\begin{align}
\mathcal{P}: T_{\Delta \times \RR}\left( TQ \times \RR \right) &\rightarrow T \left(  \Delta \times \RR \right),\\
\mathcal{Q} :  T_{\Delta \times \RR}\left( TQ \times \RR \right) &\rightarrow \mathcal{S}.
\end{align}
\end{subequations}

Consider $X = \mathcal{P} \left( {\Gamma_{L}}_{\vert \Delta \times\RR} \right)$. Then, by definition $X \in  \frak X \left(\Delta \times \RR \right)$. On the other hand, at the points in $\Delta \times\RR$ we have
\begin{align*}
 &\flat_{L} \left(X \right) - \dd E_L + \left(E_L + \Reeb_L\left(E_L\right)\right)\eta_L \\ =&
\flat_{L} \left(\Gamma_{L} - \mathcal{Q}\left( \Gamma_{L} \right) \right) - \dd E_L + \left(E_L + \Reeb_L\left(E_L\right)\right)\eta_L\\ =&
 -\flat_{L} \left(\mathcal{Q}\left( \Gamma_{L} \right) \right) \in \ann{\Delta^{l}}
\end{align*}

\noindent{Therefore, by uniqueness, $X_{\vert \Delta \times\RR} = \Gamma_{L , \Delta}$ is a solution of Eq. (\ref{6}).}\\\\
Let us now compute an explicit expression of the solution $\Gamma_{L,\Delta}$. Let $Y$ be a vector field on $TQ \times \RR$. Then, choosing a local basis $\{ \beta_{i}\}$ of $T \left( \Delta \times \RR \right)$ we may write the restriction of $Y$ to $\Delta \times \RR$ as follows
\[ Y_{\vert \Delta \times \RR} = Y^{i}\beta_{i} + \lambda^{a}Z_{a}.\]
Then, applying $\dd \overline{\Phi}^{b}$ we have that
\[ \dd \overline{\Phi}^{b} \left( Y \right) = \lambda^{a} C_{ba},\]
and we can compute the coefficients $\lambda^{a}$ as follows
\begin{equation}\label{19}
\lambda^{a} = C^{ba}\dd \overline{\Phi}^{b} \left( Y\right)
\end{equation}
Hence, for all vector field $Y$ on $TQ \times \RR$ restricted to $\Delta \times \RR$
\begin{itemize}
\item $\mathcal{Q} \left( Y_{\vert \Delta \times \RR} \right) = C^{ba}\dd \overline{\Phi}^{b} \left( Y \right)Z_{a}.$

\item $\mathcal{P} \left( Y_{\vert \Delta \times \RR} \right) = Y_{\vert \Delta \times \RR} - C^{ba}\dd \overline{\Phi}^{b} \left( Y \right)Z_{a}.$
\end{itemize}

Therefore, we have obtained the explicit expression of the solution $\Gamma_{L, \Delta}$,
\begin{equation}\label{20}
\Gamma_{L,\Delta} = \left(\Gamma_{L}\right)_{\vert \Delta \times \RR} - C^{ba}\dd \overline{\Phi}^{b} \left( \Gamma_{L} \right)Z_{a}
\end{equation}
\begin{remark}
\rm
From the regularity of the matrices $C$ , we deduce that the projections $\mathcal{P}$ and $\mathcal{Q}$ may be extended to open neighborhoods of $\Delta \times \RR$. Consequently, $\mathcal{P} \left( \Gamma_{L} \right)$ may also be extended to an open neighborhood of $\Delta \times \RR$. However, this extension will not be unique.\\
\end{remark}

From the projectors $\mathcal{P}$ and $\mathcal{Q}$ defined in Eq. (\ref{projectors2}) we may construct a new pair of projectors $\overline{\mathcal{P}}$ and $\overline{\mathcal{Q}}$ acting on the covectors. These projectors will transform \cref{6} into an exact equation (see theorem \ref{21}).

Notice that, as a consequence of the regularity we have that
$$ T^{*}_{\Delta \times \RR}\left( TQ \times \RR \right) = \overline{\mathcal{S}} \oplus \ann{\Delta^{l}},$$
where $\overline{\mathcal{S}} = \flat_{L}\left( T \left(\Delta \times \RR \right)\right)$ and $T^{*}_{\Delta \times \RR} \left( TQ \times \RR \right)$ are the $1-$forms on $TQ \times \RR$ at points of $\Delta \times \RR$. Notice that, by construction, $\flat_{L}\left( \mathcal{S} \right) = \ann{\Delta^{l}}$. Then, for all $a_{\left( v_{q},z\right)} \in T^{*}_{\left( v_{q},z\right)} \left( TQ \times \RR \right)$ with $\left( v_{q},z\right) \in \Delta \times \RR$, the associated projections $\overline{\mathcal{Q}}: T^{*}_{\Delta \times \RR}\left( TQ \times \RR \right) \rightarrow \ann{\Delta^{l}}$ and $\overline{\mathcal{P}}: T^{*}_{\Delta \times \RR}\left( TQ \times \RR \right) \rightarrow \overline{\mathcal{S}}$ are given by
\begin{itemize}
\item $\overline{\mathcal{Q}} \left( a_{\left( v_{q},z\right)}\right) = \flat_{L}\left(\mathcal{Q}\left(\sharp_{L} \left( a_{\left( v_{q},z\right)}\right)\right)\right)$.

\item $\overline{\mathcal{P}} \left( a_{\left( v_{q},z\right)}\right) = \flat_{L}\left(\mathcal{P}\left(\sharp_{L} \left( a_{\left( v_{q},z\right)}\right)\right)\right)$.
\end{itemize}
So, the constraint Herglotz equations may be rewritten as follows
\begin{theorem}\label{21}
Assume that $L$ is regular. Let $X$ be a vector field on $TQ \times \RR$ satisfying the equations
\begin{equation}
    \begin{cases}\label{22}
        \flat_{L}\left( X \right) = \overline{\mathcal{P}} \left( \dd E_L + \left(E_L + \Reeb_L\left(E_L\right)\right)\eta_L\right) \\
        X_{\vert \Delta \times \RR} \in  \frak X \left(\Delta \times \RR \right).
    \end{cases}
\end{equation}
Then, the integral curves of $X$ are solutions of the constraint Herglotz equations (\ref{eq:nonholonomic_herglotz_eqs_coords}), that is, $X = \Gamma_{L, \Delta}$ is the solution of Eq. (\ref{6}).
\end{theorem}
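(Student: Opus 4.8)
The plan is to show that~\eqref{22} is merely a covector reformulation of~\eqref{6}, so that the conclusion follows at once from \cref{18}. The engine of the whole argument is a single identity relating the two pairs of projectors: from the very definitions $\overline{\mathcal{P}}(a) = \flat_{L}(\mathcal{P}(\sharp_{L}(a)))$ and $\overline{\mathcal{Q}}(a) = \flat_{L}(\mathcal{Q}(\sharp_{L}(a)))$ together with $\sharp_{L} = \flat_{L}^{-1}$, one reads off $\overline{\mathcal{P}} \circ \flat_{L} = \flat_{L} \circ \mathcal{P}$ and $\overline{\mathcal{Q}} \circ \flat_{L} = \flat_{L} \circ \mathcal{Q}$. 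In other words $\overline{\mathcal{P}}$ and $\overline{\mathcal{Q}}$ are the conjugates of $\mathcal{P}$ and $\mathcal{Q}$ by the isomorphism $\flat_{L}$, and they are exactly the projectors attached to the dual splitting $T^{*}_{\Delta \times \RR}(TQ \times \RR) = \overline{\mathcal{S}} \oplus \ann{\Delta^{l}}$, which is the $\flat_{L}$-image of~\eqref{41}.

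First I would verify that $\Gamma_{L,\Delta}$ itself solves~\eqref{22}. Recall from~\eqref{51} that $\flat_{L}(\Gamma_{L}) = \dd E_{L} - (E_{L} + \Reeb_{L}(E_{L}))\eta_{L}$, so the right-hand side of~\eqref{22} is precisely $\overline{\mathcal{P}}(\flat_{L}(\Gamma_{L}))$. Applying the conjugation identity and the explicit formula~\eqref{20} then gives $\overline{\mathcal{P}}(\flat_{L}(\Gamma_{L})) = \flat_{L}(\mathcal{P}((\Gamma_{L})_{\vert \Delta \times \RR})) = \flat_{L}(\Gamma_{L,\Delta})$, which is the first line of~\eqref{22}. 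Since $\Gamma_{L,\Delta}$ lies in the range of $\mathcal{P}$, namely $T(\Delta \times \RR)$, the tangency condition in~\eqref{22} holds automatically. Hence $\Gamma_{L,\Delta}$ is a solution.

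To finish, I would run this backwards to obtain the equivalence with~\eqref{6}. If $X$ satisfies~\eqref{22}, then $\flat_{L}(X)$ lies in the range of $\overline{\mathcal{P}}$, which is $\overline{\mathcal{S}} = \flat_{L}(T(\Delta \times \RR))$; therefore $X$ is tangent to $\Delta \times \RR$, recovering the second line of~\eqref{6}. Moreover, using $\overline{\mathcal{P}} + \overline{\mathcal{Q}} = \mathrm{id}$ on $T^{*}_{\Delta \times \RR}(TQ \times \RR)$ (the $\flat_{L}$-dual of $\mathcal{P} + \mathcal{Q} = \mathrm{id}$), we compute $\flat_{L}(X) - \flat_{L}(\Gamma_{L}) = \overline{\mathcal{P}}(\flat_{L}(\Gamma_{L})) - \flat_{L}(\Gamma_{L}) = -\overline{\mathcal{Q}}(\flat_{L}(\Gamma_{L})) \in \ann{\Delta^{l}}$, which is exactly the first line of~\eqref{6}. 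Thus every solution of~\eqref{22} is a solution of~\eqref{6}, so by \cref{18} its integral curves satisfy the constraint Herglotz equations~\eqref{eq:nonholonomic_herglotz_eqs_coords}, and by the uniqueness of solutions of~\eqref{6} established above we conclude $X = \Gamma_{L,\Delta}$.

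The argument requires no genuine computation once the conjugation identity is in hand; the only point demanding attention is the bookkeeping of the two complementary decompositions and, in particular, recognising that the object projected in~\eqref{22} is the full energy covector $\flat_{L}(\Gamma_{L})$, with its $\eta_{L}$-term included, and not $\dd E_{L}$ alone. Everything then reduces to the projector identities on~\eqref{41} and its $\flat_{L}$-dual, which is precisely why this reformulation is a faithful, coordinate-free translation of~\eqref{6} from vectors to covectors.
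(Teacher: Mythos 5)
Your proof is correct and is essentially the argument the paper intends: the paper states \cref{21} without any written proof, treating it as an immediate consequence of the preceding construction, and your conjugation identity $\overline{\mathcal{P}}\circ\flat_{L}=\flat_{L}\circ\mathcal{P}$ (with $\mathrm{im}\,\overline{\mathcal{P}}=\overline{\mathcal{S}}=\flat_{L}(T(\Delta\times\RR))$ and $\mathrm{im}\,\overline{\mathcal{Q}}=\ann{\Delta^{l}}$) is exactly what makes that consequence explicit, reducing everything to \cref{18} and the uniqueness already established. One point deserves a word of justification rather than silence: you read the right-hand side of \eqref{22} as $\overline{\mathcal{P}}(\flat_{L}(\Gamma_{L}))$, which requires the argument of $\overline{\mathcal{P}}$ to be $\dd E_{L}-(E_{L}+\Reeb_{L}(E_{L}))\eta_{L}$ as in \eqref{51}, whereas \eqref{22} as printed carries a $+$ sign. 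That sign must be a typo --- with the literal $+$ one picks up the extra term $2(E_{L}+\Reeb_{L}(E_{L}))\,\overline{\mathcal{P}}(\eta_{L})$, which does not vanish in general because $\Reeb_{L}$ need not be tangent to $\Delta\times\RR$ --- so your reading is the correct one, but the discrepancy should be noted explicitly.
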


Let us recall that the contact Hamiltonian vector fields model the dynamics of dissipative
systems and, contrary to the case of symplectic Hamiltonian systems, the evolution does not preserve the energy, the contact form and the volume (\cref{prop:energy_disipation}), i.e.,
\begin{align*}
    \lieD{\Gamma_L} E_L &= -\Reeb_L (E_L) E_L, \\
    \mathcal{L}_{\Gamma_{L}} \eta_{L} &= -\mathcal{R}_{L} \left( E_{L}\right)\eta_{L}.
\end{align*}
This result may be naturally generalized to the case of non-holonomic constraint by using these projectors.
\begin{proposition}\label{39}
Assume that $L$ is regular. The vector field $\Gamma_{L, \Delta}$ solving the constraint Herglotz equations satisfies that

\begin{subequations}
    \begin{align}
        \mathcal{L}_{\Gamma_{L,\Delta}} \eta_{L} &= 
          - \mathcal{R}_{L} \left( E_{L}\right)\eta_{L}- \lieD{\mathcal{Q}(\Gamma_{L})} \eta_{L},\\
        \mathcal{L}_{\Gamma_{L,\Delta}} \tilde{\eta}_{L} &= 
          - \frac{\lieD{\mathcal{Q}(\Gamma_{L})} \eta_{L}}{H} \\
        \lieD{\Gamma_{L,\Delta}} \Omega_{L} &= -(n+1)\Reeb_L(E_L) \Omega_L  - \eta_L \wedge {\dd \eta_L}^{(n-1)} \wedge \dd \mathcal{L}_{\mathcal{Q}(\Gamma_{L})} \eta_{L}\\
        \lieD{\Gamma_{L,\Delta}} \tilde{\Omega}_{L} &= \tilde{\eta}_L \wedge {\dd \tilde{\eta}_L}^{(n-1)} \wedge \dd \mathcal{L}_{\mathcal{Q}(\Gamma_{L})} \tilde{\eta}_{L}
    \end{align}
\end{subequations}

where $\tilde{\eta}_L = \eta_L/H$, assuming that $H$ does not vanish,  $\Omega_L = \eta_L \wedge {(\dd \eta_L)}^{n}$ is the contact volume element and $\tilde{\Omega}_L = \eta_L \wedge {(\dd \eta_L)}^{n}$.

Furthermore, we have that $\lieD{\mathcal{Q}\left(\Gamma_{L}\right) \eta_{L}} \in  \ann{\Delta^{l}} $.
\begin{proof}
The first fact follows from $\mathcal{L}_{\Gamma_{L}} \eta_{L} = -\mathcal{R}_{L} \left( E_{L}\right)\eta_{L}$.

The second claim follows from the product rule.

For the third claim, we perform the following computation.
\begin{small}
\begin{equation*}
    \begin{aligned}
        \lieD{\Gamma_{L,\Delta}}( \eta_L \wedge (\dd \eta_L)^n ) &= 
        - \Reeb(E_L) \eta_L \wedge (\dd \eta_L)^n  + 
        n \eta_L \wedge (\dd \eta_L)^{n-1} \wedge \dd \lieD{\Gamma_{L,\Delta}}(\eta_L ) \\ &= 
        -(n+1) \Reeb(E_L) \eta_L \wedge (\dd \eta_L)^n -
        \eta_L \wedge {\dd \eta_L}^{(n-1)} \wedge \dd \mathcal{L}_{\mathcal{Q}(\Gamma_{L})} \eta_{L},
    \end{aligned}
\end{equation*}
\end{small}
A similar computation proves the forth claim.

The last assertion is proved by writing $\mathcal{Q}\left(\Gamma_{L}\right)$ in coordinates depending on the vector fields $Z_{a}$.
\end{proof}

\end{proposition}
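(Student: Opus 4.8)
The plan is to reduce everything to the already-established unconstrained dissipation laws of \cref{prop:energy_disipation} by means of the decomposition $\Gamma_{L,\Delta}=(\Gamma_L)_{\vert\Delta\times\RR}-\mathcal{Q}(\Gamma_L)$ provided by \cref{20}. Since the Lie derivative is $\RR$-linear in its vector-field slot, each identity in the statement takes the shape
\begin{equation*}
    \lieD{\Gamma_{L,\Delta}}\theta=\lieD{\Gamma_L}\theta-\lieD{\mathcal{Q}(\Gamma_L)}\theta,
\end{equation*}
with $\theta$ running through $\eta_L,\tilde\eta_L,\Omega_L,\tilde\Omega_L$. The first term on the right is read off directly from \cref{prop:energy_disipation} (which gives $\lieD{\Gamma_L}\eta_L=-\Reeb_L(E_L)\eta_L$, $\lieD{\Gamma_L}\tilde\eta_L=0$, $\lieD{\Gamma_L}\Omega_L=-(n+1)\Reeb_L(E_L)\Omega_L$ and $\lieD{\Gamma_L}\tilde\Omega_L=0$), so the entire proposition comes down to computing the correction $\lieD{\mathcal{Q}(\Gamma_L)}\theta$.

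I would establish the final assertion first, as it is the engine of the whole argument. Because $\mathcal{Q}(\Gamma_L)\in\mathcal{S}$ and $\mathcal{S}$ is locally spanned by the vector fields $Z_a$ of \cref{15}, I write $\mathcal{Q}(\Gamma_L)=\mu^a Z_a$. Since $\mathcal{S}$ is horizontal (\cref{rem:S}) we have $\eta_L(Z_a)=0$, so Cartan's magic formula gives $\lieD{Z_a}\eta_L=\contr{Z_a}\dd\eta_L=\flat_L(Z_a)=\tilde\Phi^a$ by \cref{14}. Hence
\begin{equation*}
    \lieD{\mathcal{Q}(\Gamma_L)}\eta_L=\mu^a\tilde\Phi^a\in\ann{\Delta^l},
\end{equation*}
which is the last claim; the first identity of the proposition is then immediate. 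For the second I would use $\lieD{\Gamma_L}\tilde\eta_L=0$, so that $\lieD{\Gamma_{L,\Delta}}\tilde\eta_L=-\lieD{\mathcal{Q}(\Gamma_L)}(\eta_L/H)$, and then invoke the Leibniz rule together with $\mathcal{Q}(\Gamma_L)(E_L)=0$ on $\Delta\times\RR$ (which follows from $Z_a(E_L)=-\Phi^a(\dot\xi)$ and $\Phi^a(\dot\xi)=0$ on the constraint) to annihilate the summand in which the derivation hits $1/H$, leaving exactly $-\lieD{\mathcal{Q}(\Gamma_L)}\eta_L/H$.

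For the two volume forms I would expand $\lieD{\mathcal{Q}(\Gamma_L)}(\eta_L\wedge(\dd\eta_L)^n)$ by the Leibniz rule, commuting $\lieD{}$ past $\dd$. Two vanishing facts do the work: $\eta_L\wedge\eta_L=0$ removes the terms carrying a factor $\dd(\Reeb_L(E_L))\wedge\eta_L$, and, crucially, $\beta\wedge(\dd\eta_L)^n=0$ for every $\beta\in\ann{\Delta^l}$. The latter holds because $\ann{\Delta^l}=\flat_L(\mathcal{S})$ with $\mathcal{S}$ horizontal, so such a $\beta$ equals $\contr{Z}\dd\eta_L$ with $\eta_L(Z)=0$, whence $\beta\wedge(\dd\eta_L)^n=\tfrac{1}{n+1}\contr{Z}(\dd\eta_L)^{n+1}=0$ since $\dd\eta_L$ has rank $2n$. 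The computation for $\tilde\Omega_L$ is identical in spirit, now using that $\tilde\eta_L$ is a contact form with Reeb field $-\Gamma_L$ (\cref{prop:energy_disipation}) together with the top-form identity $\gamma\wedge(\dd\tilde\eta_L)^n=\gamma(-\Gamma_L)\,\tilde\Omega_L$; since $\tilde\Phi^a(\Gamma_L)=\Phi^a(\dot\xi)=0$ on $\Delta\times\RR$, the term $(\lieD{\mathcal{Q}(\Gamma_L)}\tilde\eta_L)\wedge(\dd\tilde\eta_L)^n$ drops out and only the $\dd\lieD{\mathcal{Q}(\Gamma_L)}\tilde\eta_L$ contribution survives.

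The main obstacle is not conceptual but bookkeeping. Several of the cancellations ($\mathcal{Q}(\Gamma_L)(E_L)=0$, $\tilde\Phi^a(\Gamma_L)=0$, and indeed the decomposition of \cref{20} itself) are valid only along the submanifold $\Delta\times\RR$, so the whole computation must be carried out as an identity of forms restricted there; and matching the surviving wedge terms to the precise right-hand sides of the statement requires careful tracking of the multiplicities produced when differentiating $(\dd\eta_L)^n$ and $(\dd\tilde\eta_L)^n$.
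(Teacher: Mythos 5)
Your proposal is correct and follows essentially the same route as the paper: decompose $\Gamma_{L,\Delta}=(\Gamma_L)_{\vert\Delta\times\RR}-\mathcal{Q}(\Gamma_L)$, feed the $\Gamma_L$ part into \cref{prop:energy_disipation}, and control the correction term by writing $\mathcal{Q}(\Gamma_L)=\mu^a Z_a$ so that $\lieD{\mathcal{Q}(\Gamma_L)}\eta_L=\mu^a\tilde{\Phi}^a\in\ann{\Delta^l}$; you merely make explicit the vanishing facts ($\beta\wedge(\dd\eta_L)^n=0$ for $\beta\in\ann{\Delta^l}$, and $\mathcal{Q}(\Gamma_L)(E_L)=0$ along $\Delta\times\RR$) that the paper leaves implicit. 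One small remark: carrying out your Leibniz expansion honestly yields a coefficient $n$ in front of $\eta_L\wedge(\dd\eta_L)^{n-1}\wedge\dd\lieD{\mathcal{Q}(\Gamma_L)}\eta_L$ (and likewise in the fourth identity), a factor which the paper's own displayed computation generates in its first line and then silently drops in the second.
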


\section{Non-holonomic bracket}\label{sec:nonholonomic_brackets}
Consider a regular contact Lagrangian system with Lagrangian $L:TQ \times \RR \to \RR$ and constraints $\Delta$ satisfying the conditions in \cref{sec:herglotz_constrained}. A bracket can be constructed by means of the decomposition~\cref{41}.

Let us first consider the adjoint operators $\mathcal{P}^*$ and $\mathcal{Q}^*$ of the projections $\mathcal{P}$ and $\mathcal{Q}$, respectively. Obviously, the maps $\mathcal{P}^* :T^{*}_{\Delta \times \RR} \left( TQ \times \RR \right) \rightarrow \ann{\mathcal{S}}$ and $\mathcal{Q}^* :T^{*}_{\Delta \times \RR} \left( TQ \times \RR \right) \rightarrow \ann{T} \left( \Delta \times \RR \right)$ produce a decomposition of $T^{*}_{\Delta \times \RR} \left( TQ \times \RR \right)$
\begin{equation}\label{43}
T^{*}_{\Delta \times \RR} \left( TQ \times \RR \right) = \ann{\mathcal{S}}\oplus \ann{T} \left( \Delta \times \RR \right)
\end{equation}

Notice that, using the proof of \cref{18}, we know that a solution $\Gamma_{L, \Delta}$ of the \cref{6} may be written (locally) as follows
\[ \Gamma_{L, \Delta} = \dot{q}^{i} \pdv{}{q^{i}} + \Gamma_{L , \Delta} \left( \dot{q}^{i}\right) \pdv{}{\dot{q}^{i}} + L \pdv{}{z},\]
where $\Gamma_{L,\Delta}\left( \dot{q}^{i}\right)$ satisfies that

\begin{equation*}
    \Gamma_{L,\Delta}\left( \dot{q}^{k}\right) \dfrac{\partial^{2}L}{\partial\dot{q}^k \partial \dot{q}^{i}} = - \dot{q}^{k}\dfrac{\partial^{2}L}{\partial  q^k \partial \dot{q}^{i}} - L \dfrac{\partial^{2}L}{\partial z \partial \dot{q}^{i}} + \dfrac{\partial  L }{\partial q^{i}} +\dfrac{\partial  L }{\partial \dot{q}^{i}}\dfrac{\partial  L }{\partial z} +\lambda_a\Phi^a_i
\end{equation*}

On the other hand, recall that the codistribution $\ann{\Delta^{l}}$ is generated by the 1-forms on $TQ \times \mathbb{R}$ given by $\tilde{\Phi}^{a}=\Phi^{a}_{i}{\dd q}^{i}$. Thus, for each $\left( v_{q} , z \right) \in T_{q}Q \times \mathbb{R}$ we have that
\begin{eqnarray*}
\left[ \tilde{\Phi}^{a}\left( \Gamma_{L, \Delta} \right) \right]  \left( v_{q} , z \right)  & = &  \left[ \dot{q}^{i} \Phi^{a}_{i}\right]  \left( v_{q} , z \right)\\
& = &  v^{i}_{q}\Phi^{a}_{i}  \left( q\right)\\
& = & \Phi^{a} \left( v_{q}\right)
\end{eqnarray*}
with $v^{i}_{q} = \dot{q}^{i} \left( v_{q}\right) $ for all $i$. Hence, by construction (see Eq. (\ref{5})), we have that for each $\left( v_{q} , z \right) \in \Delta_{q} \times \mathbb{R}$
$$\{\tilde{\Phi}^{a}\left( \Gamma_{L, \Delta} \right) \}  \left( v_{q} , z \right)  = \Phi^{a} \left( v_{q}\right) = 0,$$
i.e.,
\begin{equation}\label{24}
\Gamma_{L , \Delta} \in \Delta^{l}.
\end{equation}
Then, we have proved that any solution $\Gamma_{L, \Delta}$ to Eq. (\ref{6}) is tangent to the intersection  of $T \left( \Delta \times \mathbb{R} \right)$ with $\Delta^{l}$.\\
\begin{theorem}\label{25}
A vector field $X$ on $TQ \times \mathbb{R}$ satisfies
$$\flat_L\left(X\right) - \dd E_L + \left(E_L + \Reeb_L\left(E_L\right)\right)\eta_L \in \ann{\Delta^{l}}$$
if, and only if,
\begin{equation}
    \begin{cases}\label{26}
    \mathcal{L}_{X}\eta_{L} + \Reeb_L\left(E_{L}\right) \eta_{L} \in \ann{\Delta^{l}} \\
        \eta_{L} \left(  X\right) = - E_{L}.
    \end{cases},
\end{equation}
\begin{proof}
Assume that $X $ fulfills
$$\flat_L\left(X\right) - \dd E_L + \left(E_L + \Reeb_L\left(E_L\right)\right)\eta_L \in \ann{\Delta^{l}}.$$
Then, applying $\sharp_{L}$ on both sides, we have that
\begin{equation}\label{40}
X - \sharp_{L}\left(\dd E_L\right) + \left(E_L + \Reeb_L\left(E_L\right)\right)\Reeb_L \in \mathcal{S}.
\end{equation}
Now, let us apply $\eta_{L}$ to \cref{40} and, using that $\eta_L (\mathcal{S}) = 0$ (\cref{rem:S}), then we conclude that
\begin{equation}\label{30}
\eta_{L}\left( X \right)  - \eta_{L} \left( \sharp_{L}\left(\dd E_L\right)\right) + \left(E_L + \Reeb_L\left(E_L\right)\right) = 0
\end{equation}
Observe that, by definition
$$\iota_{\sharp_{L}\left(\dd E_L\right)} \dd \eta_{L} +\eta_{L} \left( \sharp_{L}\left(\dd E_L\right) \right) \eta_{L} = \dd E_{L}.$$
So, 
\begin{equation}\label{27}
\dd E_{L} \left( \Reeb_L \right) = \Reeb_L\left( E_{L}\right) = \eta_{L} \left( \sharp_{L}\left(\dd E_L\right) \right).
\end{equation}
Therefore, Eq. (\ref{30}) turns into the following equation
\begin{equation}\label{31}
\eta_{L}\left( X \right)  =- E_{L}
\end{equation}
On the other hand,
\begin{eqnarray*}
\mathcal{L}_{X} \eta_{L} & = & \iota_{X} \dd \eta_{L} + \dd \left(\eta_{L}\left( X \right)\right)\\
& = &  \flat_L \left( X \right) - \eta_{L}\left( X \right) \eta_{L} - \dd E_{L}\\
& = &  \flat_L \left( X \right) + E_{L} \eta_{L} - \dd E_{L}\\
\end{eqnarray*} 
Then,
\begin{eqnarray*}
\mathcal{L}_{X}\eta_{L} + \Reeb_L\left(E_{L}\right) \eta_{L} & = & \flat_L \left( X \right) + E_{L} \eta_{L} - \dd E_{L} + \Reeb_L\left(E_{L}\right) \eta_{L}\\
&=& \flat_L \left( X \right) - \dd E_{L} + \left( E_{L} + \Reeb_L\left(E_{L}\right)\right) \eta_{L} \in \ann{\Delta^{l}}
\end{eqnarray*}
The other implication is proved by using this equation next to Eq. (\ref{31}).
\end{proof}
\end{theorem}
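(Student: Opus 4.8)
The plan is to exploit two elementary identities: the definition $\flat_L(X) = \iota_X \dd\eta_L + \eta_L(X)\,\eta_L$ and Cartan's magic formula $\mathcal{L}_X \eta_L = \iota_X \dd\eta_L + \dd(\eta_L(X))$. Solving the first for $\iota_X \dd\eta_L$ and substituting into the second gives the master identity
\[
  \mathcal{L}_X \eta_L = \flat_L(X) - \eta_L(X)\,\eta_L + \dd(\eta_L(X)),
\]
valid for \emph{any} vector field $X$. Whenever $\eta_L(X) = -E_L$, this collapses to $\mathcal{L}_X\eta_L = \flat_L(X) + E_L\,\eta_L - \dd E_L$, and adding $\Reeb_L(E_L)\eta_L$ to both sides yields exactly $\flat_L(X) - \dd E_L + (E_L + \Reeb_L(E_L))\eta_L$. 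This single computation drives both implications; the only genuine work is to recover the scalar condition $\eta_L(X) = -E_L$ from the hypothesis.

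To recover that condition, I would first apply $\sharp_L = \flat_L^{-1}$ to the membership, using $\sharp_L(\ann{\Delta^{l}}) = \mathcal{S}$ and $\sharp_L(\eta_L) = \Reeb_L$ (the latter since $\flat_L(\Reeb_L) = \eta_L$), obtaining
\[
  X - \sharp_L(\dd E_L) + (E_L + \Reeb_L(E_L))\Reeb_L \in \mathcal{S}.
\]
Contracting with $\eta_L$ and invoking $\eta_L(\mathcal{S}) = 0$ (\cref{rem:S}) together with $\eta_L(\Reeb_L)=1$ leaves the scalar relation $\eta_L(X) - \eta_L(\sharp_L(\dd E_L)) + E_L + \Reeb_L(E_L) = 0$.

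The key computation is then $\eta_L(\sharp_L(\dd E_L)) = \Reeb_L(E_L)$. I would obtain it by writing $\flat_L(\sharp_L(\dd E_L)) = \dd E_L$ explicitly and evaluating on the Reeb field: since $\iota_{\Reeb_L}\dd\eta_L = 0$ and $\eta_L(\Reeb_L) = 1$, contraction with $\Reeb_L$ kills the $\dd\eta_L$ term and gives $\eta_L(\sharp_L(\dd E_L)) = \dd E_L(\Reeb_L) = \Reeb_L(E_L)$. Substituting back cancels the two $\Reeb_L(E_L)$ contributions and delivers the second condition $\eta_L(X) = -E_L$. With this established, the master identity turns the hypothesis directly into the first condition $\mathcal{L}_X\eta_L + \Reeb_L(E_L)\eta_L \in \ann{\Delta^{l}}$.

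For the converse I assume both conditions. The relation $\eta_L(X) = -E_L$ is precisely what licenses the collapsed form of the master identity, so that $\flat_L(X) - \dd E_L + (E_L + \Reeb_L(E_L))\eta_L$ coincides with $\mathcal{L}_X\eta_L + \Reeb_L(E_L)\eta_L$, which lies in $\ann{\Delta^{l}}$ by the first condition. I expect the only subtle point to be the identity $\eta_L(\sharp_L(\dd E_L)) = \Reeb_L(E_L)$ and the careful tracking of the $\sharp_L/\flat_L$ correspondence between $\ann{\Delta^{l}}$ and $\mathcal{S}$; everything else reduces to substitution into Cartan's formula.
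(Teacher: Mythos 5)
Your proposal is correct and follows essentially the same route as the paper's own proof: apply $\sharp_L$ to the hypothesis, contract with $\eta_L$ using $\eta_L(\mathcal{S})=0$ and the identity $\eta_L(\sharp_L(\dd E_L))=\Reeb_L(E_L)$ to extract $\eta_L(X)=-E_L$, then combine Cartan's formula with the definition of $\flat_L$ to identify $\mathcal{L}_X\eta_L+\Reeb_L(E_L)\eta_L$ with the expression in the hypothesis. Packaging the Cartan computation as a ``master identity'' valid for all $X$ is a clean way to make the two implications share the same core step, but it is the same argument.
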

Then, as a corollary, we have another geometric equation equivalent to Eq. (\ref{6}).

\begin{corollary}\label{35}
A vector field $X$ on $TQ \times \mathbb{R}$ satisfies Eq. (\ref{6}) if, and only if,
\begin{equation}
    \begin{cases}\label{32}
    \mathcal{L}_{X}\eta_{L} + f \eta_{L} \in \ann{\Delta^{l}} \\
        \eta_{L} \left(  X\right) = - E_{L}\\
        X_{\vert \Delta \times \RR} \in  \frak X \left(\Delta \times \RR \right).
    \end{cases},
\end{equation}
for some function $f$.
\begin{proof}
Notice that $\Reeb_{L} \in \Delta^{l}$. Then,
$$0 = \left(\mathcal{L}_{X}\eta_{L} + f \eta_{L} \right) \left( \Reeb_{L} \right)= \mathcal{L}_{X}\eta_{L} \left( \Reeb_{L}\right) + f .$$
Therefore,
\begin{eqnarray*}
f &=& - \mathcal{L}_{X}\eta_{L} \left( \Reeb_{L} \right) \\
&=&- \left(\dd \left( \eta_{L} \left(  X\right) \right) \right) \left( \Reeb_{L} \right) - \left(\iota_{X} \dd \eta_{L} \right) \left( \Reeb_{L} \right)\\
 & = & \left(\dd E_{L} \right) \left( \Reeb_{L} \right)\\
 &=& \Reeb_{L} \left( E_{L }\right)
\end{eqnarray*}

\end{proof}
\end{corollary}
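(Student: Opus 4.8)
The plan is to derive this corollary from Theorem~\ref{25}, which already identifies the first line of Eq.~(\ref{6}) with the two conditions $\mathcal{L}_X \eta_L + \Reeb_L(E_L)\eta_L \in \ann{\Delta^l}$ and $\eta_L(X) = -E_L$. Adjoining the tangency requirement $X_{\vert \Delta\times\RR}\in\mathfrak{X}(\Delta\times\RR)$ to both sides, Eq.~(\ref{6}) becomes equivalent to the system~(\ref{26}) together with tangency. Hence the only genuine content of the corollary is that weakening the explicit coefficient $\Reeb_L(E_L)$ to an unspecified function $f$ changes nothing: once $\eta_L(X)=-E_L$ is imposed, the $f$ appearing in~(\ref{32}) is forced to coincide with $\Reeb_L(E_L)$.

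The forward implication will be immediate. If $X$ satisfies~(\ref{6}), then Theorem~\ref{25} yields~(\ref{26}), so~(\ref{32}) holds with the explicit choice $f=\Reeb_L(E_L)$. For the converse I would assume~(\ref{32}) for some $f$ and pin down $f$ by contracting the membership $\mathcal{L}_X\eta_L + f\eta_L \in \ann{\Delta^l}$ against the Reeb field. The decisive structural input is Remark~\ref{rem:DeltaL}: since $\Delta^l$ is vertical, $\Reeb_L \in \Delta^l$, and therefore every one-form in $\ann{\Delta^l}$ vanishes on $\Reeb_L$. This gives $0 = (\mathcal{L}_X\eta_L + f\eta_L)(\Reeb_L) = (\mathcal{L}_X\eta_L)(\Reeb_L) + f$.

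It then remains to compute $(\mathcal{L}_X\eta_L)(\Reeb_L)$. By Cartan's formula $\mathcal{L}_X\eta_L = \dd(\iota_X\eta_L) + \iota_X\dd\eta_L$; contracting with $\Reeb_L$, the $\iota_X\dd\eta_L$ term drops out because $\iota_{\Reeb_L}\dd\eta_L=0$ by definition of the Reeb vector field, while the exact term gives $\dd(\eta_L(X))(\Reeb_L) = -\dd E_L(\Reeb_L) = -\Reeb_L(E_L)$, using $\eta_L(X)=-E_L$. Substituting back yields $f=\Reeb_L(E_L)$, so~(\ref{32}) collapses to~(\ref{26}), and Theorem~\ref{25} closes the argument. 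I anticipate no genuine obstacle: the single point worth care is that it is precisely the verticality of $\Delta^l$, i.e.\ the containment $\Reeb_L\in\Delta^l$, that allows the Reeb contraction to kill the $\ann{\Delta^l}$ term and thereby determine $f$ uniquely.
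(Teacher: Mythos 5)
Your proposal is correct and follows essentially the same route as the paper: both directions reduce to Theorem~\ref{25}, and the function $f$ is pinned down to $\Reeb_L(E_L)$ by contracting the $\ann{\Delta^l}$ membership with $\Reeb_L$ (using the verticality of $\Delta^l$ from Remark~\ref{rem:DeltaL}) together with Cartan's formula and $\iota_{\Reeb_L}\dd\eta_L=0$. The only cosmetic difference is that you spell out the appeal to Theorem~\ref{25} in both implications, which the paper leaves implicit.
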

Compare this to \cref{eq:ham_vf_conf_contactomorphism}, for the case without constraints. Notice the similarity with proposition \ref{39}.\\

Let us now prove a technical but necessary lemma.
\begin{lemma}\label{37}
The following identity holds
$$\mathcal{Q}^{*}\left( \dd E_{L}  \right)= 0,$$
i.e., $\dd E_{L} \in \ann{\mathcal{S}}$.

\begin{proof}
Taking into account $\mathcal{S} \subseteq \Delta^{l}$ (\ref{33}), we have that 
\begin{equation}\label{34}
\ann{\Delta^{l}}\subseteq \ann{\mathcal{S}}
\end{equation}
Let $\Gamma_{L,\Delta}$ be a solution of Eq. (\ref{6}). Then, using corollary~\ref{35}, we get
$$\mathcal{L}_{\Gamma_{L, \Delta}s}\eta_{L} + \Reeb_L\left(E_{L}\right) \eta_{L} \in \ann{\mathcal{S}}$$
Hence, projecting by $\mathcal{Q}^{*}$,
$$\mathcal{Q}^{*}\left( \mathcal{L}_{\Gamma_{L, \Delta}s}\eta_{L} \right) = 0,$$
i.e.,
\begin{equation}\label{36}
\mathcal{Q}^{*}\left( \iota_{\Gamma_{L,\Delta} }\dd \eta_{L}  \right) =  -\mathcal{Q}^{*}\left( \dd\left( \eta_{L}\left( \Gamma_{L,\Delta}\right)\right) \right) =  \mathcal{Q}^{*}\left( \dd E_{L}  \right) 
\end{equation}
However, for any other vector field $Y$ on $TQ \times \mathbb{R}$ restricted to $\Delta \times \mathbb{R}$,
\begin{equation*}
\left[ \mathcal{Q}^{*}\left( \iota_{\Gamma_{L,\Delta} }\dd \eta_{L}  \right) \right] \left( Y \right) = \dd \eta_{L}\left( \Gamma_{L, \Delta} , \mathcal{Q} \left( Y \right) \right) =\left[ \dd \eta_{L } + \eta_{L}\otimes\eta_{L}\right] \left( \Gamma_{L, \Delta} , \mathcal{Q} \left( Y \right) \right) = 0.
\end{equation*}
The second and third equalities are consequence of \cref{rem:S} and \cref{24}. Then
$$\mathcal{Q}^{*}\left( \dd E_{L}  \right)= 0.$$
\end{proof}
\end{lemma}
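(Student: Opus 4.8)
The plan is to reduce $\mathcal{Q}^{*}(\dd E_L)=0$ to a pointwise vanishing on a spanning set of $\mathcal{S}$, and then to check that vanishing by a short coordinate computation carried out at points of $\Delta\times\RR$. Since $\mathcal{Q}$ is the projector onto $\mathcal{S}$ along $T(\Delta\times\RR)$ coming from the splitting~\eqref{41}, its adjoint $\mathcal{Q}^{*}$ has kernel $\ann{(\operatorname{im}\mathcal{Q})}=\ann{\mathcal{S}}$; hence $\mathcal{Q}^{*}(\dd E_L)=0$ is \emph{equivalent} to $\dd E_L\in\ann{\mathcal{S}}$, exactly the reformulation given in the statement. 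Because $\mathcal{S}$ is locally generated by the vector fields $Z_a$ of~\eqref{15}, this is in turn equivalent to $Z_a(E_L)=0$ at each point of $\Delta\times\RR$, for every $a$.

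To verify it, recall $E_L=\dot q^{j}\,\partial L/\partial\dot q^{j}-L$ and $Z_a=-W^{ik}\Phi^{a}_{k}\,\partial/\partial\dot q^{i}$. Differentiating the energy gives $\partial E_L/\partial\dot q^{i}=\dot q^{j}W_{ij}$, the two $\partial L/\partial\dot q^{i}$ terms cancelling, so that
$$Z_a(E_L)=-W^{ik}\Phi^{a}_{k}\,\dot q^{j}W_{ij}=-\Phi^{a}_{k}\dot q^{k}=-\Phi^{a},$$
using the symmetry of the Hessian and $W^{ik}W_{ij}=\delta^{k}_{j}$, and reading $\Phi^{a}=\Phi^{a}_{k}\dot q^{k}$ as the constraint function of~\eqref{5}. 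The key observation is that this is \emph{not} identically zero on $TQ\times\RR$: it equals minus the constraint. Its vanishing is precisely the restriction to $\Delta\times\RR$, since at a point $(v_q,z)\in\Delta\times\RR$ one has $\Phi^{a}(v_q)=0$ by the definition of $\Delta$. Thus $Z_a(E_L)=0$ on $\Delta\times\RR$ and the lemma follows.

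The one genuinely delicate point I would flag is bookkeeping of the base point: $\mathcal{Q}^{*}$ is only defined on $T^{*}_{\Delta\times\RR}(TQ\times\RR)$, so the whole identity is asserted at points of $\Delta\times\RR$ and nowhere else; on the ambient manifold $Z_a(E_L)=-\Phi^{a}$ does not vanish. Overlooking this would make the computation appear to fail.

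For completeness, a coordinate-free argument is available that parallels the manipulations used in \cref{35}: taking the solution $\Gamma_{L,\Delta}$ of~\eqref{6}, one has $\eta_L(\Gamma_{L,\Delta})=-E_L$, and \cref{35} gives $\mathcal{L}_{\Gamma_{L,\Delta}}\eta_L+\Reeb_L(E_L)\eta_L\in\ann{\Delta^{l}}\subseteq\ann{\mathcal{S}}$; since $\eta_L\in\ann{\mathcal{S}}$ ($\mathcal{S}$ is horizontal by \cref{rem:S}), applying $\mathcal{Q}^{*}$ annihilates the left-hand side and, expanding the Lie derivative via Cartan's formula with $\eta_L(\Gamma_{L,\Delta})=-E_L$, yields $\mathcal{Q}^{*}(\iota_{\Gamma_{L,\Delta}}\dd\eta_L)=\mathcal{Q}^{*}(\dd E_L)$. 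One then checks $\mathcal{Q}^{*}(\iota_{\Gamma_{L,\Delta}}\dd\eta_L)=0$ from $\Gamma_{L,\Delta}\in\Delta^{l}$ (eq.~\eqref{24}) being $\omega$-orthogonal to $\mathcal{Q}(Y)\in\mathcal{S}=\orth{(\Delta^{l})}$, with $\eta_L(\mathcal{Q}(Y))=0$ allowing $\omega$ to be replaced by $\dd\eta_L$. I would nevertheless favour the direct computation, which avoids invoking the existence of $\Gamma_{L,\Delta}$.
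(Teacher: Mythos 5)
Your proof is correct, but it takes a genuinely different route from the paper's. The paper argues structurally: it takes the solution $\Gamma_{L,\Delta}$ of \eqref{6}, applies \cref{35} to get $\lieD{\Gamma_{L,\Delta}}\eta_L + \Reeb_L(E_L)\eta_L \in \ann{\Delta^l} \subseteq \ann{\mathcal{S}}$, kills the $\eta_L$ terms under $\mathcal{Q}^*$ using horizontality of $\mathcal{S}$, and then shows $\mathcal{Q}^*(\iota_{\Gamma_{L,\Delta}}\dd\eta_L)=0$ from $\Gamma_{L,\Delta}\in\Delta^l$ and $\mathcal{S}=\orth{(\Delta^l)}$ — exactly the alternative you sketch in your last paragraph. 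Your primary argument instead reduces everything to $Z_a(E_L)=0$ along $\Delta\times\RR$ and verifies it by the one-line computation $\partial E_L/\partial\dot q^i = \dot q^j W_{ij}$, whence $Z_a(E_L) = -W^{ik}\Phi^a_k\dot q^j W_{ij} = -\Phi^a_k\dot q^k = -\Phi^a$, which vanishes precisely on the constraint submanifold. This buys two things the paper's proof does not: it avoids invoking the existence of $\Gamma_{L,\Delta}$ (which in the paper rests on the definiteness assumption on the Hessian, whereas your computation needs only regularity of $L$), and it exposes the exact mechanism of the lemma — $\dd E_L$ paired with $Z_a$ \emph{is} the constraint function, so the identity holds on $\Delta\times\RR$ and nowhere else. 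Your flag about the base point is exactly the right caveat. The paper's route, in exchange, is coordinate-free and links the lemma directly to the dissipation identities of \cref{35}. Both are valid; yours is the more elementary and, arguably, the more illuminating.
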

Notice that, as a immediate consequence, we have that
\begin{equation}\label{44}
\dd E_L =  \mathcal{P}^* \left( \dd E_L \right)
\end{equation}

We may now define, along $\Delta \times \RR$, the following vector and bivector fields:
\begin{align}\label{almostjac24}
    \Reeb_{L,\Delta} &=  \mathcal{P} \left({\Reeb_L}_{\vert \Delta \times \RR}\right),\\
    \Lambda_{L,\Delta} &= \mathcal{P}_* {\Lambda_L}_{\vert \Delta \times \RR},
\end{align}
where $\Lambda_{L}$ is the Jacobi structure associated to the contact form $\eta_{L}$ (see \cref{eq:contact_jacobi}). That is, for $\left( v_{q} , z \right) \in \Delta\times \RR \subseteq TQ\times \RR$ and $\alpha,\beta\in T_{\left( v_{q} , z \right)}^* \left(TQ\times \RR\right)$,   
$$\Lambda_{L,\Delta}\left(\alpha,\beta\right) =  \Lambda_{L}\left(\mathcal{P}^*\left(\alpha\right),\mathcal{P}^*\left(\beta\right)\right).$$

\noindent{This structure provides the following morphism of vector bundles}
\begin{equation}
  \begin{aligned}
    \sharp_{\Lambda_{L,\Delta}}:  T^{*}_{\Delta \times \RR}\left( TQ \times \RR \right) &\to T_{\Delta \times \RR}\left( TQ \times \RR \right),\\
    \alpha &\mapsto \Lambda_{L,\Delta}(\alpha, \cdot).
  \end{aligned}
\end{equation}

Hence, we may prove the following result:
\begin{lemma}\label{45}
  \begin{equation}
    \mathcal{P}(\sharp_{\Lambda_{L}}(\dd {E_L})) = \sharp_{\Lambda_{L,\Delta}}(\dd {E_L}).
  \end{equation}
\end{lemma}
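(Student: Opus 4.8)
The plan is to prove the equality fibrewise along $\Delta\times\RR$ by pairing both sides with an arbitrary covector and unwinding the definitions, the only nontrivial ingredient being the relation $\dd E_L=\mathcal{P}^*(\dd E_L)$ recorded in~\eqref{44}. First I would recall the defining property of the two sharp maps. On the one hand $\sharp_{\Lambda_L}$ satisfies $\gen{\gamma,\sharp_{\Lambda_L}(\alpha)}=\Lambda_L(\alpha,\gamma)$ for all covectors $\alpha,\gamma$. On the other hand, since $\Lambda_{L,\Delta}=\mathcal{P}_*{\Lambda_L}_{\vert \Delta\times\RR}$ acts by $\Lambda_{L,\Delta}(\alpha,\beta)=\Lambda_L(\mathcal{P}^*\alpha,\mathcal{P}^*\beta)$, the vector $\sharp_{\Lambda_{L,\Delta}}(\alpha)\in T(\Delta\times\RR)$ is characterised by $\gen{\beta,\sharp_{\Lambda_{L,\Delta}}(\alpha)}=\Lambda_L(\mathcal{P}^*\alpha,\mathcal{P}^*\beta)$ for every $\beta\in T^*_{\Delta\times\RR}(TQ\times\RR)$. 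Both $\sharp_{\Lambda_{L,\Delta}}(\dd E_L)$ and $\mathcal{P}(\sharp_{\Lambda_L}(\dd E_L))$ lie in $T(\Delta\times\RR)$, so it suffices to show they are paired equally against every such $\beta$.

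The core of the argument is then the computation chain, carried out at an arbitrary point of $\Delta\times\RR$:
\begin{align*}
    \gen{\beta,\sharp_{\Lambda_{L,\Delta}}(\dd E_L)}
    &= \Lambda_L\bigl(\mathcal{P}^*(\dd E_L),\,\mathcal{P}^*\beta\bigr)
     = \Lambda_L\bigl(\dd E_L,\,\mathcal{P}^*\beta\bigr) \\
    &= \gen{\mathcal{P}^*\beta,\,\sharp_{\Lambda_L}(\dd E_L)}
     = \gen{\beta,\,\mathcal{P}\bigl(\sharp_{\Lambda_L}(\dd E_L)\bigr)}.
\end{align*}
Here the first equality is the definition of $\sharp_{\Lambda_{L,\Delta}}$ applied to $\dd E_L$; the second equality is the decisive step, replacing $\mathcal{P}^*(\dd E_L)$ by $\dd E_L$ via~\eqref{44} (equivalently $\mathcal{Q}^*(\dd E_L)=0$ from \cref{37}); the third is the defining property of $\sharp_{\Lambda_L}$; and the last is the adjunction $\gen{\mathcal{P}^*\beta,V}=\gen{\beta,\mathcal{P}V}$ for $V=\sharp_{\Lambda_L}(\dd E_L)\in T_{\Delta\times\RR}(TQ\times\RR)$. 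Since $\beta$ is arbitrary, the two vectors coincide, which is the claim.

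The step I expect to require the most care is not any of the computations, which are formal, but the bookkeeping of adjoints: one must read $\mathcal{P}^*$ as the fibrewise transpose of $\mathcal{P}$ regarded as an endomorphism (projector) of $T_{\Delta\times\RR}(TQ\times\RR)$ with image $T(\Delta\times\RR)$ and kernel $\mathcal{S}$, so that the adjunction identity and the dual decomposition~\eqref{43} are used consistently. The genuinely substantive input is \cref{37}: without $\dd E_L\in\ann{\mathcal{S}}$ the second equality would fail and one would only obtain $\mathcal{P}(\sharp_{\Lambda_L}(\mathcal{P}^*\dd E_L))$ on the right-hand side. Everything else follows from skew-symmetry of $\Lambda_L$ and the definitions of the sharp operators.
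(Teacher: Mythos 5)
Your proof is correct and follows essentially the same route as the paper's: both arguments pair each side with an arbitrary covector along $\Delta\times\RR$, use the adjunction $\gen{\mathcal{P}^*\beta,V}=\gen{\beta,\mathcal{P}V}$ together with the definition $\Lambda_{L,\Delta}(\alpha,\beta)=\Lambda_L(\mathcal{P}^*\alpha,\mathcal{P}^*\beta)$, and invoke $\dd E_L=\mathcal{P}^*(\dd E_L)$ from \cref{44} (i.e.\ \cref{37}) as the one substantive input. You have merely run the chain of equalities in the opposite direction from the paper, which changes nothing.
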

\begin{proof}
Let us consider an arbitrary $\alpha\in T_{\left( v_{q} , z \right)}^* (TQ\times \RR)$  with $\left( v_{q} , z \right) \in \Delta \times \RR$,
  \begin{align*}
    \alpha \left( \mathcal{P}\left( \sharp_{\Lambda_{L}} \left( \dd {E_L}_{\vert \left( v_{q} , z \right)} \right) \right) \right)  &=
    \left[  \mathcal{P}^*  \alpha \right] \left(\sharp_{\Lambda_{L}}\left(\dd {E_L}_{\vert \left( v_{q} , z \right)} \right) \right)  \\ &=
    \Lambda_{L}(\dd {E_L}_{\vert \left( v_{q} , z \right)}, \mathcal{P}^* (\alpha)),
  \end{align*}
Then, using \cref{44}, we get
\begin{align*}
 \alpha \left( \mathcal{P}\left( \sharp_{\Lambda_{L}} \left( \dd {E_L}_{\vert \left( v_{q} , z \right)} \right) \right) \right)  &=
  \Lambda_{L}(\mathcal{P}^* (\dd E_L), \mathcal{P}^* (\alpha)) \\ 
  &=  \sharp_{\Lambda_{L,\Delta}}(\dd E_L)(\alpha),
 \end{align*}
 and the result follows.
\end{proof}

\begin{theorem}\label{46}
We have
  \begin{equation}
    \Gamma_{L,\Delta} = \sharp_{\Lambda_{L,\Delta}}(\dd {E_L}) - {E_L} {\Reeb_{L,\Delta} }
  \end{equation}
\end{theorem}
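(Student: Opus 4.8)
The plan is to obtain the theorem by assembling three facts that are already in place: the description of $\Gamma_{L,\Delta}$ as the projection of the free dynamics, the coordinate-free formula for the unconstrained Hamiltonian vector field, and \cref{45}. The starting point is the identity established just before \cref{20} in the construction of the solution: taking $X=\mathcal{P}({\Gamma_L}_{\vert\Delta\times\RR})$ one checks that $X$ satisfies both conditions of \cref{6}, so by uniqueness $\Gamma_{L,\Delta}=\mathcal{P}({\Gamma_L}_{\vert\Delta\times\RR})$ along $\Delta\times\RR$. Thus the whole proof reduces to rewriting $\mathcal{P}({\Gamma_L}_{\vert\Delta\times\RR})$ in terms of the projected structures.

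Next I would feed into this the intrinsic expression for the free dynamics. Applying item~4 of \cref{prop213123} to the contact Hamiltonian system $(TQ\times\RR,\eta_L,E_L)$, whose Hamiltonian field is $\Gamma_L=X_{E_L}$, gives
\begin{equation*}
    \Gamma_L = \sharp_{\Lambda_L}(\dd E_L) - E_L\,\Reeb_L
\end{equation*}
on all of $TQ\times\RR$. Restricting to $\Delta\times\RR$ and applying the fibrewise-linear projector $\mathcal{P}$, and using that $\mathcal{P}$ commutes with multiplication by the scalar function $E_L$, I obtain
\begin{equation*}
    \Gamma_{L,\Delta} = \mathcal{P}\bigl((\sharp_{\Lambda_L}(\dd E_L))_{\vert\Delta\times\RR}\bigr) - E_L\,\mathcal{P}\bigl({\Reeb_L}_{\vert\Delta\times\RR}\bigr).
\end{equation*}

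Finally I would identify the two terms on the right. The first is exactly the content of \cref{45}, which yields $\mathcal{P}(\sharp_{\Lambda_L}(\dd E_L))=\sharp_{\Lambda_{L,\Delta}}(\dd E_L)$, and the second is the very definition $\Reeb_{L,\Delta}=\mathcal{P}({\Reeb_L}_{\vert\Delta\times\RR})$ from \eqref{almostjac24}. Substituting both gives the claimed $\Gamma_{L,\Delta}=\sharp_{\Lambda_{L,\Delta}}(\dd E_L)-E_L\,\Reeb_{L,\Delta}$. I do not expect a real obstacle here, since the argument is purely structural; the only point demanding care is the bookkeeping of restrictions, as $\mathcal{P}$, $\Lambda_{L,\Delta}$ and $\Reeb_{L,\Delta}$ all live only along $\Delta\times\RR$, so every equality must be read pointwise on that submanifold. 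The genuine work sits in \cref{45}, where the compatibility of the projector on vectors with its dual on covectors is used through \cref{44} (that is, $\dd E_L=\mathcal{P}^*(\dd E_L)$); once that lemma is granted, the present theorem is a one-line substitution.
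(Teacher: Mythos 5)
Your argument is correct and is essentially identical to the paper's own proof: both write $\Gamma_{L,\Delta}=\mathcal{P}(\Gamma_L)$ along $\Delta\times\RR$, expand $\Gamma_L=\sharp_{\Lambda_L}(\dd E_L)-E_L\Reeb_L$, and then apply \cref{45} together with the definition of $\Reeb_{L,\Delta}$. Your version just spells out the linearity and restriction bookkeeping that the paper leaves implicit.
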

\begin{proof}
  Along $\Delta \times \RR$, we have
  \begin{align*}
    \Gamma_{\Delta,L} &= \mathcal{P}(\Gamma_{L}) \\ &= 
    \mathcal{P}(\sharp_{\Lambda_L}(\dd E_L) -  E_L \Reeb_L) \\ &=
    {\sharp_{\Lambda_{L,\Delta}}}(\dd E_L) - E_L \Reeb_{L,\Delta}, 
  \end{align*}
  where we have used~\cref{45}.
\end{proof}

Furthermore, we can define the following bracket from functions on $TQ \times \RR$ to functions on $\Delta \times \RR$, which will be called the \emph{nonholonomic bracket}:
\begin{equation}
  \NHBr{f,g} = \Lambda_{L,\Delta}(\dd f, \dd g) - f \Reeb_{L,\Delta}(g) + g \Reeb_{L,\Delta}(f).
\end{equation}

\begin{theorem}\label{esteteorema23}
  The nonholonomic bracket has the following properties:
  \begin{enumerate}
    \item Any function $g$ on $TQ\times \RR$ that vanishes on $\Delta \times \RR$ is a Casimir, i.e.,
    \[\NHBr{g,f} = 0, \ \forall f \in \mathcal{C}^{\infty} \left( TQ \times \RR \right)\]
    \item The bracket provides the evolution of the observables, that is, 
    \begin{equation}
      \Gamma_{L,\Delta}(g) = \NHBr{E_L,g} - g \Reeb_{L,\Delta} (E_L).
    \end{equation}
  \end{enumerate}
\end{theorem}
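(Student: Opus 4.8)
The plan is to handle the two statements separately, relying on the splitting~\eqref{43} together with the formula for $\Gamma_{L,\Delta}$ in \cref{46}; neither part requires new computation beyond reorganizing objects already in hand.

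For the second statement I expect an essentially immediate verification. Expanding the definition of the bracket, the two terms $\pm g\,\Reeb_{L,\Delta}(E_L)$ cancel, so that
\[
  \NHBr{E_L,g} - g\,\Reeb_{L,\Delta}(E_L) = \Lambda_{L,\Delta}(\dd E_L,\dd g) - E_L\,\Reeb_{L,\Delta}(g).
\]
On the other hand, \cref{46} gives $\Gamma_{L,\Delta} = \sharp_{\Lambda_{L,\Delta}}(\dd E_L) - E_L\,\Reeb_{L,\Delta}$. Contracting with $\dd g$ and using the defining identities $\dd g\bigl(\sharp_{\Lambda_{L,\Delta}}(\dd E_L)\bigr) = \Lambda_{L,\Delta}(\dd E_L,\dd g)$ and $\dd g(\Reeb_{L,\Delta}) = \Reeb_{L,\Delta}(g)$ yields exactly the same right-hand side. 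Hence the two expressions coincide along $\Delta\times\RR$.

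For the first statement, the key observation is that if $g$ vanishes on $\Delta\times\RR$, then at each point of $\Delta\times\RR$ its differential annihilates the tangent space of the submanifold, i.e.\ $\dd g \in \ann{T(\Delta\times\RR)}$. Since $\mathcal{P}^*$ and $\mathcal{Q}^*$ are the complementary projectors attached to the splitting~\eqref{43} — with $\mathcal{P}^*$ the adjoint of the projector $\mathcal{P}$ onto $T(\Delta\times\RR)$ along $\mathcal{S}$, so that $\operatorname{im}\mathcal{P}^* = \ann{\mathcal{S}}$ and $\ker\mathcal{P}^* = \ann{T(\Delta\times\RR)}$ — we obtain $\mathcal{P}^*(\dd g)=0$. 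Substituting into $\Lambda_{L,\Delta}(\dd g,\dd f) = \Lambda_L(\mathcal{P}^*\dd g,\mathcal{P}^*\dd f)$ kills the bivector term. Of the two remaining terms in $\NHBr{g,f} = \Lambda_{L,\Delta}(\dd g,\dd f) - g\,\Reeb_{L,\Delta}(f) + f\,\Reeb_{L,\Delta}(g)$, the first vanishes because $g\equiv 0$ on $\Delta\times\RR$, and the second vanishes because $\Reeb_{L,\Delta}=\mathcal{P}(\Reeb_L)$ is tangent to $\Delta\times\RR$, so $\Reeb_{L,\Delta}(g)=\dd g(\Reeb_{L,\Delta})=0$ by the same annihilation argument. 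Therefore $\NHBr{g,f}=0$ for all $f$, and $g$ is a Casimir.

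I anticipate no genuine obstacle: the content is bookkeeping rather than difficulty. The only points meriting care are confirming that the adjoint $\mathcal{P}^*$ has kernel $\ann{T(\Delta\times\RR)}$ (and not $\ann{\mathcal{S}}$), and remembering throughout that every object lives along $\Delta\times\RR$, so each identity is asserted only after restriction to that submanifold.
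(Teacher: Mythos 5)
Your proposal is correct and follows essentially the same route as the paper: part (2) is the immediate cancellation of the $\pm g\,\Reeb_{L,\Delta}(E_L)$ terms followed by \cref{46}, and part (1) rests on $\dd g\in\ann{T(\Delta\times\RR)}$ forcing $\mathcal{P}^*(\dd g)=0$ so the bivector term dies, with the remaining terms vanishing by $g\equiv 0$ on $\Delta\times\RR$ and the tangency of $\Reeb_{L,\Delta}$. Your explicit identification of $\ker\mathcal{P}^*=\ann{T(\Delta\times\RR)}$ and of why $\Reeb_{L,\Delta}(g)=0$ only makes explicit what the paper leaves implicit.
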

\begin{proof}
For the first assertion, let $g$ a function which vanish on $\Delta \times \RR$ and let $f$ be any function on $TQ \times \RR$. Notice that this implies that $\dd g \in \ann{(T(\Delta \times \RR))}$, hence $\mathcal{P}^* (\dd g) = 0$. Then, along $\Delta \times \RR$, we have that
  \begin{align*}
    \NHBr{g,f} &= 
    \Lambda_{L,\Delta}(\dd g,\dd f) - g \Reeb_{L,\Delta} (f) + f \Reeb_{L,\Delta} (g) \\ &= 
    \Lambda_{L}(\mathcal{P}^*(\dd g),\mathcal{P}^*(\dd f)) - g \Reeb_{L,\Delta} (f) + f \Reeb_{L,\Delta} (g) = 0.
  \end{align*}

For the second part, notice that

\begin{equation*}
        \NHBr{E_L,g} - g \Reeb_{L,\Delta} (E_L) =
        \Lambda_{L,\Delta}(\dd E_L, \dd g) -  E_L \Reeb_{L,\Delta}(g) =
        \Gamma_{L,\Delta}(g),    
\end{equation*}
where we have used~\cref{46}.
\end{proof}
Notice that, in particular, all the constraint functions $\Phi^{a}$ are Casimir.\\
It is also remarkable that, using the statement \textit{1.} in \cref{esteteorema23}, the nonholonomic bracket may be restricted to functions on $\Delta \times \RR$. Thus, from now on, we will refer to the nonholonomic bracket as the restriction of $\NHBr{\cdot , \cdot }$ to functions on $\Delta \times \RR$.

\section{Hamiltonian vector fields and integrability conditions}\label{quizalapenultimaseccion}
Until now, we have defined a structure given by a vector field $ \Reeb_{L,\Delta}$ and a bivector field $\Lambda_{L,\Delta} $ (see \cref{almostjac24}) which induce the nonholonomic bracket \eqref{nonholbracket243}
\begin{equation}
  \NHBr{f,g} = \Lambda_{L,\Delta}(\dd f, \dd g) - f \Reeb_{L,\Delta}(g) + g \Reeb_{L,\Delta}(f).
\end{equation}
This structure is quite similar to a Jacobi structure (see \cref{sec:contact_jacobi}). In fact, we may prove the following result.\\
\begin{proposition}
The nonholonomic bracket endows the space of differentiable functions on $\Delta \times \RR$ with an almost Lie algebra structure \cite{daSilva1999} which satisfies the generalized Leibniz rule
        \begin{equation}\label{eq:mod_leibniz_rulenonhol}
           \NHBr{f,gh} = g\NHBr{f,h} + h\NHBr{f,g} -  g h \Reeb_{L, \Delta}(h),
        \end{equation}
\begin{proof}
The nonholonomic bracket obviously satisfies that it is $\RR-$linear and skew-symmetric. The Leibniz rule follows from a straightforward computation.
\end{proof}

\end{proposition}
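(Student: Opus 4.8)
The plan is to read off all three required properties directly from the definition $\NHBr{f,g} = \Lambda_{L,\Delta}(\dd f, \dd g) - f \Reeb_{L,\Delta}(g) + g \Reeb_{L,\Delta}(f)$, using only two structural facts: that $\Lambda_{L,\Delta}$ is a genuine bivector, hence $\RR$-bilinear and skew-symmetric in its two covector slots (indeed $\Lambda_{L,\Delta}(\alpha,\beta)=\Lambda_L(\mathcal{P}^*\alpha,\mathcal{P}^*\beta)$ with $\Lambda_L$ the skew Jacobi bivector), and that $\Reeb_{L,\Delta}=\mathcal{P}(\Reeb_L|_{\Delta\times\RR})$ is a vector field along $\Delta\times\RR$, hence a derivation of $\mathcal{C}^\infty(\Delta\times\RR)$.

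Before the algebra, I would first record why every operation is well posed on functions defined only on $\Delta\times\RR$. This is precisely the Casimir statement (1) of \cref{esteteorema23}: if $g$ vanishes on $\Delta\times\RR$ then $\dd g\in\ann{(T(\Delta\times\RR))}$, so $\mathcal{P}^*(\dd g)=0$ and $\Lambda_{L,\Delta}(\dd g,\cdot)$ vanishes; since $\Reeb_{L,\Delta}$ is tangent to $\Delta\times\RR$, the terms $f\Reeb_{L,\Delta}(g)$ and $g\Reeb_{L,\Delta}(f)$ also depend only on the restrictions of $f,g$. Thus the bracket is insensitive to the chosen extensions off the constraint submanifold, and the manipulations below are legitimate in $\mathcal{C}^\infty(\Delta\times\RR)$.

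For the almost Lie algebra structure, $\RR$-bilinearity is immediate, since each summand is $\RR$-linear in $f$ and in $g$ separately. Skew-symmetry follows by swapping $f\leftrightarrow g$: the bivector term flips sign because $\Lambda_{L,\Delta}$ is skew, and the two first-order terms $-f\Reeb_{L,\Delta}(g)+g\Reeb_{L,\Delta}(f)$ are manifestly antisymmetric, so $\NHBr{g,f}=-\NHBr{f,g}$. No Jacobi identity is claimed, so this exhausts the almost Lie algebra requirement.

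The only genuine computation is the Leibniz rule, which I would carry out by expanding $\NHBr{f,gh}$ via $\dd(gh)=g\,\dd h+h\,\dd g$, the bilinearity of $\Lambda_{L,\Delta}$, and the derivation identity $\Reeb_{L,\Delta}(gh)=g\Reeb_{L,\Delta}(h)+h\Reeb_{L,\Delta}(g)$, then subtracting $g\NHBr{f,h}+h\NHBr{f,g}$. Every bivector term and every term of the form $f\Reeb_{L,\Delta}(\cdot)$ cancels; the sole surviving discrepancy sits in the coefficient of $\Reeb_{L,\Delta}(f)$, where the sum contributes $2gh\,\Reeb_{L,\Delta}(f)$ against the single $gh\,\Reeb_{L,\Delta}(f)$ in $\NHBr{f,gh}$, leaving the residual $-gh\,\Reeb_{L,\Delta}(f)$, which is the asserted generalized Leibniz rule (the analogue of \cref{eq:mod_leibniz_rule} under the correspondence of $\Reeb_{L,\Delta}$ with $-E$). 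There is no analytic obstacle anywhere; the one point demanding care is the well-definedness bookkeeping of the second paragraph, which is what makes the formal derivation-and-bilinearity calculation meaningful for functions living only on $\Delta\times\RR$.
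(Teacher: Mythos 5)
Your proof is correct and follows essentially the same route the paper intends: the paper's own proof merely asserts that $\RR$-bilinearity and skew-symmetry are obvious and that the Leibniz rule is a straightforward computation, which is precisely what you carry out (together with a sensible well-definedness remark on $\Delta\times\RR$). Note that your expansion yields the residual term $-gh\,\Reeb_{L,\Delta}(f)$, which is the correct form; the $\Reeb_{L,\Delta}(h)$ appearing in the paper's displayed equation \eqref{eq:mod_leibniz_rulenonhol} is evidently a typo, so you have in effect silently corrected the statement.
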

So, as an obvious corollary we have that
\begin{corollary}
The vector field $ \Reeb_{L,\Delta}$ and the bivector field $\Lambda_{L,\Delta} $ induce a Jacobi stucture on $\Delta \times \RR$ if, and only if, the nonholonomic bracket satisfies the Jacobi identity.
\end{corollary}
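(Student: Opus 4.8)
The plan is to reduce everything to \cref{thm:jab_bra_characterization}, which sets up the dictionary between Jacobi brackets, Lie algebras satisfying the generalized Leibniz rule, and Jacobi manifold structures. The crucial observation is that the preceding proposition has already checked that the nonholonomic bracket $\NHBr{\cdot,\cdot}$ is $\RR$-bilinear, skew-symmetric, and satisfies the generalized Leibniz rule \eqref{eq:mod_leibniz_rulenonhol}; moreover, by its very definition $\NHBr{\cdot,\cdot}$ has the shape of \cref{eq:jac_bracket_from_mfd} with bivector $\Lambda_{L,\Delta}$ and vector field $E = -\Reeb_{L,\Delta}$. Thus the only property separating $\NHBr{\cdot,\cdot}$ from being a genuine Jacobi bracket is the Jacobi identity, and the entire content of the corollary is to transport this single missing condition across the equivalences of \cref{thm:jab_bra_characterization}. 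This is legitimate because the bracket descends to a well-defined bracket on $\Cont^\infty(\Delta\times\RR)$ (the constraint functions being Casimirs by \cref{esteteorema23}), and $\Lambda_{L,\Delta}$, $\Reeb_{L,\Delta}$ are honest tensor fields on the manifold $\Delta\times\RR$, since $\mathcal{P}$ projects onto $T(\Delta\times\RR)$.

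For the forward implication I would assume that $(\Delta\times\RR,\Lambda_{L,\Delta},-\Reeb_{L,\Delta})$ is a Jacobi structure, i.e.\ item~\ref{item:jac_bra_mfd} of \cref{thm:jab_bra_characterization} holds. By the implication item~\ref{item:jac_bra_mfd}$\Rightarrow$item~\ref{item:jac_bra_local}, the induced bracket --- which is exactly $\NHBr{\cdot,\cdot}$, as it arises from $\Lambda_{L,\Delta}$ and $E=-\Reeb_{L,\Delta}$ through \cref{eq:jac_bracket_from_mfd} --- is a Jacobi bracket. In particular $(\Cont^\infty(\Delta\times\RR),\NHBr{\cdot,\cdot})$ is a Lie algebra, so the Jacobi identity is satisfied.

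For the converse I would assume the Jacobi identity. Combined with the proposition (bilinearity, skew-symmetry, and the generalized Leibniz rule), this says precisely that $(\Cont^\infty(\Delta\times\RR),\NHBr{\cdot,\cdot})$ is a Lie algebra satisfying the generalized Leibniz rule, which is item~\ref{item:jac_bra_leibniz} of \cref{thm:jab_bra_characterization}. Applying item~\ref{item:jac_bra_leibniz}$\Rightarrow$item~\ref{item:jac_bra_mfd} produces a bivector $\Lambda'$ and a vector field $E'$ making $(\Delta\times\RR,\Lambda',E')$ a Jacobi manifold whose associated bracket \eqref{eq:jac_bracket_from_mfd} equals $\NHBr{\cdot,\cdot}$.

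The one point requiring care --- and the step I expect to be the main, if modest, obstacle --- is to identify the abstractly produced pair $(\Lambda',E')$ with the concrete pair $(\Lambda_{L,\Delta},-\Reeb_{L,\Delta})$, since \cref{thm:jab_bra_characterization} only asserts existence. This is a uniqueness statement: a bracket of the form \eqref{eq:jac_bracket_from_mfd} determines its generators, because $E'(f)=\NHBr{1,f}$ recovers the vector field (using $\dd 1 = 0$ and $E'(1)=0$) and then $\Lambda'(\dd f,\dd g)=\NHBr{f,g}+f\Reeb_{L,\Delta}(g)-g\Reeb_{L,\Delta}(f)$ recovers the bivector on exact one-forms, which span the cotangent space pointwise. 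Evaluating these formulas on $\NHBr{\cdot,\cdot}$ returns exactly $E'=-\Reeb_{L,\Delta}$ and $\Lambda'=\Lambda_{L,\Delta}$, so the Jacobi manifold furnished by the theorem is $(\Delta\times\RR,\Lambda_{L,\Delta},-\Reeb_{L,\Delta})$. This is precisely the assertion that $\Reeb_{L,\Delta}$ and $\Lambda_{L,\Delta}$ induce a Jacobi structure, which closes the converse and the proof.
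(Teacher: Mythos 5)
Your proposal is correct and follows essentially the route the paper intends: the preceding proposition supplies every axiom of a Jacobi bracket except the Jacobi identity, and \cref{thm:jab_bra_characterization} converts that single missing condition into the existence of a Jacobi structure, which is why the paper states the result as an immediate corollary without written proof. Your extra step recovering $E'=-\Reeb_{L,\Delta}$ via $\NHBr{1,\cdot}$ and $\Lambda'=\Lambda_{L,\Delta}$ on exact one-forms is a worthwhile refinement the paper glosses over, since the theorem as stated only asserts existence of \emph{some} inducing pair.
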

This result motivates the following definition.
\begin{definition}
Let $M$ be a manifold with a vector field $E$ and a bivector field $\Lambda$. The triple $(M,\Lambda,E)$ is said to be an \textit{almost Jacobi structure} if the pair $(\Cont^\infty(M),\jacBr{\cdot,\cdot})$ is an almost Lie algebra satisfying the generalized Leibniz rule (\ref{eq:mod_leibniz_rulenonhol}) where the bracket is given by 
\begin{equation}\label{nonholbracket243}
  \jacBr{f,g} = \Lambda(\dd f, \dd g) + f E(g) - g E(f)
\end{equation}
\end{definition}
With this, the triple $\left( \Delta \times \RR,  \Lambda_{L,\Delta}, -\Reeb_{L,\Delta}  \right)$ is an almost Jacobi structure. Of course, the study of the intrisic properties of almost Jacobi structures on general manifolds has a great interest from the mathematical point of view. However, this could distract the reader from the main goal of this paper. So, here we will only focus on the necessary properties for our develoment.\\

Let $H$ be a Hamiltonian function on the contact manifold $\left(TQ \times \RR,\eta_{L}\right)$. Then, we define the \emph{Constrained Hamiltonian vector field} $X_H^{\Delta}$ by the equation
\begin{equation}\label{eq:nonholhamiltonian_vf_contact}
    X_H^{\Delta} = \sharp_{\Lambda_{L,\Delta}} \left(\dd H \right) - H \Reeb_{L, \Delta} .
\end{equation}
Then, by using \cref{46} we have that the solution $\Gamma_{L, \Delta}$ of \cref{6} is a particular case of constrained Hamiltonian vector field. In fact,
$$\Gamma_{L, \Delta} = X_{E_{L}}^{\Delta}.$$

As in the case without constraints, we have many equivalent ways of defining these vector fields.

\begin{proposition}\label{anotherpropomore23}
Let $H$ be a Hamiltonian function on $TQ \times \RR$. The following statements are equivalent:
\begin{itemize}
\item[(i)] $X_{H}^{\Delta}$ is the Constraind Hamiltonian vector field of $H$.

\item[(ii)] It satisfies the following equation,
\begin{equation}\label{eq:nonholhamiltonian_vf_contactsecond}
    X_H^{\Delta} = \mathcal{P}\left( \sharp_{L}\left(\mathcal{P}^{*}\dd H \right)\right) - \left(\Reeb_{L,\Delta} \left(H\right) + H\right) \Reeb_{L,\Delta}.
\end{equation}
\item[(iii)] The following equation holds,
\begin{equation}\label{eq:nonholhamiltonian_vf_contactthird}
    X_{H}^{\Delta} = \mathcal{P}\left( X_{H}\right) - \mathcal{P}\left(   \sharp_{\Lambda_{L}}\left( \mathcal{Q}^{*}\dd H \right)  \right).
\end{equation}
\end{itemize}
\begin{proof}
Let $g$ a smooth function of $TQ \times \RR$. Then,
\begin{eqnarray*}
X_{H}^{\Delta} \left( g \right) &=&  \{\sharp_{\Lambda_{L,\Delta}} \left(\dd H \right) \}\left( g \right)- H \Reeb_{L, \Delta} \left( g \right)\\
&=&  \{\mathcal{P}\left( \sharp_{\Lambda_{L}} \left(\mathcal{P}^{*}\dd H \right)\right) \}\left( g \right) - H \Reeb_{L, \Delta}  \left( g \right)\\
&=&  \{\mathcal{P}\left( \sharp_{L} \left(\mathcal{P}^{*} \dd H \right) - \mathcal{P}^{*} \dd H \left( \Reeb_{L}\right) \Reeb_{L} \right)\}\left( g \right) - H \Reeb_{L, \Delta}  \left( g \right)\\
&=&  \{\mathcal{P}\left( \sharp_{L} \left(\mathcal{P}^{*} \dd H \right) - \Reeb_{L, \Delta}\left( H\right) \Reeb_{L, \Delta} \right) \}\left( g \right) - H \Reeb_{L, \Delta}  \left( g \right)\\
&=& \mathcal{P}\left( \sharp_{L}\left(\dd H \right)\right) \left( g \right)- \left(\Reeb_{L,\Delta} \left(H\right) + H\right) \Reeb_{L,\Delta} \left( g \right)
\end{eqnarray*}
This proves that $(i)$ is equivalent to $(ii)$. The equivalence between $(i)$ and $(iii)$ follows using the natural decomposition of $\sharp_{\Lambda_{L}}\left( \dd H \right)$ into $\sharp_{\Lambda_{L}}\left( \mathcal{P}^{*}\dd H \right)$ and $\sharp_{\Lambda_{L}}\left( \mathcal{Q}^{*}\dd H \right)$.
\end{proof}
\end{proposition}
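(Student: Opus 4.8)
The plan is to verify both equivalences by evaluating the candidate vector fields against an arbitrary covector $\beta$ and reducing everything to the definitions of the constrained objects $\Lambda_{L,\Delta}$, $\Reeb_{L,\Delta}$ together with the contact-geometric formula $\sharp_{\Lambda_L}(\alpha) = \sharp_L(\alpha) - \alpha(\Reeb_L)\Reeb_L$ recalled in \cref{sec:contact_jacobi}. Since the statement is purely algebraic, I would work pointwise along $\Delta\times\RR$ with the adjoint projectors rather than with integral curves.

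The central step is to establish, for any $H$, the identity
\[\sharp_{\Lambda_{L,\Delta}}(\dd H) = \mathcal{P}\bigl(\sharp_{\Lambda_L}(\mathcal{P}^*\dd H)\bigr),\]
which generalizes \cref{45} (there proved only for $\dd E_L$, where $\mathcal{P}^*\dd E_L = \dd E_L$ by \cref{37}). This follows from the definition $\Lambda_{L,\Delta}(\alpha,\beta) = \Lambda_L(\mathcal{P}^*\alpha,\mathcal{P}^*\beta)$ together with the adjointness relation $\beta(\mathcal{P}(V)) = (\mathcal{P}^*\beta)(V)$: pairing the right-hand side with an arbitrary $\beta$ returns $\Lambda_L(\mathcal{P}^*\dd H,\mathcal{P}^*\beta) = \Lambda_{L,\Delta}(\dd H,\beta)$, which is the pairing of the left-hand side with $\beta$. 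Substituting the contact formula for $\sharp_{\Lambda_L}$ and using $(\mathcal{P}^*\dd H)(\Reeb_L) = \dd H(\mathcal{P}\Reeb_L) = \Reeb_{L,\Delta}(H)$ turns this into
\[\sharp_{\Lambda_{L,\Delta}}(\dd H) = \mathcal{P}\bigl(\sharp_L(\mathcal{P}^*\dd H)\bigr) - \Reeb_{L,\Delta}(H)\,\Reeb_{L,\Delta}.\]
Adding $-H\,\Reeb_{L,\Delta}$ to both sides yields at once the equivalence of (i) and (ii), and since every manipulation is a chain of equalities the argument is reversible.

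For the equivalence of (i) and (iii) I would exploit that $\mathcal{P}+\mathcal{Q} = \mathrm{id}$ on $T_{\Delta\times\RR}(TQ\times\RR)$, hence $\mathcal{P}^* + \mathcal{Q}^* = \mathrm{id}$ on covectors, so that $\dd H = \mathcal{P}^*\dd H + \mathcal{Q}^*\dd H$. Applying $\mathcal{P}\circ\sharp_{\Lambda_L}$ and using linearity gives $\mathcal{P}(\sharp_{\Lambda_L}(\dd H)) = \mathcal{P}(\sharp_{\Lambda_L}(\mathcal{P}^*\dd H)) + \mathcal{P}(\sharp_{\Lambda_L}(\mathcal{Q}^*\dd H))$. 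Combining this with the characterization $X_H = \sharp_{\Lambda_L}(\dd H) - H\Reeb_L$ from \cref{prop213123}, the relation $\mathcal{P}(\Reeb_L) = \Reeb_{L,\Delta}$, and the identity from the first step, the $\mathcal{P}^*$-part reassembles precisely into $X_H^\Delta$, leaving $\mathcal{P}(X_H) = X_H^\Delta + \mathcal{P}(\sharp_{\Lambda_L}(\mathcal{Q}^*\dd H))$, which rearranges to (iii).

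I expect the only delicate point to be the bookkeeping of adjoints in the central step: keeping straight that $\mathcal{P}^*$ acts on the cotangent side while $\mathcal{P}$ acts on the tangent side, and that the $\Reeb_L$ correction produced by $\sharp_{\Lambda_L}(\alpha)=\sharp_L(\alpha)-\alpha(\Reeb_L)\Reeb_L$ must be pushed through $\mathcal{P}$ to become a $\Reeb_{L,\Delta}$ term. Everything else is linearity and substitution, and no analytic or transversality input beyond the regularity already assumed in \cref{sec:herglotz_constrained} is required.
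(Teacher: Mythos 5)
Your proposal is correct and follows essentially the same route as the paper: the key identity $\sharp_{\Lambda_{L,\Delta}}(\dd H)=\mathcal{P}\bigl(\sharp_{\Lambda_L}(\mathcal{P}^{*}\dd H)\bigr)$ (which the paper uses implicitly in the second line of its chain of equalities, evaluated on a test function $g$ rather than a covector $\beta$), followed by the formula $\sharp_{\Lambda_L}(\alpha)=\sharp_L(\alpha)-\alpha(\Reeb_L)\Reeb_L$ and the conversion $(\mathcal{P}^{*}\dd H)(\Reeb_L)=\Reeb_{L,\Delta}(H)$, $\mathcal{P}(\Reeb_L)=\Reeb_{L,\Delta}$ for (i)$\Leftrightarrow$(ii), and the splitting $\dd H=\mathcal{P}^{*}\dd H+\mathcal{Q}^{*}\dd H$ for (i)$\Leftrightarrow$(iii). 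You merely make explicit the adjoint bookkeeping and the generalization of \cref{45} that the paper leaves tacit.
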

Notice that the constrained Hamiltonian vector field $X_{H}^{\Delta}$ is just a vector field along the submanifold $\Delta \times \RR$.
\begin{corollary}
Let $H$ be a Hamolitonian function on $TQ \times \RR$. Then, it satisfies that
\begin{equation}\label{etaproyect34}
\eta_{L}(X^{\Delta}_H) = -H.
\end{equation}
\begin{proof}

By using \cref{prop213123}, \cref{rem:S} and \cref{anotherpropomore23}, we have that
\begin{eqnarray*}
\eta_{L} \left( X^{\Delta}_H  \right) & = &  \eta_{L} \left(  \mathcal{P}\left( X_{H}\right) - \mathcal{P}\left(   \sharp_{\Lambda_{L}}\left( \mathcal{Q}^{*}\dd H \right)  \right) \right)\\
& = &  \eta_{L} \left(   X_{H} -   \sharp_{\Lambda_{L}}\left( \mathcal{Q}^{*}\dd H \right)  \right)\\
& = &  -H -\eta_{L} \left( \sharp_{\Lambda_{L}}\left( \mathcal{Q}^{*}\dd H \right)  \right)
\end{eqnarray*}
On the other hand,
\begin{eqnarray*}
\eta_{L} \left( \sharp_{\Lambda_{L}}\left( \mathcal{Q}^{*}\dd H \right)  \right)  &=&  \eta_{L} \left( \sharp_{L}\left( \mathcal{Q}^{*}\dd H \right)  -  \{ \mathcal{Q} \left( \Reeb_{L}\right) \} \left( H \right)\Reeb_{L}\right)\\ 
&=&  \eta_{L} \left( \sharp_{L}\left( \mathcal{Q}^{*}\dd H \right) \right) -  \{ \mathcal{Q} \left( \Reeb_{L}\right) \} \left( H \right) = 0.
\end{eqnarray*}
\end{proof}

\end{corollary}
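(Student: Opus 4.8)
The plan is to reduce the statement to the unconstrained case via the third characterization of $X_H^\Delta$ in \cref{anotherpropomore23}, which expresses the constrained field purely in terms of objects attached to the ambient contact structure:
\[
X_H^\Delta = \mathcal{P}(X_H) - \mathcal{P}\left(\sharp_{\Lambda_L}(\mathcal{Q}^*\dd H)\right).
\]
The task then becomes evaluating $\eta_L$ on the two terms separately.

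First I would remove the projector $\mathcal{P}$ from the computation. For any vector $Y$ tangent to $TQ\times\RR$ at a point of $\Delta\times\RR$, the decomposition~\eqref{41} gives $Y-\mathcal{P}(Y)=\mathcal{Q}(Y)\in\mathcal{S}$; since $\mathcal{S}$ is horizontal (\cref{rem:S}), i.e.\ $\eta_L(\mathcal{S})=0$, the one-form $\eta_L$ does not see $\mathcal{P}$, so that $\eta_L(\mathcal{P}(Y))=\eta_L(Y)$. Applying this to each term yields
\[
\eta_L(X_H^\Delta) = \eta_L(X_H) - \eta_L\left(\sharp_{\Lambda_L}(\mathcal{Q}^*\dd H)\right).
\]
By \cref{prop213123} the first summand equals $-H$, so the whole argument collapses to showing that the second summand vanishes.

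To handle the remaining term I would use the explicit form $\sharp_{\Lambda_L}(\alpha)=\sharp_L(\alpha)-\alpha(\Reeb_L)\Reeb_L$ together with the elementary identity $\eta_L(\sharp_L\beta)=\beta(\Reeb_L)$, valid for any one-form $\beta$. The latter follows from $\flat_L(\sharp_L\beta)=\beta$ evaluated on $\Reeb_L$, because $\contr{\Reeb_L}\dd\eta_L=0$ kills the $\dd\eta_L$-part and isolates precisely the $\eta_L$-component of $\sharp_L\beta$. Setting $\beta=\mathcal{Q}^*\dd H$, writing $(\mathcal{Q}^*\dd H)(\Reeb_L)=\dd H(\mathcal{Q}(\Reeb_L))=\{\mathcal{Q}(\Reeb_L)\}(H)$, and using $\eta_L(\Reeb_L)=1$, I obtain
\[
\eta_L\left(\sharp_{\Lambda_L}(\mathcal{Q}^*\dd H)\right) = \{\mathcal{Q}(\Reeb_L)\}(H) - \{\mathcal{Q}(\Reeb_L)\}(H) = 0,
\]
which completes the proof.

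The step I expect to be most delicate is the last one: it requires correctly tracking the adjoint $\mathcal{Q}^*$ and respecting the non-symmetry of $\flat_L$, so that the pairing with $\Reeb_L$ is computed from the right side. The key structural observation that makes the cancellation automatic is that both contributions to $\eta_L(\sharp_{\Lambda_L}(\mathcal{Q}^*\dd H))$—the one from $\sharp_L$ and the one from the Reeb correction—are equal to $\{\mathcal{Q}(\Reeb_L)\}(H)$, so they annihilate one another without using any special property of $H$, exactly paralleling the identity $\eta(X_H)=-H$ of the unconstrained theory.
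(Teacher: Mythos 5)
Your proof is correct and follows essentially the same route as the paper's: both use the characterization $X_H^\Delta = \mathcal{P}(X_H) - \mathcal{P}(\sharp_{\Lambda_L}(\mathcal{Q}^*\dd H))$, discard $\mathcal{P}$ via $\eta_L(\mathcal{S})=0$, and then cancel the two contributions to $\eta_L(\sharp_{\Lambda_L}(\mathcal{Q}^*\dd H))$ using $\eta_L(\sharp_L\beta)=\beta(\Reeb_L)$. Your explicit derivation of that last identity from $\flat_L(\sharp_L\beta)=\beta$ and $\contr{\Reeb_L}\dd\eta_L=0$ is a welcome detail the paper leaves implicit.
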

As a consequence of this corollary we have that the correspondence $H \mapsto X_{H}^{\Delta}$ is, in fact, an isomorphism of vector spaces. By means of this isomoprhism, we may prove the following result.
\begin{proposition}\label{yotramasss}
The nonholonomic bracket $\NHBr{\cdot , \cdot}$ satisfies the Jacobi identity if, and only if, 
$$ \left[ X^{\Delta}_{F}, X^{\Delta}_{G} \right] = X^{\Delta}_{ \NHBr{F,G}},$$
i.e., the correspondence $H \mapsto X^{\Delta}_{H}$ is an isomorphism of Lie algebras.
\begin{proof}
Let us consider four arbitrary functions $F,G,H,f \in \Cont^\infty(M)$. Then,

\begin{scriptsize}

\begin{eqnarray*}
\NHBr{fF, \NHBr{G,H}} &=& f \NHBr{F, \NHBr{G,H}} + F \NHBr{f, \NHBr{G,H}} + fF\Reeb_{L, \Delta} \left( \NHBr{G,H} \right)
\end{eqnarray*}

\begin{eqnarray*}
\NHBr{H, \NHBr{fF,G}} &=&  \NHBr{H, f\NHBr{F,G}} +  \NHBr{H, F\NHBr{f,G}} + \NHBr{H , fF\Reeb_{L, \Delta} \left( G \right) }\\
&=&   f\NHBr{H, \NHBr{F,G}} +   \NHBr{F,G}\NHBr{H,f}  -  f \NHBr{F,G} \Reeb_{L , \Delta} \left( H \right)\\
&   &  +F\NHBr{H, \NHBr{f,G}} +   \NHBr{f,G}\NHBr{H,F}  -  F \NHBr{f,G} \Reeb_{L , \Delta} \left( H \right)\\
&    &   +fF \NHBr{H, \Reeb_{L, \Delta} \left(G \right)}  +  \Reeb_{L,\Delta} \left(G \right) \left[   f \NHBr{H,F} + F \NHBr{H,f}  - fF \Reeb_{L,\Delta} \left( H \right)\right]
\end{eqnarray*}

\begin{eqnarray*}
\NHBr{G, \NHBr{H,fF}} &=&  -\NHBr{G, f\NHBr{F,H}} -  \NHBr{G, F\NHBr{f,H}} - \NHBr{G , fF\Reeb_{L, \Delta} \left( H \right) }\\
&=&   -f\NHBr{G, \NHBr{F,H}} -   \NHBr{F,H}\NHBr{G,f}  +  f \NHBr{F,H} \Reeb_{L , \Delta} \left( G \right)\\
&   &  -F\NHBr{G, \NHBr{f,H}} -  \NHBr{f,H}\NHBr{G,F}  +  F \NHBr{f,H} \Reeb_{L , \Delta} \left( G \right)\\
&    &  - fF \NHBr{G, \Reeb_{L, \Delta} \left(H \right)}  -  \Reeb_{L,\Delta} \left(H \right) \left[   f \NHBr{G,F} - F \NHBr{G,f}  + fF \Reeb_{L,\Delta} \left( G \right)\right]
\end{eqnarray*}

\end{scriptsize}
So, adding these equalities we have that,
\begin{scriptsize}
\begin{equation}\label{eq:otramas}
\begin{aligned}
& \NHBr{fF, \NHBr{G,H}}  + \NHBr{H, \NHBr{fF,G}} +  \NHBr{G, \NHBr{H,fF}} = \\
&  - f F \Reeb_{L , \Delta}  \left(  \NHBr{G,H} \right) - fF \left[\NHBr{H, \Reeb_{L, \Delta} \left(G \right)}  - \NHBr{G, \Reeb_{L, \Delta} \left(H \right)} \right] \\
&  + f \left[ \NHBr{F, \NHBr{G,H}} + \NHBr{H, \NHBr{F,G}} +\NHBr{G, \NHBr{H , F}}  \right]\\
&  + F \left[ \NHBr{f, \NHBr{G,H}}  + \NHBr{H, \NHBr{f,G}} + \NHBr{G, \NHBr{H,f}}\right]
\end{aligned}
\end{equation}

\end{scriptsize}

On the other hand,
\begin{scriptsize}
\begin{eqnarray*}
X_{F}^{\Delta}\left( X_{G}^{\Delta} \left( H \right) \right) &=&  \NHBr{F, \NHBr{G,H}} - H\NHBr{F, \Reeb_{L, \Delta}\left(G \right)} - \Reeb_{L, \Delta}\left(G \right) \NHBr{F,H} \\
&  & + H \Reeb_{L, \Delta}\left(G \right)\Reeb_{L, \Delta}\left(F \right) - \left[ \NHBr{G,H}  - H \Reeb_{L, \Delta}\left(G \right)  \right]\Reeb_{L, \Delta}\left(F \right)
\end{eqnarray*}
\end{scriptsize}
Hence,
\begin{scriptsize}
\begin{eqnarray*}
X_{F}^{\Delta}\left( X_{G}^{\Delta} \left( H \right) \right) - X_{G}^{\Delta}\left( X_{H}^{\Delta} \left( H \right) \right) &=&  \NHBr{F, \NHBr{G,H}} +   \NHBr{G, \NHBr{H,F}}   \\
&   & - H \left[   \NHBr{F, \Reeb_{L, \Delta} \left(G \right)} - \NHBr{G, \Reeb_{L, \Delta} \left(F \right)} \right]
\end{eqnarray*}

\end{scriptsize}
Thus, we have that
\begin{scriptsize}

\begin{equation}\label{eq:otraotramas}
\begin{aligned}
&  X_{F}^{\Delta}\left( X_{G}^{\Delta} \left( H \right) \right) - X_{G}^{\Delta}\left( X_{H}^{\Delta} \left( H \right) \right) - X_{\NHBr{F,G}}^{\Delta}\left(H \right)  =\\
& \NHBr{F, \NHBr{G,H}} +   \NHBr{G, \NHBr{H,F}}  + \NHBr{H, \NHBr{F,G}}\\
& + H \Reeb_{L, \Delta}\left( \NHBr{F,G} \right)  - H \left[   \NHBr{F, \Reeb_{L, \Delta} \left(G \right)} - \NHBr{G, \Reeb_{L, \Delta} \left(F \right)} \right]
\end{aligned}
\end{equation}

\end{scriptsize}
The result now follows directly from \cref{eq:otramas} and \cref{eq:otraotramas}.
\end{proof}
\end{proposition}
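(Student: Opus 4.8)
The plan is to test the claimed identity $\left[X_F^\Delta, X_G^\Delta\right] = X_{\NHBr{F,G}}^\Delta$ by letting both sides act on an arbitrary function $H \in \Cont^\infty(TQ\times\RR)$ and to express the resulting defect in terms of the Jacobiator
\[
  \mathcal{J}(F,G,H) = \NHBr{F,\NHBr{G,H}} + \NHBr{G,\NHBr{H,F}} + \NHBr{H,\NHBr{F,G}}.
\]
The elementary starting point is the identity $X_H^\Delta(g) = \NHBr{H,g} - g\,\Reeb_{L,\Delta}(H)$, which is immediate from the definition~\eqref{eq:nonholhamiltonian_vf_contact} of $X_H^\Delta$ and of the nonholonomic bracket. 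I would also introduce the $\RR$-bilinear quantity
\[
  K(F,G) = \Reeb_{L,\Delta}(\NHBr{F,G}) - \NHBr{F,\Reeb_{L,\Delta}(G)} + \NHBr{G,\Reeb_{L,\Delta}(F)},
\]
which measures precisely the failure of $\Reeb_{L,\Delta}$ to be a derivation of $\NHBr{\cdot,\cdot}$; note that $K(G,1)=0$, since $\NHBr{G,1} = \Reeb_{L,\Delta}(G)$.

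The proof then rests on two formal identities, both obtained purely from the $\RR$-bilinearity and antisymmetry of the bracket together with the generalized Leibniz rule~\eqref{eq:mod_leibniz_rulenonhol}. First, expanding $\left[X_F^\Delta, X_G^\Delta\right](H) = X_F^\Delta(X_G^\Delta(H)) - X_G^\Delta(X_F^\Delta(H))$ by means of the formula for $X_H^\Delta(g)$ and collecting the Leibniz correction terms yields the \emph{defect identity}
\[
  \left[X_F^\Delta, X_G^\Delta\right](H) - X_{\NHBr{F,G}}^\Delta(H) = \mathcal{J}(F,G,H) + H\,K(F,G).
\]
Second, computing how the Jacobiator changes when its first argument is multiplied by a function gives the \emph{transformation identity}
\[
  \mathcal{J}(fF,G,H) = f\,\mathcal{J}(F,G,H) + F\,\mathcal{J}(f,G,H) - fF\,K(G,H).
\]
No geometric input beyond the algebraic properties of the bracket proved earlier is needed to obtain these two relations.

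The equivalence now follows by playing the two identities against each other. For the forward direction, if $\NHBr{\cdot,\cdot}$ satisfies the Jacobi identity then $\mathcal{J}\equiv 0$, and the transformation identity with $f=F=1$ forces $K(G,H)=0$ for all $G,H$; the defect identity then collapses to $\left[X_F^\Delta, X_G^\Delta\right] = X_{\NHBr{F,G}}^\Delta$. Conversely, if the correspondence is a Lie algebra homomorphism then the defect identity reads $\mathcal{J}(F,G,H) = -H\,K(F,G)$. Substituting this into both sides of the transformation identity and setting $H=1$ (using $K(G,1)=0$) shows that $K(\cdot,G)$ obeys the Leibniz rule $K(fF,G)=fK(F,G)+FK(f,G)$; reinserting this back into the identity leaves $fF\,K(G,H)=0$ for all $f,F$, whence $K\equiv 0$ and therefore $\mathcal{J}\equiv 0$, i.e.\ the Jacobi identity holds.

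The main obstacle I expect is the bookkeeping involved in deriving the defect and transformation identities rather than any conceptual difficulty: the generalized Leibniz rule carries the nonstandard correction $-gh\,\Reeb_{L,\Delta}(\cdot)$, and one must track these corrections through the nested brackets so that they reassemble exactly into the single term $H\,K(F,G)$ (respectively $fF\,K(G,H)$). This is where sign errors are most likely. The conceptual crux is the observation that the \emph{same} obstruction $K$ appears in both identities, which is precisely what makes the two conditions interchangeable.
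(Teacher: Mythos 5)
Your proposal follows essentially the same route as the paper: your \emph{transformation identity} is the paper's \cref{eq:otramas}, your \emph{defect identity} is the paper's \cref{eq:otraotramas} (with the same obstruction $K$ measuring the failure of $\Reeb_{L,\Delta}$ to be a derivation of the bracket appearing in both), and the way you play the two against each other is exactly how the paper's argument concludes. If anything, your closing step --- using $K(G,1)=0$ to extract the Leibniz rule for $K$ in its first argument and hence force $K\equiv 0$ in the converse direction --- makes explicit what the paper compresses into ``the result now follows directly.''
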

We will now use this result to characterize an integrability condition on the constraint manifold. Assume that $\Delta$ is an integrable distribution on $Q$, i.e., the constraint Lagrangian system is semiholonomic. Let $\left( q^{i} \right)$ be a foliated system of coordinates on $Q$ associated to $\Delta$. Then
\begin{equation*}
\dfrac{\partial}{\partial q^{i} } \in \Delta, \ \ \ i \leq k.
\end{equation*}
So, we may assume that the constraint functions are $\Phi^a = \dd \dot{q}^{a}$ for $a  > k$. Then,
\begin{equation*}
Z_{a} = - W^{ia}\dfrac{\partial}{\partial \dot{q}^{i} }
\end{equation*}
Hence
\begin{equation*}
W_{ai}Z_{i} = - \dfrac{\partial}{\partial \dot{q}^{a} }\in \mathcal{S}, \ \ a > k.
\end{equation*}
On the other hand, it is obvious that
$$\dfrac{\partial}{\partial z },\dfrac{\partial}{\partial q^{j} }, \dfrac{\partial}{\partial \dot{q}^{i} }\in T \left( \Delta \times \RR \right),$$
for any $j$ and $i \leq k$. Then, 
$$ \mathcal{P} \left( \dfrac{\partial}{\partial \dot{q}^{i} } \right) = \dfrac{\partial}{\partial \dot{q}^{i} }, \ \ \mathcal{P} \left( \dfrac{\partial}{\partial q^{j} } \right) = \dfrac{\partial}{\partial q^{j} }, \ \ \mathcal{P} \left( \dfrac{\partial}{\partial z } \right) = \dfrac{\partial}{\partial z }, \ \ \mathcal{P} \left( \dfrac{\partial}{\partial \dot{q}^{a} } \right) = 0,$$
for all $j$, $i \leq k$ and $a >k$. We only have to use these equatlities to check that the Jacobi is satisfied in these coordinates, i.e., the integrability of the constraint manifold implies that the nonholonomic bracket induces a Jacobi structure on $\Delta \times \RR$. We may in fact prove that these two statements are equivalent.
\begin{theorem}
The constraint Lagrangian system $\left( L , \Delta \right)$ is semiholonomic if, and only if, the nonholonomic bracket satisfies the Jacobi identity.
\begin{proof}
Let $\Phi^{a}$ be the constraint functions. Consider $\tilde{\Phi}^{a}=\Phi^{a}_{i}{\dd q}^{i}$ the associated $1-$forms generating $\ann{\Delta^{l}}$. Then,
$$\mathcal{P}^{*}\tilde{\Phi}^{a} = \tilde{\Phi}^{a}, \ \forall a.$$
This is a direct consequence of that $\mathcal{P}^{*}\dd q^{i} = \dd q^{i}$ for all $i$.
Let us fix $H \in \Cont^\infty(TQ \times \RR)$. Taking into account that $\mathcal{P}^{*} \dd H \in \ann{\mathcal{S}}$, we have that
\begin{eqnarray*}
0 & = & \mathcal{P}^{*} \dd H  \left(  Z_{a} \right)\\
&=&  \mathcal{P}^{*} \dd H  \left(  \sharp_{L} \left( \tilde{\Phi}^{a}\right) \right)\\
&=& \mathcal{P}^{*} \dd H  \left(  \sharp_{L} \left( \mathcal{P}^{*}\tilde{\Phi}^{a}\right) \right)
\end{eqnarray*}
Thus, by using \cref{eq:nonholhamiltonian_vf_contactsecond} and $\Reeb_{L, \Delta} \in \Delta^{l}$, we have that
$$ \tilde{\Phi}^{a} \left(  X_{H}^{\Delta} \right)= 0,$$
i.e., $X_{H}^{\Delta} \in \Delta^{l}$ for all $H \in \Cont^\infty(TQ \times \RR)$. Let be a (local) basis $\{ X_{b} = X_{b}^{i}\dfrac{\partial}{\partial q^{i}}\}$ of $\Delta$. Then, consider $\Lambda_{b}$ the local functions on $TQ \times \RR$ induced by the $1-$forms $X_{b}^{i}dq^{i}$. Hence, by taking into account that the correspondence $ H \mapsto X_{H}^{\Delta}$ is an isomorphism of vector spaces, we have that the family $\{ X_{\Lambda_{b}}^{\Delta}, X_{z}^{\Delta}\}$ is a (local) basis of $\Delta^{l}$ where $z$ is the natural projection of $TQ \times \RR$ onto $\RR$.\\
So, taking into account \cref{yotramasss}, we have that the distribution $\Delta^{l}$ is involutive.\\
Consider now an arbitrary vector field $X$ on $Q$. Then, there exists a (local) vector field $X^{l}$ on $TQ \times \RR$ which is $\left(\tau_{Q} \circ pr_{TQ \times \RR}\right)-$related with $X$, i.e., the diagram

\begin{large}
\begin{center}
 \begin{tikzcd}[column sep=huge,row sep=huge]
TQ \times \RR \arrow[d,"\tau_{Q} \circ pr_{TQ \times \RR}"] \arrow[r, "X^{l}"] &T \left( TQ \times \RR\right) \arrow[d, "T\left(\tau_{Q} \circ pr_{TQ \times \RR} \right)"] \\
 Q  \arrow[r,"X"] &TQ
\end{tikzcd}
\end{center}
\end{large}
is commutative. In fact, let us consider a (local) basis $\{\sigma^{i}\}$ of section of $ \tau_{Q} \circ pr_{TQ \times \RR}$. Then, we may construct $X^{l}$ as follows
$$ X^{l} \left(\lambda_{i} \sigma^{i}\left(q \right)\right) = \lambda_{i} T_{q}\sigma_{i} \left(X \left( q \right) \right),$$
for all $q$ in the domain of the basis. It is finally trivial to check that $X \in \Delta$ if, and only if, any $\left( \tau_{Q} \circ pr_{TQ \times \RR} \right)-$related vector field on $X^{l}$ on $TQ \times \RR$ $X^{l}$ with $X$ is in $\Delta^{l}$. Thus, $\Delta$ is also involutive and, therefore, integrable.

\end{proof}
\end{theorem}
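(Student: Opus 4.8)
The plan is to route everything through \cref{yotramasss}, which says that $\NHBr{\cdot,\cdot}$ obeys the Jacobi identity precisely when $H \mapsto X_H^{\Delta}$ is a morphism of Lie algebras, $[X_F^{\Delta}, X_G^{\Delta}] = X_{\NHBr{F,G}}^{\Delta}$. The engine of both implications is the claim that the constrained Hamiltonian vector fields are exactly the (local) sections of $\Delta^{l}$ along $\Delta \times \RR$. Once this is established, the Lie-morphism property is equivalent to $\Delta^{l}$ being closed under the Lie bracket, and it only remains to identify involutivity of $\Delta^{l}$ on $TQ\times\RR$ with involutivity of $\Delta$ on $Q$.

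First I would show $X_H^{\Delta} \in \Delta^{l}$ for every $H$. Contracting a generator $\tilde{\Phi}^{a}$ of $\ann{\Delta^{l}}$ against the expression~\eqref{eq:nonholhamiltonian_vf_contactsecond} and using $\mathcal{P}^{*}\tilde{\Phi}^{a} = \tilde{\Phi}^{a}$ (which holds since $\tilde{\Phi}^{a}$ is built from the $\dd q^{i}$ and $\dd q^{i}(\mathcal{S})=0$), the two terms vanish separately: the Reeb term because $\tilde{\Phi}^{a}(\Reeb_{L,\Delta}) = \tilde{\Phi}^{a}(\Reeb_L)=0$ (as $\Reeb_L\in\Delta^{l}$, \cref{rem:DeltaL}), and the sharp term because $Z_a = \sharp_L(\tilde{\Phi}^{a})\in\mathcal{S}$ is horizontal (\cref{rem:S}), so skew-symmetry of $\dd\eta_L$ gives $\tilde{\Phi}^{a}(\sharp_L(\mathcal{P}^{*}\dd H)) = -\mathcal{P}^{*}\dd H(Z_a) = 0$, the last step because $\mathcal{P}^{*}\dd H\in\ann{\mathcal{S}}$. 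For the reverse inclusion I would use injectivity of $H\mapsto X_H^{\Delta}$ (a consequence of $\eta_L(X_H^{\Delta})=-H$) to realize a local frame of $\Delta^{l}$ by fields $X_H^{\Delta}$, letting $H$ run over the coordinates $q^{i}$ and $z$ together with the fibrewise-linear functions $\Lambda_b$ attached to a local frame $\{X_b\}$ of $\Delta$; a count confirms these are $2n+1-k=\operatorname{rank}\Delta^{l}$ sections, so they exhaust $\Delta^{l}$.

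With the frame in place the implication ``Jacobi $\Rightarrow$ semiholonomic'' is immediate: $[X_F^{\Delta},X_G^{\Delta}] = X_{\NHBr{F,G}}^{\Delta}\in\Delta^{l}$ shows $\Delta^{l}$ is involutive. To descend to $Q$ I would exploit that $\ann{\Delta^{l}} = (\tau_Q\circ\pr_{TQ\times\RR})^{*}\ann{\Delta}$, so that $\Delta^{l}$ is the pullback of $\Delta$ along the submersion $\tau_Q\circ\pr_{TQ\times\RR}$ and contains its vertical bundle. Concretely, for $X\in\mathfrak{X}(Q)$ I would build a $(\tau_Q\circ\pr_{TQ\times\RR})$-related lift $X^{l}$ out of a local frame of sections, check $X\in\Delta \iff X^{l}\in\Delta^{l}$, and invoke naturality of the Lie bracket for related fields, $[X^{l},Y^{l}]\sim[X,Y]$, to transport involutivity from $\Delta^{l}$ back to $\Delta$; regularity of $\Delta$ then yields integrability by Frobenius.

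For the converse ``semiholonomic $\Rightarrow$ Jacobi'' I would pass to a foliated chart adapted to $\Delta$, in which the constraint one-forms are coordinate differentials and $\mathcal{P}$ is the identity on $\partial/\partial q^{j}$, $\partial/\partial z$ and the unconstrained $\partial/\partial\dot{q}^{i}$ while killing the constrained $\partial/\partial\dot{q}^{a}$; feeding these into~\eqref{nonholbracket243} collapses $\NHBr{\cdot,\cdot}$ to a genuine contact Jacobi bracket in the leaf directions, for which the Jacobi identity already holds (\cref{thm:jab_bra_characterization}). I expect the main obstacle to be the spanning half of the frame claim---certifying that the $X_H^{\Delta}$ fill out all of $\Delta^{l}$ and not a proper subbundle, where the standing definiteness hypothesis on the Hessian is what supplies the needed non-degeneracy---together with the care required to form brackets of the $X_H^{\Delta}$, which are a priori only defined along $\Delta\times\RR$ and must be extended to honest vector fields before the Lie bracket can be taken.
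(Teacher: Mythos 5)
Your proposal is essentially the paper's own proof: both directions are routed through \cref{yotramasss}; the inclusion $X_H^{\Delta}\in\Delta^{l}$ is obtained by the same contraction of $\tilde{\Phi}^{a}$ against \cref{eq:nonholhamiltonian_vf_contactsecond}, using $\mathcal{P}^{*}\tilde{\Phi}^{a}=\tilde{\Phi}^{a}$, horizontality of $Z_{a}$ and $\mathcal{P}^{*}\dd H\in\ann{\mathcal{S}}$; the frame $\{X^{\Delta}_{\Lambda_b},X^{\Delta}_{z}\}$ and the $(\tau_{Q}\circ\pr_{TQ\times\RR})$-related lift $X^{l}$ are exactly the paper's devices for transporting involutivity from $\Delta^{l}$ down to $\Delta$; and the converse implication is the paper's foliated-chart computation preceding the theorem.

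One caveat concerning the step you yourself single out as the main obstacle: your dimension count is incorrect. A local frame $\{X_{b}\}$ of $\Delta$ has $n-k$ elements, so $\{X^{\Delta}_{\Lambda_b},X^{\Delta}_{z}\}$ consists of $n-k+1$ vector fields, whereas $\operatorname{rank}\Delta^{l}=2n+1-k$; these numbers agree only for $n=0$. Worse, every $X_{H}^{\Delta}$ is by construction tangent to $\Delta\times\RR$ (both $\mathcal{P}$ and $\Reeb_{L,\Delta}$ take values there), so the constrained Hamiltonian vector fields can span at most $\Delta^{l}\cap T\left(\Delta\times\RR\right)$, never all of $\Delta^{l}$ when $k\geq 1$. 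The paper makes the same spanning assertion without offering a count, so you are reproducing its argument faithfully, but the check you propose does not close this gap: a repaired argument would have to identify precisely which subbundle the fields $X^{\Delta}_{H}$ do span and show that involutivity of that subbundle already forces involutivity of $\Delta$, rather than claim that they exhaust $\Delta^{l}$.
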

Therefore, we have proved that the nonholonomic condition of the constraint Lagrangian system $\left(L, \Delta \right)$ may be checked by the Jacobi identity of the nonholonomic brackets.

\section{Example: Chaplygin's sledge}
We will present here a model for a sledge with homogeneous and isotropic Rayleigh dissipation.

A detailed study of the Chaplygin's sledge may be found in \cite{Chaplygin1949,MOSHCHUK1987}. The nonholonomic character of this example has been investigated in \cite{deLeon1996d}.
The system, which is described in \cite{MOSHCHUK1987}, 
has a configuration space $Q= \RR^2 \times S^1$, which coordinates $(x,y)$, describing the position of a blade and and angle $\theta$, which describes its rotation. We  addeed an extra term $\gamma z$, which accounts for friction with the the medium following the model of Rayleigh dissipation~\cite{Goldstein2006,Bravetti2019,Bravetti2017}. So, the resulting Lagrangian is given by
\begin{equation}
\begin{aligned}
    L=&
    \frac{1}{2} \, {\left({\left({\alpha} \cos\left({\phi}\right) - {\beta} \sin\left({\phi}\right)\right)} {\dot{\phi}} + {\dot{y}}\right)}^{2} + \\ & \frac{1}{2} \, {\left({\left({\beta} \cos\left({\phi}\right) + {\alpha} \sin\left({\phi}\right)\right)} {\dot{\phi}} - {\dot{x}}\right)}^{2} + {\dot{\phi}}^{2} + {\gamma} z
\end{aligned}
\end{equation}
where $\gamma$ is the friction coefficient of the carriage with the medium, and $(\alpha,\beta)$ is the position of the sledge center of gravity $C$ in the reference frame formed by the axes $A$ and $B$ (see~\cref{fig:sledge}). The units are normalized so that the mass and the radius of inertia of the sledge is $1$. 

\begin{figure}
    \centering
    \def\svgwidth{0.8\textwidth}
    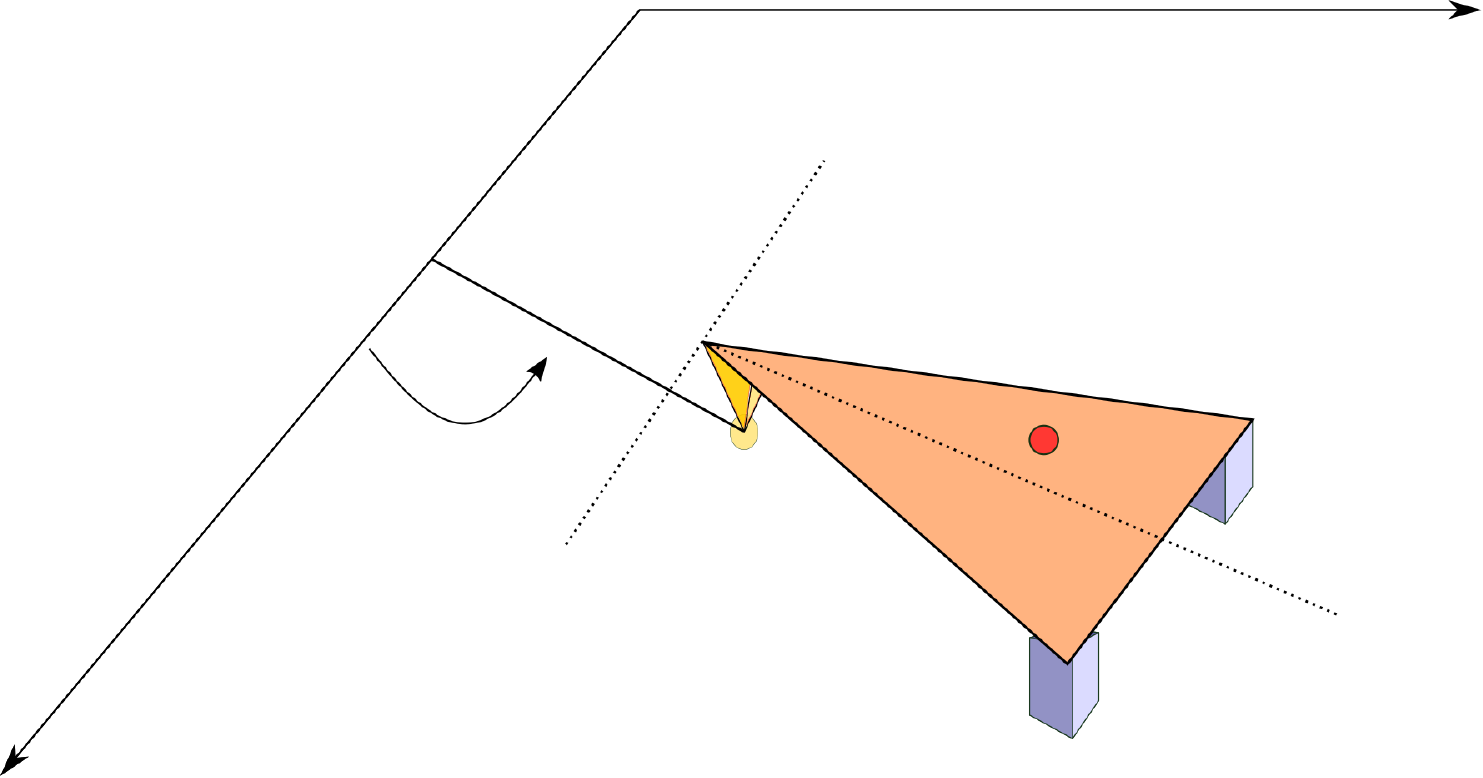
    \caption{}\label{fig:sledge}
\end{figure}

Thus, the contact form $\eta_{L}$ is written as follows
\begin{center}
\begin{align*}
\eta_L &= \left( {\left({\beta} \cos\left({\phi}\right) + {\alpha} \sin\left({\phi}\right)\right)} {\dot{\phi}} - {\dot{x}} \right) \mathrm{d} x + \left( -{\left({\alpha} \cos\left({\phi}\right) - {\beta} \sin\left({\phi}\right)\right)} {\dot{\phi}} - {\dot{y}} \right) \mathrm{d} y + \\ & 
\left( -{\left({\alpha}^{2} + {\beta}^{2} + 2\right)} {\dot{\phi}} + {\left({\beta} \cos\left({\phi}\right) + {\alpha} \sin\left({\phi}\right)\right)} {\dot{x}} - {\left({\alpha} \cos\left({\phi}\right) - {\beta} \sin\left({\phi}\right)\right)} {\dot{y}} \right) \mathrm{d} {\phi} \\ & +\mathrm{d} z\end{align*}
\end{center}
Furthermore, it is easy to check that $\Reeb_{L} = \dfrac{\partial}{\partial z}$.\\
Observe that the Hessian matrix $\left(W_{ij} \right) = \left(\frac{\partial^2 L}{\partial \dot{q}^i \partial \dot{q}^{j}}\right)$ is just
\[
\left(\begin{array}{rrr}
1 & 0 & -{\beta} \cos\left({\phi}\right) - {\alpha} \sin\left({\phi}\right) \\
0 & 1 & {\alpha} \cos\left({\phi}\right) - {\beta} \sin\left({\phi}\right) \\
-{\beta} \cos\left({\phi}\right) - {\alpha} \sin\left({\phi}\right) & {\alpha} \cos\left({\phi}\right) - {\beta} \sin\left({\phi}\right) & {\alpha}^{2} + {\beta}^{2} + 2
\end{array}\right),\]
which has determinant $2$, hence the system is regular.

Next, we will calculate the distribution $\mathcal{S}$, which, in this case, has rank~$1$. We find the generating vector field $Z_{1}$ (see \cref{14}) by using formula (\ref{15}):

    \begin{align*}
     Z_{1} &= 
     \left( -\frac{1}{2} \, {\alpha} {\beta} \cos\left({\phi}\right) - \frac{1}{2} \, {\left({\alpha}^{2} + 2\right)} \sin\left({\phi}\right) \right) \frac{\partial}{\partial {\dot{x}} } \\ & \quad
     + \left( -\frac{1}{2} \, {\alpha} {\beta} \sin\left({\phi}\right) + \frac{1}{2} \, {\left({\alpha}^{2} + 2\right)} \cos\left({\phi}\right) \right) \frac{\partial}{\partial {\dot{y}} } -\frac{1}{2} \, {\alpha} \frac{\partial}{\partial {\dot{\phi}} }
    \end{align*}
Thus, $\mathcal{S} = \gen{ Z_{1}}$. We will now prove the \textit{existence} and \textit{uniqueness} of the solutions of the problem by using \cref{lamatrizcdecarajo}. Thus, we should study if the matrix $\left(\mathcal{C}_{ab} \right) = \left(- W^{ik}\psi^{b}_{k}\psi^{a}_{i} \right)$ is regular. A calculation shows that the matrix $\left(\mathcal{C}_{ab}\right)$, which in this case is a real number:
\[
C_{ab}=
 -1/2\alpha^2 - 1
\]
Therefore, we may conclude that our system has the property of \textit{uniqueness} and \textit{existence of solutions}.

The projector $\mathcal{P}$ (\cref{projectors2}) has the following nonzero components
\begin{equation*}
    \begin{array}{lcl} \mathcal{P}_{ \phantom{\, x} \, x }^{ \, x \phantom{\, x} } & = & 1 \\ \mathcal{P}_{ \phantom{\, y} \, y }^{ \, y \phantom{\, y} } & = & 1 \\ \mathcal{P}_{ \phantom{\, {\phi}} \, {\phi} }^{ \, {\phi} \phantom{\, {\phi}} } & = & 1 \\ \mathcal{P}_{ \phantom{\, {\dot{x}}} \, {\phi} }^{ \, {\dot{x}} \phantom{\, {\phi}} } & = & -\frac{1}{4} \, {\left({\alpha}^{3} + 2 \, {\alpha}\right)} {\beta} {\dot{x}} - \frac{1}{4} \, {\left({\alpha}^{4} + 4 \, {\alpha}^{2} + 4\right)} {\dot{y}} \\ \mathcal{P}_{ \phantom{\, {\dot{x}}} \, {\dot{x}} }^{ \, {\dot{x}} \phantom{\, {\dot{x}}} } & = & -\frac{1}{4} \, {\left({\alpha}^{3} + 2 \, {\alpha}\right)} {\beta} \cos\left({\phi}\right) \sin\left({\phi}\right) - \frac{1}{4} \, {\left({\alpha}^{4} + 4 \, {\alpha}^{2} + 4\right)} \sin\left({\phi}\right)^{2} + 1 \\ \mathcal{P}_{ \phantom{\, {\dot{x}}} \, {\dot{y}} }^{ \, {\dot{x}} \phantom{\, {\dot{y}}} } & = & \frac{1}{4} \, {\left({\alpha}^{3} + 2 \, {\alpha}\right)} {\beta} \cos\left({\phi}\right)^{2} + \frac{1}{4} \, {\left({\alpha}^{4} + 4 \, {\alpha}^{2} + 4\right)} \cos\left({\phi}\right) \sin\left({\phi}\right) \\ \mathcal{P}_{ \phantom{\, {\dot{y}}} \, {\phi} }^{ \, {\dot{y}} \phantom{\, {\phi}} } & = & -\frac{1}{4} \, {\left({\alpha}^{3} + 2 \, {\alpha}\right)} {\beta} {\dot{y}} + \frac{1}{4} \, {\left({\alpha}^{4} + 4 \, {\alpha}^{2} + 4\right)} {\dot{x}} \\ \mathcal{P}_{ \phantom{\, {\dot{y}}} \, {\dot{x}} }^{ \, {\dot{y}} \phantom{\, {\dot{x}}} } & = & -\frac{1}{4} \, {\left({\alpha}^{3} + 2 \, {\alpha}\right)} {\beta} \sin\left({\phi}\right)^{2} + \frac{1}{4} \, {\left({\alpha}^{4} + 4 \, {\alpha}^{2} + 4\right)} \cos\left({\phi}\right) \sin\left({\phi}\right) \\ \mathcal{P}_{ \phantom{\, {\dot{y}}} \, {\dot{y}} }^{ \, {\dot{y}} \phantom{\, {\dot{y}}} } & = & \frac{1}{4} \, {\left({\alpha}^{3} + 2 \, {\alpha}\right)} {\beta} \cos\left({\phi}\right) \sin\left({\phi}\right) - \frac{1}{4} \, {\left({\alpha}^{4} + 4 \, {\alpha}^{2} + 4\right)} \cos\left({\phi}\right)^{2} + 1 \\ \mathcal{P}_{ \phantom{\, {\dot{\phi}}} \, {\phi} }^{ \, {\dot{\phi}} \phantom{\, {\phi}} } & = & -\frac{1}{4} \, {\left({\alpha}^{3} + 2 \, {\alpha}\right)} {\dot{x}} \cos\left({\phi}\right) - \frac{1}{4} \, {\left({\alpha}^{3} + 2 \, {\alpha}\right)} {\dot{y}} \sin\left({\phi}\right) \\ \mathcal{P}_{ \phantom{\, {\dot{\phi}}} \, {\dot{x}} }^{ \, {\dot{\phi}} \phantom{\, {\dot{x}}} } & = & -\frac{1}{4} \, {\left({\alpha}^{3} + 2 \, {\alpha}\right)} \sin\left({\phi}\right) \\ \mathcal{P}_{ \phantom{\, {\dot{\phi}}} \, {\dot{y}} }^{ \, {\dot{\phi}} \phantom{\, {\dot{y}}} } & = & \frac{1}{4} \, {\left({\alpha}^{3} + 2 \, {\alpha}\right)} \cos\left({\phi}\right) \\ \mathcal{P}_{ \phantom{\, {\dot{\phi}}} \, {\dot{\phi}} }^{ \, {\dot{\phi}} \phantom{\, {\dot{\phi}}} } & = & 1 \\ \mathcal{P}_{ \phantom{\, z} \, z }^{ \, z \phantom{\, z} } & = & 1 \end{array}
\end{equation*}

The dynamics of the unconstrained system is then given by
\begin{equation}
\begin{aligned}
    \Gamma_L &= {\dot{x}} \frac{\partial}{\partial x } + {\dot{y}} \frac{\partial}{\partial y } + {\dot{\phi}} \frac{\partial}{\partial {\phi} } + \left( {\left({\alpha} \cos\left({\phi}\right) - {\beta} \sin\left({\phi}\right)\right)} {\dot{\phi}}^{2} + {\gamma} {\dot{x}} \right) \frac{\partial}{\partial {\dot{x}} } \\ & \quad
    + \left( {\left({\beta} \cos\left({\phi}\right) + {\alpha} \sin\left({\phi}\right)\right)} {\dot{\phi}}^{2} + {\gamma} {\dot{y}} \right) \frac{\partial}{\partial {\dot{y}} } + {\gamma} {\dot{\phi}} \frac{\partial}{\partial {\dot{\phi}} } \\ & \quad
    + \bigl( \frac{1}{2} \, {\left({\alpha}^{2} + {\beta}^{2} + 2\right)} {\dot{\phi}}^{2} - {\left({\beta} \cos\left({\phi}\right) + {\alpha} \sin\left({\phi}\right)\right)} {\dot{\phi}} {\dot{x}} \\ & \quad 
    +  {\left({\alpha} \cos\left({\phi}\right) - {\beta} \sin\left({\phi}\right)\right)} {\dot{\phi}} {\dot{y}} + \frac{1}{2} \, {\dot{x}}^{2} + \frac{1}{2} \, {\dot{y}}^{2} + {\gamma} z \bigl) \frac{\partial}{\partial z }
\end{aligned}
\end{equation}

The constrained dynamics $\Gamma_{L,\Delta}$ is given by projecting $\xi_{L}$:
\begin{equation}
\begin{aligned}
\Gamma_{L,\Delta} =& {\dot{x}} \frac{\partial}{\partial x } + {\dot{y}} \frac{\partial}{\partial y } + {\dot{\phi}} \frac{\partial}{\partial {\phi} } \\ &+ 
\biggl( \frac{1}{4} \, {\left({\left({\alpha}^{4} + 4 \, {\alpha}^{2}\right)} {\beta} \sin\left({\phi}\right) + {\left({\left({\alpha}^{3} + 2 \, {\alpha}\right)} {\beta}^{2} + 4 \, {\alpha}\right)} \cos\left({\phi}\right)\right)} {\dot{\phi}}^{2} \\& \quad - \frac{1}{4} \, {\left({\alpha}^{4} + 4 \, {\alpha}^{2} + 4\right)} {\dot{\phi}} {\dot{y}} - \frac{1}{4} \, {\left({\left({\alpha}^{3} + 2 \, {\alpha}\right)} {\beta} {\dot{\phi}} - 4 \, {\gamma}\right)} {\dot{x}} \biggl)   \frac{\partial}{\partial {\dot{x}} } \\& 
+ \biggl( -\frac{1}{4} \, {\left({\left({\alpha}^{4} + 4 \, {\alpha}^{2}\right)} {\beta} \cos\left({\phi}\right) - {\left({\left({\alpha}^{3} + 2 \, {\alpha}\right)} {\beta}^{2} + 4 \, {\alpha}\right)} \sin\left({\phi}\right)\right)} {\dot{\phi}}^{2} \\ & \quad + \frac{1}{4} \, {\left({\alpha}^{4} + 4 \, {\alpha}^{2} + 4\right)} {\dot{\phi}} {\dot{x}} - \frac{1}{4} \, {\left({\left({\alpha}^{3} + 2 \, {\alpha}\right)} {\beta} {\dot{\phi}} - 4 \, {\gamma}\right)} {\dot{y}} \biggl)  \frac{\partial}{\partial {\dot{y}} } \\ &
+ \left( \frac{{\left({\alpha}^{3} + 2 \, {\alpha}\right)} {\beta} {\dot{\phi}}^{2} \cos\left({\phi}\right) - {\left({\alpha}^{3} + 2 \, {\alpha}\right)} {\dot{\phi}} {\dot{x}} + 4 \, {\gamma} {\dot{\phi}} \cos\left({\phi}\right)}{4 \, \cos\left({\phi}\right)} \right) \frac{\partial}{\partial {\dot{\phi}} } \\ &
+ \biggl( \frac{1}{2} \, {\left({\alpha}^{2} + {\beta}^{2} + 2\right)} {\dot{\phi}}^{2} - {\left({\beta} \cos\left({\phi}\right) + {\alpha} \sin\left({\phi}\right)\right)} {\dot{\phi}} {\dot{x}} \\ & \quad + {\left({\alpha} \cos\left({\phi}\right) - {\beta} \sin\left({\phi}\right)\right)} {\dot{\phi}} {\dot{y}} + \frac{1}{2} \, {\dot{x}}^{2} + \frac{1}{2} \, {\dot{y}}^{2} + {\gamma} z \biggl) \frac{\partial}{\partial z }
\end{aligned}
\end{equation}

The nonzero nonholonomic brackets are given between the coordinate functions and the constant $1$ (which are sufficient to characterise the almost-Jacobi algebra) are given by:
\begin{equation*}
    \begin{array}{lcl}
    \NHBr{1,z} &=& -1 \\
    \NHBr{ q^i , z } & = & q^i \\
    \NHBr{ \dot{q}^i , z } & = & 2 \dot{q}^i \\
    \NHBr{x, {\dot{x}}} & = & \frac{1}{8} \, {\alpha}^{6} + \frac{3}{4} \, {\alpha}^{4} + \frac{1}{4} \, {\left({\alpha}^{5} + 4 \, {\alpha}^{3}\right)} {\beta} \cos\left({\phi}\right) \sin\left({\phi}\right) \\ && - \frac{1}{8} \, {\left({\alpha}^{6} + 6 \, {\alpha}^{4} - {\left({\alpha}^{4} + 2 \, {\alpha}^{2} - 4\right)} {\beta}^{2} + 8 \, {\alpha}^{2} + 8\right)} \cos\left({\phi}\right)^{2} + {\alpha}^{2} \\ 
    \NHBr{x,\dot{y}} & = & -\frac{1}{4} \, {\left({\alpha}^{5} + 4 \, {\alpha}^{3}\right)} {\beta} \cos\left({\phi}\right)^{2}  + \frac{1}{8} \, {\left({\alpha}^{5} + 4 \, {\alpha}^{3}\right)} {\beta}  \\ && - \frac{1}{8} \, {\left({\alpha}^{6} + 6 \, {\alpha}^{4} - {\left({\alpha}^{4} + 2 \, {\alpha}^{2} - 4\right)} {\beta}^{2} + 8 \, {\alpha}^{2} + 8\right)} \cos\left({\phi}\right) \sin\left({\phi}\right)\\
    \NHBr{x,\dot{\phi}} & = & \frac{1}{8} \, {\left({\alpha}^{4} + 2 \, {\alpha}^{2} - 4\right)} {\beta} \cos\left({\phi}\right) + \frac{1}{8} \, {\left({\alpha}^{5} + 4 \, {\alpha}^{3}\right)} \sin\left({\phi}\right) \\ 
     \NHBr{y, {\dot{x}} } & = & -\frac{1}{4} \, {\left({\alpha}^{5} + 4 \, {\alpha}^{3}\right)} {\beta} \cos\left({\phi}\right)^{2} + \frac{1}{8} \, {\left({\alpha}^{5} + 4 \, {\alpha}^{3}\right)} {\beta} \\ && - \frac{1}{8} \, {\left({\alpha}^{6} + 6 \, {\alpha}^{4} - {\left({\alpha}^{4} + 2 \, {\alpha}^{2} - 4\right)} {\beta}^{2} + 8 \, {\alpha}^{2} + 8\right)} \cos\left({\phi}\right) \sin\left({\phi}\right) \\
    \NHBr{ y, {\dot{y}} } & = & \frac{1}{8} \, {\alpha}^{6} + \frac{3}{4} \, {\alpha}^{4} - \frac{1}{4} \, {\left({\alpha}^{5} + 4 \, {\alpha}^{3}\right)} {\beta} \cos\left({\phi}\right) \sin\left({\phi}\right) \\ && - \frac{1}{8} \, {\left({\alpha}^{6} + 6 \, {\alpha}^{4} - {\left({\alpha}^{4} + 2 \, {\alpha}^{2} - 4\right)} {\beta}^{2} + 8 \, {\alpha}^{2} + 8\right)} \sin\left({\phi}\right)^{2} + {\alpha}^{2} \\
    \NHBr{y, {\dot{\phi}} } & = & \frac{1}{8} \, {\left({\alpha}^{4} + 2 \, {\alpha}^{2} - 4\right)} {\beta} \sin\left({\phi}\right) - \frac{1}{8} \, {\left({\alpha}^{5} + 4 \, {\alpha}^{3}\right)} \cos\left({\phi}\right) \\ 
    \NHBr{ {\phi} , {\dot{x}} } & = & \frac{1}{8} \, {\left({\alpha}^{4} + 2 \, {\alpha}^{2} - 4\right)} {\beta} \cos\left({\phi}\right) + \frac{1}{8} \, {\left({\alpha}^{5} + 4 \, {\alpha}^{3}\right)} \sin\left({\phi}\right) \\ 
    \NHBr{{\phi} , {\dot{y}} } & = & \frac{1}{8} \, {\left({\alpha}^{4} + 2 \, {\alpha}^{2} - 4\right)} {\beta} \sin\left({\phi}\right) - \frac{1}{8} \, {\left({\alpha}^{5} + 4 \, {\alpha}^{3}\right)} \cos\left({\phi}\right) \\ 
    \NHBr{ {\phi}, {\dot{\phi}} } & = & \frac{1}{8} \, {\alpha}^{4} + \frac{1}{4} \, {\alpha}^{2} - \frac{1}{2}\\ 
    \NHBr{  {\dot{x}} , {\dot{y}} } & = & \frac{2 \, {\left({\alpha}^{2} + 2\right)} {\beta} {\dot{x}}}{8 \, \cos\left({\phi}\right)}  - {\left({\alpha}^{6} + 4 \, {\alpha}^{4} + {\left({\alpha}^{4} + 4 \, {\alpha}^{2}\right)} {\beta}^{2}\right)} {\dot{\phi}}  \\
    \NHBr{ {\dot{x}}, {\dot{\phi}} } & = & -\frac{1}{8} \, {\left({\left({\alpha}^{4} + 4 \, {\alpha}^{2}\right)} {\beta} \sin\left({\phi}\right) - {\left({\alpha}^{5} + 2 \, {\alpha}^{3} - 4 \, {\alpha}\right)} \cos\left({\phi}\right)\right)} {\dot{\phi}} \\ && + \frac{1}{4} \, {\left({\alpha}^{2} + 2\right)} {\dot{y}} \\
    \NHBr{{\dot{y}}, {\dot{\phi}} } & = & \frac{1}{8} \, {\left({\left({\alpha}^{4} + 4 \, {\alpha}^{2}\right)} {\beta} \cos\left({\phi}\right) + {\left({\alpha}^{5} + 2 \, {\alpha}^{3} - 4 \, {\alpha}\right)} \sin\left({\phi}\right)\right)} {\dot{\phi}} \\ && - \frac{1}{4} \, {\left({\alpha}^{2} + 2\right)} {\dot{x}} 
    \end{array}
\end{equation*}

\section*{Acknowledgments}

This work has been partially supported by MINECO Grants MTM2016-76-072-P and the ICMAT Severo Ochoa projects SEV-2011-0087 and SEV-2015-0554. M. Lainz and V.M.~Jim{\'e}nez wishes to thank MINECO for a FPI-PhD Position.

\printbibliography

\end{document}